\newcommand{\ceil}[1]{\left\lceil #1 \right\rceil}
\newcommand{\floor}[1]{\left\lfloor #1 \right\rfloor}
\newcommand{\D}{\mathcal D}
\def \E {\mathbb{E}}
\newcommand{\M}{{\mathcal M}}
\def \P {\textnormal{Pr}}
\newcommand{\cP}{\mathcal P}
\newcommand{\cS}{{\mathcal S}}
\newcommand{\T}{{\mathcal T}}
\newcommand{\alg}{\textnormal{\sffamily ALG}}
\newcommand{\eps}{\varepsilon}
\newcommand{\poly}{\textnormal{\sffamily poly}}
\newcommand{\pref}{\textnormal{\sffamily prefix}}
\newcommand{\spa}{\textnormal{span}}
\newcommand{\Z}{{\mathbb Z}}
\newcommand{\Q}{{\mathbb Q}}
\newcommand{\R}{{\mathbb R}}
\newtheorem{theorem}{Theorem}[section]
\newtheorem{definition}[theorem]{Definition}
\newtheorem{lemma}[theorem]{Lemma}
\newtheorem{claim}[theorem]{Claim}
\newtheorem{corollary}[theorem]{Corollary}
\newtheorem{fact}[theorem]{Fact}
\newtheorem{example}[theorem]{Example}
\newtheorem{proposition}{Proposition}[section]
\newtheorem{remark}{Remark}[section]
\title{Universal Online Contention Resolution with Preselected Order}
\author{Junyao Zhao\thanks{Supported by a postdoctoral fellowship of the Fondation Sciences Mathématiques de Paris. Part of the work was done while the author was a Ph.D.~student at Stanford University, where he was supported by NSF CCF-1954927.}\\ IRIF, CNRS, Université Paris Cité\\\texttt{junyao-zhao@outlook.com}}
\date{}
\begin{document}

\maketitle

\begin{abstract}
Online contention resolution scheme (OCRS) is a powerful technique for online decision making, which---in the case of matroids---given a matroid and a prior distribution of active elements, selects a subset of active elements that satisfies the matroid constraint in an online fashion. OCRS has been studied mostly for product distributions in the literature. Recently, universal OCRS, that works even for correlated distributions, has gained interest, because it naturally generalizes the classic notion, and its existence in the random-order arrival model turns out to be equivalent to the matroid secretary conjecture. However, currently very little is known about how to design universal OCRSs for any arrival model. In this work, we consider a natural and relatively flexible arrival model, where the OCRS is allowed to preselect (i.e., non-adaptively select) the arrival order of the elements, and within this model, we design simple and optimal universal OCRSs that are computationally efficient. In the course of deriving our OCRSs, we also discover an efficient reduction from universal online contention resolution to the matroid secretary problem for any arrival model, answering a question from~\citet{dughmi2020outer}.
\end{abstract}

\section{Introduction}
A contention resolution scheme (CRS) is an algorithm which given a set system $\M\subseteq2^{[n]}$ and a set of \emph{active} elements $A\subseteq[n]$ sampled from a known prior distribution $\D_A$, selects a subset $X\subseteq A$ that satisfies the feasibility constraint $X\in\M$. The design goal of CRS is to guarantee that every element in $[n]$ is selected with some constant probability $\alpha$ conditioned on it being active. Informally, when such CRS exists for set system $\M$ and prior distribution $\D_A$, we call $\D_A$ an \emph{$\alpha$-uncontentious} distribution for $\M$. CRS was introduced by~\citet{chekuri2014submodular} and has since been studied for various set systems. In this paper, we focus on a most studied set system in the literature---matroid (see Definition~\ref{def:matroid}).

A particular class of CRSs, known as online contention resolution schemes (OCRSs), has found many applications in online decision making, such as prophet inequalities, stochastic probing, and sequential posted-price auctions (e.g.,~\citet{gupta2013stochastic,feldman2021online,adamczyk2018random}). Specifically, an OCRS knows only $\M$ and $\D_A$ but not the set of active elements $A$ at the beginning. Instead, elements in $[n]$ arrive one by one (their order depends on the specific arrival model), and upon the arrival of each element $i\in[n]$, it is revealed whether $i\in A$, and the OCRS must decide immediately and irrevocably whether to include $i$ in its solution set $X$ before the next element arrives.

Many elegant OCRSs with strong guarantees (in various arrival models) have been discovered (e.g.,~\citet{chekuri2014submodular,feldman2021online,adamczyk2018random,lee2018optimal,fu2024samplebased}), but most of them were established exclusively for \emph{product} distribution $\D_A$ (i.e., each element is active independently with some probability), except for~\citet{dughmi2024limitations} and~\citet{gupta2024pairwise}, who studied OCRSs for pairwise independent distribution $\D_A$.

Recently, \emph{universal} OCRSs~\citep{dughmi2020outer}, that work even for general \emph{correlated} distribution $\D_A$, have started to gain interest. Informally, an OCRS is $(\alpha,\beta)$-universal if it guarantees that every element is selected with some constant probability $\beta$ conditioned on it being active, for any matroid $\M$ and (arbitrarily correlated) $\alpha$-uncontentious distribution $\D_A$ for $\M$. Universal OCRSs naturally generalize classic OCRSs for product distributions. Moreover,~\citet{dughmi2020outer,dughmi2022matroid} proved that the existence of universal OCRSs in the random-order arrival model is equivalent to the matroid secretary conjecture posed by~\citet{babaioff2007matroids}.

Currently, very little is known about how to design universal OCRSs, except that~\citet{dughmi2020outer} showed that for any arrival model\footnote{To be precise, Dughmi's result~\citep[Theorem 4.1]{dughmi2020outer} was stated for the random-order arrival model, but the proof of that result applies to any arrival model.}, a universal OCRS exists if there is a constant-competitive matroid secretary algorithm for that model. For the free-order arrival model, where the algorithm is allowed to \emph{adaptively} choose the arrival order of the remaining elements after observing any number of elements,~\citet{jaillet2013advances} designed a constant-competitive matroid secretary algorithm. This algorithm, together with Dughmi's result, implies that there exists a universal OCRS in the free-order model. However, this approach is not known to be computationally efficient, making it difficult to understand the implied universal OCRS for specific problem instances. Indeed, Dughmi's result is information-theoretic, and whether it can be made computationally efficient was left as an open question~\citep[Section 6]{dughmi2020outer}.

In this work, we strive to design universal OCRSs that are efficiently computable and simple to understand. We focus on a natural and relatively flexible arrival model in which the OCRS is allowed to preselect the arrival order of the elements (i.e., \emph{non-adaptively} choose the arrival order of the elements given $\M$ and $\D_A$). We adopt the term \emph{preselected order} to differentiate from the free-order model. The preselected-order model is a step toward the random-order model, and it has been studied for various online decision problems, including prophet inequalities~\citep{hill1983prophet,agrawal2020optimal,liu2021variable,peng2022order,bubna2023prophet}, the multi-choice secretary problem~\citep{hajiaghayi2022optimal}, sequential posted-price auctions~\citep{chawla2010multi,beyhaghi2018improved}, stochastic probing~\citep{gupta2013stochastic} and OCRSs for product distributions~\citep{chekuri2014submodular}. The main contribution of our work is the design and analysis of three different universal OCRSs in the preselected-order model.

\subsection{Overview of our universal OCRSs}
Now we give an overview of our universal OCRSs. All of our universal OCRSs are generalizations of \emph{ordered} OCRSs for product distributions~\citep{chekuri2014submodular,gupta2013stochastic} with necessary randomization. Briefly, an ordered OCRS~\citep[Definition 3.2]{gupta2013stochastic} preselects the arrival order of the elements, and then upon the arrival of each element, it greedily adds the element to the solution set, provided that the element is \emph{selectable} (i.e., if it is active and can be added to the solution set without violating the matroid constraint).

Our first two universal OCRSs (Algorithm~\ref{alg:universal_ocrs_independent_subsampling} and Algorithm~\ref{alg:universal_ocrs_correlated_subsampling}) generalize ordered OCRSs by \emph{subsampling selectable elements}. Specifically, these two OCRSs first preselect the arrival order (based on their respective criteria) and sample a subset of elements $T\subseteq[n]$. Then, upon the arrival of each element, they add the element to the solution set if it is selectable and belongs to $T$. Algorithm~\ref{alg:universal_ocrs_independent_subsampling} and Algorithm~\ref{alg:universal_ocrs_correlated_subsampling} use different subsampling methods of independent interest---Algorithm~\ref{alg:universal_ocrs_independent_subsampling} includes each element in $T$ independently with a certain probability, while Algorithm~\ref{alg:universal_ocrs_correlated_subsampling} employs a correlated subsampling method and achieves a slightly stronger universality guarantee.
\begin{theorem}[Restatement of Theorem~\ref{thm:universal_ocrs_independent_subsampling} and Theorem~\ref{thm:universal_ocrs_correlated_subsampling}]
For all $\alpha\in[0,1]$, Algorithm~\ref{alg:universal_ocrs_independent_subsampling} is an $(\alpha,\frac{\alpha^2}{4})$-universal OCRS, and Algorithm~\ref{alg:universal_ocrs_correlated_subsampling}
is an $(\alpha,\frac{\alpha^2}{2})$-universal OCRS. Both algorithms preselect the arrival order of the elements before any element arrives.
\end{theorem}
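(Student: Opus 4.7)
The plan is to fix an element $i\in[n]$ and lower bound $\P[i\in X\mid i\in A]$ for the output $X$ of each algorithm. Both algorithms preselect an arrival order, sample a subset $T\subseteq[n]$, and then greedily add every arriving element that is active, lies in $T$, and remains independent with the solution so far. By the standard matroid greedy lemma, the set $X_{<i}$ selected before $i$'s arrival is a base of $\spa(A_{<i}\cap T)$, where $A_{<i}$ denotes active elements arriving strictly before $i$; hence $i$ is selectable upon arrival if and only if $i\notin\spa(A_{<i}\cap T)$, and
\[
\P[i\in X\mid i\in A] \;=\; \P[i\in T\mid i\in A]\cdot\P\bigl[i\notin\spa(A_{<i}\cap T)\,\big|\,i\in A,\,i\in T\bigr].
\]

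To handle the first factor, the subsampling is chosen so that $\P[i\in T\mid i\in A]$ is lower bounded by an explicit function of $\alpha$: for Algorithm~\ref{alg:universal_ocrs_independent_subsampling} this is achieved by including each element in $T$ independently (and independently of $A$) with the appropriate rate, and for Algorithm~\ref{alg:universal_ocrs_correlated_subsampling} by a correlated sampling process calibrated to the same marginal target. For the second factor, I would appeal to the $\alpha$-uncontentiousness of $\D_A$ to fix an offline CRS producing $X^{*}\subseteq A$ with $X^{*}\in\M$ almost surely and $\P[i\in X^{*}\mid i\in A]\ge\alpha$. The crucial structural observation is that the marginal vector $\mu_j := \P[j\in X^{*}]$ lies in the matroid polytope $\cP(\M)$ (since $X^{*}$ is a distribution over independent sets); this plays the role, for correlated $\D_A$, of the fractional hypothesis $x\in\cP(\M)$ that powers ordered-OCRS analyses for product distributions. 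A matroid-polytope union bound then shows that, conditional on the event $i\in X^{*}$ (probability $\ge\alpha$) and on the randomness of $T$ and the arrival position, the prefix $A_{<i}\cap T$ fails to contain a circuit through $i$ with probability at least $1/2$ for Algorithm~\ref{alg:universal_ocrs_independent_subsampling} and at least $1$ for Algorithm~\ref{alg:universal_ocrs_correlated_subsampling}, giving the claimed selectability bounds of $\alpha/2$ and $\alpha$ respectively.

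The main obstacle is the selectability step: correlation in $\D_A$ precludes any direct argument from independent active marginals, so I must couple the online greedy trajectory to the offline reference $X^{*}$. The role of $T$ is precisely to sparsify $A_{<i}$ enough that the only independent-set extensions through $i$ that matter are those already certified by the witness $X^{*}\setminus\{i\}$; independent sampling in Algorithm~\ref{alg:universal_ocrs_independent_subsampling} loses an extra factor of two at this step, whereas the correlated sampling in Algorithm~\ref{alg:universal_ocrs_correlated_subsampling} realizes a tighter coupling with $X^{*}$ that eliminates this loss---this is precisely where the $\alpha^2/4$ versus $\alpha^2/2$ gap comes from. Multiplying the two factors then yields $(\alpha,\alpha^2/4)$- and $(\alpha,\alpha^2/2)$-universality respectively, and the preselected-order conclusion is immediate since both algorithms fix the order and the subsample $T$ as preprocessing before any element arrives.
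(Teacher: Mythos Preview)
Your plan has a genuine gap at the two places that carry all the weight: how the arrival order is chosen, and why the ``not spanned'' probability is large. You treat the preselected order as a black box, but in both algorithms the order is the entire point. The paper proves a structural lemma (Lemma~\ref{lem:uncontentious_subsampled_elements} for Algorithm~\ref{alg:universal_ocrs_independent_subsampling}, Lemma~\ref{lem:uncontentious_preceding_elements} for Algorithm~\ref{alg:universal_ocrs_correlated_subsampling}) asserting that for any $\alpha$-uncontentious $\D_A$ there \emph{exists} an element $i$ with $\Pr[i\notin\spa_{\M}(\text{subsample of }A)\mid i\in A]\ge\alpha-\rho$ (resp.\ $\ge\alpha$). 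The proof is an averaging argument: decompose $\E[r_{\M}(A)]$ and combine with $\E[r_{\M}(A)]\ge\alpha\,\E[|A|]$ (Lemma~\ref{lem:rank_of_uncontentious_active_set}), which is the \emph{only} place $\alpha$-uncontentiousness is used. That element is placed last, one restricts to the remaining elements (still $\alpha$-uncontentious by Lemma~\ref{lem:uncontentious_marginal}), and iterates. There is no coupling to an offline witness $X^*$ anywhere; in particular your claim that, conditional on $i\in X^*$, the prefix $A_{<i}\cap T$ fails to span $i$ with probability $1$ (for Algorithm~\ref{alg:universal_ocrs_correlated_subsampling}) is false. The set $T$ is sampled independently of $A$, hence of $X^*$, so $i\in X^*$ constrains nothing about $A_{<i}\cap T$. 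For the $1$-uniform matroid on $[n]$ with $\D_A$ the point mass on $[n]$ (which is $\tfrac1n$-uncontentious), $i\in X^*$ just says $X^*=\{i\}$, yet $A_{<i}\cap T$ spans $i$ whenever it is nonempty.

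Your product decomposition also breaks for Algorithm~\ref{alg:universal_ocrs_correlated_subsampling}. There $T=\pref(\sigma',n+1)$ with $\sigma'\sim\cP([n+1])$, so $\Pr[\pi(i)\in T]=\tfrac12$ for every $i$ (not $\alpha/2$), and the events $\{\pi(i)\in T\}$ and $\{\pi(i)\notin\spa_{\M}(A\cap T_{i-1})\}$ are correlated through $T$; the paper explicitly flags this as the main technical difficulty. It is handled by conditioning on $|T_{i-1}|=k$, using $\Pr[\pi(i)\in T\mid T_{i-1}=S]=\tfrac{|S|+1}{i+1}$ (Lemma~\ref{lem:correlated_sampling_preceding_elements}), and then applying the fractional-knapsack inequality (Fact~\ref{fact:fractional_knapsack}) to turn $\sum_{k}\tfrac1i P_{i,k}\ge\alpha$ into $\sum_{k}\tfrac{k+1}{i(i+1)}P_{i,k}\ge\tfrac{\alpha^2}{2}$. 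Your plan has no analogue of this step, and the ``matroid-polytope union bound'' you invoke does not supply one.
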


Instead of subsampling selectable elements, our third universal OCRS (presented in Section~\ref{sec:ocrs_lp}) \emph{samples the arrival order} of the elements. Specifically, given matroid $\M$ and prior distribution $\D_A$, this OCRS first computes a distribution over permutations by solving a linear program (LP), which is similar to the LP used by~\citet{chekuri2014submodular} to compute offline CRSs for product distributions. Then, it samples the arrival order from this distribution, and upon the arrival of each element, it includes the element in the solution set if the element is selectable. The universality guarantee of this OCRS is nearly optimal\footnote{We note that the $\eps$ loss is only due to computation---we will show that $(\alpha,\alpha)$-universal OCRSs exist.}.
\begin{theorem}[Informal restatement of Theorem~\ref{thm:universal_ocrs_linear_programming}]
For any $\eps>0$, there exists a computationally efficient OCRS with preselected order, which is $(\alpha,(1-\eps)\alpha)$-universal for all $\alpha\in[0,1]$.
\end{theorem}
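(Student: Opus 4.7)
The plan is to cast the design of the ordering distribution $\mathcal{Q}$ as a linear program, prove $\beta^{\ast} \geq \alpha$ via LP duality (which also yields the information-theoretic $(\alpha,\alpha)$-universality asserted in the footnote), and then solve the dual approximately via its tractable separation oracle, incurring only a $(1-\eps)$ factor. Concretely, for each $\pi \in S_n$ introduce $x_\pi \geq 0$ with $\sum_\pi x_\pi = 1$, and for each $i \in [n]$ impose
\[
\sum_\pi x_\pi \sum_{A \ni i} \D_A(A) \, \mathbf{1}[i \in \mathrm{Greedy}(\pi, A)] \;\geq\; \beta \cdot \Pr[i \in A],
\]
which says exactly that the OCRS that samples $\pi \sim \mathcal{Q}$ and then greedily accepts each selectable active element has conditional selection probability at least $\beta$ on every element; then maximize $\beta$. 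This mirrors the offline-CRS LP of~\citet{chekuri2014submodular}, with orderings in place of feasible subsets.

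For the bound $\beta^{\ast} \geq \alpha$, I pass to the dual. Its variables are nonnegative weights $y_i$ normalized by $\sum_i y_i \Pr[i \in A] = 1$, and its objective is $\min_y \max_\pi \E_{A}\!\left[ y(\mathrm{Greedy}(\pi, A)) \right]$, where $y(S) := \sum_{i \in S} y_i$. The matroid-specific key observation is that for any $y \geq 0$, the decreasing-$y$-weight ordering $\pi^y$ \emph{simultaneously} maximizes $y(\mathrm{Greedy}(\pi, A))$ over all realizations of $A$: running Greedy on $(\pi^y, A)$ is exactly the classical matroid-greedy algorithm on $A$ with weights $y$, whose output is the maximum-$y$-weight independent subset $B^{\ast}(A)$ of $A$. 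Consequently the inner max and outer expectation commute, giving
\[
\max_\pi \E_A\!\left[ y(\mathrm{Greedy}(\pi, A)) \right] \;=\; \E_A\!\left[ y(B^{\ast}(A)) \right].
\]
By $\alpha$-uncontentiousness there exists an offline CRS $\phi$ with $\Pr[i \in \phi(A) \mid i \in A] \geq \alpha$, whence
\[
\E_A\!\left[y(\phi(A))\right] \;=\; \sum_i y_i \Pr[i \in \phi(A)] \;\geq\; \alpha \sum_i y_i \Pr[i \in A] \;=\; \alpha,
\]
and since $\phi(A) \subseteq A$ is independent, $y(\phi(A)) \leq y(B^{\ast}(A))$ pointwise, so the dual value is at least $\alpha$ and an $(\alpha,\alpha)$-universal preselected-order OCRS exists information-theoretically.

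For the computational statement, note that although the primal LP has $n!$ variables it has only $n$ nontrivial constraints, so its dual has $n+1$ variables and the separation oracle above is tractable: given $(y,\lambda)$, compute $\pi^y$ in $O(n \log n)$ time and estimate $\E_A[y(B^{\ast}(A))]$ by drawing $\mathrm{poly}(n/\eps)$ independent samples $A \sim \D_A$, evaluating $y(B^{\ast}(A))$ per sample with the matroid independence oracle. Hoeffding plus a union bound over the $n$ element constraints shows the resulting oracle is $(\eps/n)$-accurate with high probability, and feeding it into the ellipsoid method---or, more transparently, into a multiplicative-weights procedure whose $t$-th iterate is the decreasing-weight permutation $\pi^{y^{(t)}}$ with $\mathcal{Q}$ taken to be the uniform mixture over $t$---returns a distribution $\mathcal{Q}$ supported on polynomially many permutations with $\beta \geq (1-\eps)\alpha$ in polynomial time. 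I expect the main obstacle to be quantitatively propagating the sampling error through the duality so that the multiplicative $(1-\eps)$ guarantee on $\beta$ is preserved uniformly over the $n$ constraints; the duality step itself falls out cleanly from the matroid-greedy characterization of $B^{\ast}(A)$.
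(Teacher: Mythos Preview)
Your proposal is correct and follows essentially the same route as the paper: the same LP over distributions of permutations, the same dual lower bound via the matroid-greedy characterization (the paper's Lemma~\ref{lem:phi_pi} and Lemma~\ref{lem:greedy_crs_lp}), the same separation oracle (sort elements by the dual weights and run greedy), and the same ellipsoid-plus-Monte-Carlo implementation. The one detail you underspecify is that the sample complexity must also scale with $1/p_{\min}$ so that the \emph{relative} error in each per-element constraint $\sum_\pi q_{i,\pi}\lambda_\pi \ge \beta x_i$ is controlled, which the paper handles explicitly in Section~\ref{sec:approximate_lp_solving}.
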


For comparison, our first two OCRSs have weaker universality guarantees, but they are easier to reason about for specific problem instances. In contrast, our third OCRS provides the strongest universality guarantee, though it is less intuitive because of the use of the LP.

\subsection{Universal OCRSs from secretary algorithms}
In the course of deriving our third universal OCRS, we discovered an LP-based \emph{efficient reduction} from universal online contention resolution to the matroid secretary problem for any arrival model (see Section~\ref{sec:secretary_to_ocrs} for the setup of the matroid secretary problem), thereby answering the aforementioned question from~\citet{dughmi2020outer}.
\begin{theorem}[Informal restatement of Theorem~\ref{thm:secretary_to_ocrs}]
For any $c,\eps>0$, for any arrival model, if there is a computationally efficient $c$-competitive matroid secretary algorithm, then there exists a computationally efficient OCRS that is $(\alpha,(1-\eps)c\cdot\alpha)$-universal for all $\alpha\in[0,1]$.
\end{theorem}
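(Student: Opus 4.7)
The plan is to design the OCRS as a randomized mixture over runs of the matroid secretary algorithm on carefully chosen nonnegative weight profiles, and to establish its per-element guarantee via an LP/minimax argument that converts the secretary's total-weight guarantee into a per-element probability guarantee, leveraging the $\alpha$-uncontentiousness of $\D_A$.

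For any nonnegative weight vector $v \in \R_{\geq 0}^n$, let $\alg(v)$ denote the output of the given $c$-competitive matroid secretary algorithm in the specified arrival model, when fed, upon the arrival of each element $e$, the weight $v_e \cdot \mathbf{1}[e \in A]$, so that only active elements carry positive weight. The OCRS would sample $v \sim \mu$ from a distribution $\mu$ (computed offline from $\M$ and $\D_A$) before any element arrives, and output $\alg(v) \cap A$, which is an independent subset of $A$. Finding $\mu$ reduces to solving the LP with variables $\mu(v)$ that maximizes $\beta$ subject to $\E_{v \sim \mu}[\Pr[e \in \alg(v) \mid e \in A]] \geq \beta$ for every element $e$.

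To prove LP feasibility with $\beta = c\alpha$, I would invoke minimax: it suffices to exhibit, for every probability distribution $\tilde v$ on the ground set, a single weight vector $v$ with $\sum_e \tilde v_e \Pr[e \in \alg(v) \mid e \in A] \geq c\alpha$. Taking $v_e = \tilde v_e / \Pr[e \in A]$ (w.l.o.g.\ all marginals positive) and applying the $c$-competitive guarantee to these weights yields $\sum_e v_e \Pr[e \in A,\, e \in \alg(v)] \geq c \cdot \E\bigl[\max_{I \in \M} \sum_{e \in I \cap A} v_e\bigr]$. The $\alpha$-uncontentiousness provides a random independent set $F(A) \subseteq A$ with $\Pr[e \in F(A) \mid e \in A] \geq \alpha$, so the right-hand side is at least $c\alpha \sum_e v_e \Pr[e \in A] = c\alpha$, while the left-hand side rearranges exactly to $\sum_e \tilde v_e \Pr[e \in \alg(v) \mid e \in A]$.

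For computational efficiency I would solve the LP to within $\eps$ via the ellipsoid method on its dual (which has only $n$ variables) or the multiplicative-weights meta-algorithm, using the secretary algorithm as a best-response oracle on the weight vector built from the current dual variables $\tilde v$, and a polynomial number of Monte Carlo simulations of $\alg$ together with samples from $\D_A$ to estimate both the marginals $\Pr[e \in A]$ and the conditional selection probabilities. The main obstacle is step two---converting the secretary's expected-total-weight guarantee into simultaneous per-element probability bounds---which the minimax view combined with the uncontentiousness witness accomplishes cleanly; the $(1-\eps)$ loss arises only from the approximate LP solution.
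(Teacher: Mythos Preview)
Your proposal is correct and follows essentially the same approach as the paper. Both the paper and you construct the OCRS as a distribution over runs of the secretary algorithm on nonnegative weight vectors, formulate an LP over these distributions, and certify the optimal dual value via the secretary guarantee together with $\alpha$-uncontentiousness (your choice $v_e=\tilde v_e/\Pr[e\in A]$ is, after reparametrization, exactly the paper's choice of using the dual vector $\mu$ itself as the weight vector); both then solve the dual by ellipsoid with the secretary as the separation oracle and use Monte-Carlo sampling for the unknown coefficients. The only nontrivial detail you gloss over is that the primal has uncountably many variables (one per $v\in\R_{\ge0}^n$), which the paper handles by discretizing the weight space to a finite grid $W_\eps^n$ of polynomial size (losing only a $(1-\eps)$ factor) and then applying the constraint-reduction trick of \citet{lee2018optimal} to extract a polynomial-size support; this is the one concrete step you would need to add to turn your plan into a complete efficient algorithm.
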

Briefly, Dughmi's information-theoretic reduction~\citep[Theorem 4.1]{dughmi2020outer} was based on the separating hyperplane theorem. Our reduction replaces that with an LP duality argument, and then applies a technique from~\citet{lee2018optimal} to solve the corresponding LPs efficiently.

\section{Preliminaries}\label{sec:preliminary}
\subsection{Matroids}
We start by introducing essential definitions and properties of \emph{matroids}, and we refer interested readers to~\citet{welsh2010matroid} for a comprehensive treatment of matroid theory. Essentially, a matroid is a set system with some independence structure, which is defined as follows.
\begin{definition}[matroid]\label{def:matroid}
A set system $\M\subseteq 2^{[n]}$ is a matroid if it satisfies the following properties:
\begin{enumerate}[i.]
    \item $\emptyset\in\M$.
    \item If $X\in\M$, then $Y\in\M$ for all $Y\subseteq X$.
    \item If $X,Y\in\M$ and $|Y|<|X|$, then there exists $i\in X\setminus Y$ such that $Y\cup\{i\}\in\M$.
\end{enumerate}
\end{definition}
Given a matroid, we can associate a \emph{(weighted) rank function} with it.
\begin{definition}[rank]
Given a matroid $\M\subseteq 2^{[n]}$, the rank function $r_{\M}:2^{[n]}\to \Z_{\ge 0}$ associated with $\M$ is $r_{\M}(X):=\max_{Y\subseteq X} |Y| \textnormal{ s.t.~$Y\in\M$}$. For convenience, for any sets $X,Y\subseteq[n]$, we denote $r_{\M}(X\mid Y):=r_{\M}(X\cup Y)-r_{\M}(Y)$.

More generally, for any weight vector $w\in\R_{\ge0}^n$, the weighted rank function $r_{\M,w}:2^{[n]}\to \R_{\ge 0}$ is $r_{\M,w}(X):=\max_{Y\subseteq X}\sum_{i\in Y}w_i \textnormal{ s.t.~$Y\in\M$}$.
\end{definition}
Moreover, we can define notions of \emph{span} and \emph{basis} for a matroid.
\begin{definition}[span]
Given a matroid $\M\subseteq 2^{[n]}$ and its rank function $r_{\M}:2^{[n]}\to \Z_{\ge 0}$, we say that an element $i\in[n]$ is spanned by a set $X\subseteq [n]$ if $r_{\M}(X\cup\{i\})=r_{\M}(X)$, and we define the span of $X$ as $\spa_{\M}(X):=\{i\in[n]\mid i\textrm{ is spanned by $X$}\}$.
\end{definition}
\begin{definition}[basis]
Given a matroid $\M\subseteq 2^{[n]}$ and a set $X\subseteq [n]$, we say that a subset $Y\subseteq X$ is a basis of $X$ if
$Y\in\M \textrm{ and for all } i\in X\setminus Y,\,Y\cup\{i\}\notin\M$.
\end{definition}
Furthermore, we define the \emph{restriction} of a matroid to a set of elements.
\begin{definition}[restriction]
Given a matroid $\M\subseteq 2^{[n]}$ and a set $X\subseteq [n]$, we define the restriction of $\M$ to $X$ as $\M_X:=\{Y\subseteq X\mid Y\in \M\}$.
\end{definition}
In Section~\ref{sec:lemmata}, we state several well-known properties of matroids (Lemma~\ref{lem:matroid_properties}).

\subsection{Contention resolution schemes}\label{sec:crs_preliminary}
Now we introduce \emph{contention resolution schemes} (CRSs) for matroids. Given a matroid $\M\subseteq 2^{[n]}$ and a random set of \emph{active} elements $A\subseteq [n]$ sampled from a prior distribution\footnote{Throughout the paper, we assume w.l.o.g.~that $\D_A$ satisfies that $\forall i\in[n],\,{\Pr}_{A\sim \D_A}[i\in A]>0$. This assumption will simplify the presentation, as it ensures that the probabilities conditioned on the event $i\in A$ are well-defined.} $\D_A\in\Delta(2^{[n]})$, a CRS selects a subset of active elements $X\subseteq A$ such that $X\in\M$. Formally, we first let $\Phi_{\M}$ denote the family of maps that take an active set $A\subseteq [n]$ as input and output a subset $X\subseteq A$ such that $X\in\M$, i.e.,
\[
\Phi_{\M}:=\{\phi:2^{[n]}\to\M \mid \textrm{for all $A\subseteq [n]$, $\phi(A)\subseteq A$}\}.
\]
Then, a CRS for $(\M,\D_A)$ is a random map sampled from a distribution $\D_{\phi}\in\Delta(\Phi_{\M})$ (since the choice of $\D_{\phi}$ fully determines the CRS, we also use the notation $\D_{\phi}$ to refer to the CRS itself). Moreover, for $\alpha\in[0,1]$, we say that a CRS $\D_{\phi}$ for $(\M,\D_A)$ is {\em $\alpha$-balanced} if it satisfies
\[
    {\Pr}_{A\sim \D_A,\,\phi\sim \D_{\phi}}[i\in \phi(A)\mid i\in A]\ge \alpha \textrm{ for all $i\in[n]$ s.t. }{\Pr}_{A\sim \D_A}[i\in A]>0,
\]
and we say that $\D_A$ is {\em $\alpha$-uncontentious} for $\M$ if there exists an $\alpha$-balanced CRS for $(\M,\D_A)$.

Furthermore, we use the notation $\D_A^{S}$ to denote marginal distributions of $\D_A$. That is, for any $S\subseteq [n]$, $\D_A^{S}$ is defined as follows: $\forall Z\subseteq S,\,\Pr_{A'\sim\D_A^{S}}[A'=Z]:=\Pr_{A\sim\D_A}[A\cap S=Z]$. We note that if $\D_A$ is $\alpha$-uncontentious for matroid $\M$, then $\D_A^{S}$ is $\alpha$-uncontentious for the restriction $\M_S$.
\begin{lemma}\label{lem:uncontentious_marginal}
Given any matroid $\M\subseteq 2^{[n]}$ and any $\alpha$-uncontentious distribution $\D_A$ for $\M$, for any $S\subseteq [n]$, the marginal distribution $\D_A^{S}$ is $\alpha$-uncontentious for the restriction $\M_S$.
\end{lemma}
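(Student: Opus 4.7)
The plan is to lift an $\alpha$-balanced CRS $\D_\phi$ for $(\M,\D_A)$ to an $\alpha$-balanced CRS $\D_{\phi'}$ for $(\M_S,\D_A^S)$ by ``completing'' an active subset of $S$ to an active subset of $[n]$, applying the given CRS, and then intersecting back with $S$. More precisely, I would describe the random map $\phi':2^S\to\M_S$ as follows: sample $\phi\sim\D_\phi$, and independently for each $A'\subseteq S$, draw a ``completion'' $\tilde A_{A'}\sim\D_A$ conditioned on $\tilde A_{A'}\cap S = A'$, and define $\phi'(A'):=\phi(\tilde A_{A'})\cap S$. The joint distribution of $\phi\sim\D_\phi$ and of the completions $\{\tilde A_{A'}\}_{A'\subseteq S}$ induces the desired distribution $\D_{\phi'}$ over $\Phi_{\M_S}$.

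Next I would verify feasibility: $\phi(\tilde A_{A'})\in\M$, so by downward closure of $\M$ (property~(ii) of Definition~\ref{def:matroid}) we have $\phi(\tilde A_{A'})\cap S\in\M$; since this set also lies in $S$, it belongs to $\M_S$. Moreover $\phi(\tilde A_{A'})\subseteq \tilde A_{A'}$, so $\phi'(A')=\phi(\tilde A_{A'})\cap S\subseteq \tilde A_{A'}\cap S=A'$, confirming that $\phi'\in\Phi_{\M_S}$.

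Finally I would check $\alpha$-balancedness. Fix $i\in S$; by the standing assumption in Section~\ref{sec:crs_preliminary}, $\Pr_{A\sim\D_A}[i\in A]>0$, hence $\Pr_{A'\sim\D_A^S}[i\in A']=\Pr_{A\sim\D_A}[i\in A\cap S]=\Pr_{A\sim\D_A}[i\in A]>0$. The key observation is that in the coupled experiment above, the pair $(A',\mathbf{1}[i\in\phi'(A')])$ has the same joint distribution as $(A\cap S,\mathbf{1}[i\in\phi(A)\cap S])$ where $A\sim\D_A$ and $\phi\sim\D_\phi$, because $A'$ is distributed as $A\cap S$ and the completion step reconstructs the correct conditional law. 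Hence, using $i\in S$,
\[
\Pr_{A'\sim\D_A^S,\,\phi'\sim\D_{\phi'}}[i\in\phi'(A')\mid i\in A']
\;=\;\Pr_{A\sim\D_A,\,\phi\sim\D_\phi}[i\in\phi(A)\mid i\in A]\;\ge\;\alpha,
\]
which gives the lemma.

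There is no genuine obstacle here: the content is purely bookkeeping, and the only point deserving care is to ensure that $\D_{\phi'}$ is genuinely a distribution over deterministic maps in $\Phi_{\M_S}$, which is handled by folding the completion randomness into the definition of $\D_{\phi'}$ as described above.
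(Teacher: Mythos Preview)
Your proposal is correct and follows essentially the same approach as the paper: construct the new CRS by completing an input $A'\subseteq S$ to a sample of $\D_A$ conditioned on $A\cap S=A'$, apply the original $\alpha$-balanced CRS, and intersect back with $S$. The paper's proof spells out the chain of equalities leading to $\Pr[i\in\phi'(A')\mid i\in A']=\Pr[i\in\phi(A)\mid i\in A]$ explicitly, whereas you compress this into a coupling observation; the content is identical.
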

A self-contained proof of Lemma~\ref{lem:uncontentious_marginal} is provided in Section~\ref{sec:supplementary_proofs}.
\subsubsection*{Online contention resolution schemes with preselected order}
In this paper, we mostly focus on \emph{online contention resolution schemes} (OCRSs) that first \emph{preselect} the arrival order of the elements and then select a subset of active elements in an \emph{online} fashion. Specifically, an OCRS with preselected order is an algorithm $\alg$ that works in three stages:
\begin{enumerate}[(1)]
    \item\label{step:select_order} Given oracle access\footnote{We assume that matroid $\M$ is given by a membership oracle that answers whether $S\in\M$ for any input set $S\subseteq[n]$, and prior distribution $\D_A$ is given by an oracle that outputs a fresh sample of $\D_A$ upon each query.} to a matroid $\M\subseteq 2^{[n]}$ and a prior distribution $\D_A\in\Delta(2^{[n]})$, $\alg$ first selects a (random) permutation $\pi:[n]\to[n]$ and initializes an empty solution set $X_{\alg}$.
    \item A random set of active elements $A\subseteq [n]$ is sampled from $\D_A$ and is unknown to $\alg$.
    \item Then, $\alg$ runs in $n$ steps. At each step $i\in[n]$, it is revealed to $\alg$ whether $\pi(i)\in A$. If $\pi(i)$ is \emph{selectable} (i.e., $\pi(i)\in A$ and $X_{\alg}\cup\{\pi(i)\}\in\M$), $\alg$ must decide immediately and irrevocably whether to add $\pi(i)$ to $X_{\alg}$, and it can use randomness to make this decision.
\end{enumerate}

Similar to general CRSs, for $\beta\in[0,1]$, we say that $\alg$ is \emph{$\beta$-balanced} for $(\M,\D_A)$ if it satisfies
\[
    {\Pr}_{A\sim \D_A,\textrm{ randomness of $\alg$}}[i\in X_{\alg}\mid i\in A]\ge \beta \textrm{ for all $i\in[n]$ s.t. }{\Pr}_{A\sim \D_A}[i\in A]>0.
\]
Moreover, for $0\le\beta\le\alpha\le1$, we say that $\alg$ is \emph{$(\alpha,\beta)$-universal} if for any matroid $\M\subseteq 2^{[n]}$ and any $\alpha$-uncontentious distribution $\D_A\in\Delta(2^{[n]})$ for $\M$, $\alg$ is $\beta$-balanced for $(\M,\D_A)$. Intuitively, universality means that $\alg$ is approximately as balanced as any CRS for any instance $(\M,\D_A)$.

Furthermore, we say $\alg$ is \emph{computationally efficient} if it runs in $O\left(\poly\left(\frac{n}{p_{\min}}\right)\cdot(t_{\D_A}+t_{\M})\right)$ time, where $p_{\min}:=\min_{i\in[n]}\Pr_{A\sim\D_A}[i\in A]$, and $t_{\D_A},\,t_{\M}$ are the time it takes to generate a sample from $\D_A$ and to check whether a set of elements belongs to $\M$ respectively. We note that the runtime must depend on $\frac{1}{p_{\min}}$, if $\alg$ is $(\alpha,\beta)$-universal for arbitrary constants $\alpha,\beta\in(0,1)$.
\begin{example}\label{ex:p_min}
We consider the 1-uniform matroid $\M=\{\emptyset,\{1\},\{2\},\dots,\{n\}\}$. For any constant $\alpha\in(0,1)$ and any $\delta\in\left(0,\frac{1}{n+1/\alpha-2}\right]$, we consider a family of prior distributions $\{\D_A^{(j)}\mid j\in[n]\}$, where each distribution $\D_A^{(j)}$ is defined as follows:
\begin{align*}
    &\Pr[A=\emptyset]=1-\delta\cdot\left(n+\frac{1}{\alpha}-2\right),\quad\Pr[A=[n]]=\delta\cdot\left(\frac{1}{\alpha}-1\right),\quad\Pr[A=\{i\}]=\delta \textrm{ for all } i\neq j.
\end{align*}
Note that $p_{\min}=\Pr[j\in A]=\delta\cdot\left(\frac{1}{\alpha}-1\right)$ for each prior distribution $\D_A^{(j)}$. Moreover, every prior distribution $\D_A^{(j)}$ is $\alpha$-uncontentious for matroid $\M$, because the CRS, that selects element $j$ if $A=[n]$ and selects element $i$ if $A=\{i\}$ for any $i\neq j$, is $\alpha$-balanced for $(\M,\D_A^{(j)})$.

However, given instance $(\M,\D_A^{(j)})$ for an unknown $j$ chosen uniformly at random from $[n]$, with only sample access to $\D_A^{(j)}$, if an algorithm $\alg$ only draws $o(\frac{1}{p_{\min}})=o(\frac{1}{\delta})$ samples, then with high probability, it cannot distinguish element $j$ from most other elements. As a result, if the input set of active elements $A$ is $[n]$, which is the only case where $j\in A$, $\alg$ will not be able to select element $j$ with constant probability. We prove this formally in Proposition~\ref{prop:p_min}.
\end{example}

\subsection{Other useful notions and lemmata}
We state two standard concentration inequalities (Lemma~\ref{lem:concentration}) in Section~\ref{sec:lemmata}. Finally, we define a subsampling operator as follows.
\begin{definition}[subsampling operator $\T_{\rho}$]
For any $\rho\in[0,1]$, the random operator $\T_{\rho}$ takes any set $X\subseteq [n]$ as input and outputs a random subset $\T_{\rho}(X)$ of $X$ such that each element in $X$ appears in $\T_{\rho}(X)$ independently with probability $\rho$.
\end{definition}

\section{A simple universal OCRS via independent subsampling}\label{sec:ocrs_independent_subsampling}
In this section, we design a simple universal OCRS with preselected order by subsampling selectable elements independently. Although this subsampling-based OCRS has a slightly weaker universality guarantee than the one in Section~\ref{sec:ocrs_correlated_subsampling}, its analysis is simpler and of independent interest.

\subsection{Main structural lemma}
Our universal OCRS is based on the following structural lemma, which says that given a matroid $\M$ and an uncontentious distribution $\D_A$ for $\M$, there exists an element $i$, such that if we sample a set of active elements $A$ from $\D_A$ and then independently remove each element in $A$ with some constant probability, there is a decent chance that the remaining elements do not span element $i$ conditioned on $i$ being active. This type of lemma was also central to the previous ordered/greedy OCRSs~\citep{chekuri2014submodular,feldman2021online} for product distributions.

\begin{lemma}\label{lem:uncontentious_subsampled_elements}
For any $\alpha\in [0,1]$ and $\rho\in [0,\alpha]$, given any matroid $\M\subseteq 2^{[n]}$ and any $\alpha$-uncontentious distribution $\D_A$ for $\M$, there exists $i\in [n]$ such that
\[
    {\Pr}_{A\sim \D_A}[i\notin \spa_{\M}(\T_{\rho}(A))\mid i\in A]\ge\alpha-\rho.
\]
\end{lemma}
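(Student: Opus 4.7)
The plan is to combine a classical matroid inequality with a double-counting / averaging argument, bypassing any direct reasoning about the CRS beyond its balance property.

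First I would record the matroid identity that for any $T\subseteq A\subseteq [n]$,
\[
r_{\M}(A)-r_{\M}(T)\le|A\setminus\spa_{\M}(T)|.
\]
This follows by passing to the contraction $\M/T$: every element of $A\setminus T$ lying in $\spa_{\M}(T)$ becomes a loop in $\M/T$, so $r_{\M}(A)-r_{\M}(T)=r_{\M/T}(A\setminus T)\le|(A\setminus T)\setminus\spa_{\M}(T)|=|A\setminus\spa_{\M}(T)|$, using $T\subseteq\spa_{\M}(T)$. I expect this to be one of the standard facts collected in \Cref{lem:matroid_properties}.

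Next I would apply the inequality with $T=\T_{\rho}(A)$ and take expectations over $A\sim\D_A$ together with the independent coins of $\T_{\rho}$ to get
\[
\E\bigl[|A\setminus\spa_{\M}(\T_{\rho}(A))|\bigr]\;\ge\;\E[r_{\M}(A)]-\E[r_{\M}(\T_{\rho}(A))].
\]
For the first term I would invoke $\alpha$-uncontention: fix any $\alpha$-balanced CRS $\D_{\phi}$ for $(\M,\D_A)$, so that $\phi(A)\in\M$ and $\Pr[i\in\phi(A)]\ge\alpha\Pr[i\in A]$ for every $i$. Summing over $i$ gives $\E[r_{\M}(A)]\ge\E[|\phi(A)|]\ge\alpha\,\E[|A|]$. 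For the second term I would use the trivial bound $r_{\M}(\T_{\rho}(A))\le|\T_{\rho}(A)|$ together with $\E[|\T_{\rho}(A)|]=\rho\,\E[|A|]$.

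Combining yields $\E[|A\setminus\spa_{\M}(\T_{\rho}(A))|]\ge(\alpha-\rho)\E[|A|]$. Rewriting both sides coordinatewise,
\[
\sum_{i=1}^{n}\Pr[i\in A]\cdot\Pr\bigl[i\notin\spa_{\M}(\T_{\rho}(A))\,\big|\,i\in A\bigr]\;\ge\;(\alpha-\rho)\sum_{i=1}^{n}\Pr[i\in A].
\]
Since $\Pr[i\in A]>0$ for every $i$ by the standing assumption in \Cref{sec:crs_preliminary}, a weighted-average argument produces at least one $i\in[n]$ with $\Pr[i\notin\spa_{\M}(\T_{\rho}(A))\mid i\in A]\ge\alpha-\rho$, which is exactly the conclusion of the lemma.

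The only mildly nontrivial ingredient is the matroid inequality; everything else is routine bookkeeping. I anticipate no substantive obstacle — the proof essentially mirrors the greedy/ordered-OCRS template of \citet{chekuri2014submodular} and \citet{feldman2021online} for product distributions, with the rank bound $\E[r_{\M}(A)]\ge\alpha\E[|A|]$ supplied by the CRS's balance property playing the role of the usual marginal-vector bound $x\in\alpha P_{\M}$.
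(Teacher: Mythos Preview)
Your proposal is correct and follows essentially the same approach as the paper's proof. The only cosmetic difference is that you phrase the key matroid inequality via contraction as $r_{\M}(A)-r_{\M}(T)\le|A\setminus\spa_{\M}(T)|$, whereas the paper writes the equivalent bound $r_{\M}(A\mid\T_{\rho}(A))\le\sum_{i\in A}r_{\M}(\{i\}\mid\T_{\rho}(A))$ using marginal subadditivity (Lemma~\ref{lem:matroid_properties}-\ref{fact:rank_marginal_subadditive}); the remaining steps---the rank lower bound $\E[r_{\M}(A)]\ge\alpha\,\E[|A|]$ from the balanced CRS (the paper's Lemma~\ref{lem:rank_of_uncontentious_active_set}), the trivial bound $r_{\M}(\T_{\rho}(A))\le|\T_{\rho}(A)|$, and the final averaging---are identical.
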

We note that using subsampling is necessary for the above lemma, in the sense that even if $A$ is sampled from an $\alpha$-uncontentious distribution $\D_A$ for some strictly positive constant $\alpha$, it is still possible that every element $i$ is always spanned by $A\setminus\{i\}$ conditioned on $i\in A$.
\begin{example}
Consider a simple matroid $\M=\{\emptyset,\{1\},\{2\}\}$ with two elements and a distribution $\D_A$ that is specified by $\Pr[A=\{1,2\}]=\frac{1}{2}$ and $\Pr[A=\emptyset]=\frac{1}{2}$. $\D_A$ is $\frac{1}{2}$-uncontentious because the CRS, that selects an element from $1$ and $2$ uniformly at random when $A=\{1,2\}$, is $\frac{1}{2}$-balanced. However, we notice that ${\Pr}_{A\sim \D_A}[i\in \spa_{\M}(A\setminus\{i\})\mid i\in A] = 1$ for any $i\in\{1,2\}$.
\end{example}

Before proving Lemma~\ref{lem:uncontentious_subsampled_elements}, we establish Lemma~\ref{lem:rank_of_uncontentious_active_set}, which is implied by the characterization of uncontentious distributions~\citep[Theorem 2.1]{dughmi2020outer}. We provide the proof for completeness.
\begin{lemma}\label{lem:rank_of_uncontentious_active_set}
For any $\alpha\in [0,1]$, for any matroid $\M\subseteq 2^{[n]}$ and any $\alpha$-uncontentious distribution $\D_A$ for $\M$, for any weight vector $w\in\R_{\ge0}^n$, it holds that
$\E_{A\sim\D_A}[r_{\M,w}(A)]\ge\alpha\cdot\E_{A\sim\D_A}[\sum_{i\in A}w_i]$, which in particular, implies that $\E_{A\sim\D_A}[r_{\M}(A)]\ge\alpha\cdot\E_{A\sim\D_A}[|A|]$.
\end{lemma}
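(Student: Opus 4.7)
The plan is to invoke the definition of $\alpha$-uncontentious to obtain a concrete $\alpha$-balanced CRS, and then use the independent subset it produces as a witness for the weighted rank.

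First, by hypothesis, $\D_A$ is $\alpha$-uncontentious for $\M$, so there exists an $\alpha$-balanced CRS $\D_\phi \in \Delta(\Phi_\M)$ for $(\M,\D_A)$. This means that for every $i\in[n]$ with $\Pr_{A\sim\D_A}[i\in A]>0$, we have $\phi(A)\subseteq A$, $\phi(A)\in\M$, and $\Pr_{A\sim\D_A,\phi\sim\D_\phi}[i\in\phi(A)\mid i\in A]\ge\alpha$.

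Next, I would estimate the expected weight of $\phi(A)$ by linearity of expectation and the $\alpha$-balanced property:
\[
\E_{A,\phi}\!\left[\sum_{i\in\phi(A)} w_i\right] = \sum_{i\in[n]} w_i \cdot \Pr[i\in\phi(A)] = \sum_{i\in[n]} w_i \cdot \Pr[i\in A]\cdot \Pr[i\in\phi(A)\mid i\in A] \ge \alpha \cdot \E_{A}\!\left[\sum_{i\in A} w_i\right],
\]
where terms with $\Pr[i\in A]=0$ contribute $0$ on both sides and can be discarded.

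Finally, since $\phi(A)\subseteq A$ and $\phi(A)\in\M$, the set $\phi(A)$ is a feasible candidate in the maximization defining $r_{\M,w}(A)$, so pointwise $\sum_{i\in\phi(A)} w_i \le r_{\M,w}(A)$. Taking expectations yields $\E_A[r_{\M,w}(A)] \ge \E_{A,\phi}[\sum_{i\in\phi(A)} w_i] \ge \alpha\cdot\E_A[\sum_{i\in A} w_i]$. The unweighted statement is the special case $w=\mathbf{1}$, since $r_{\M,\mathbf{1}}=r_{\M}$ and $\sum_{i\in A}1=|A|$. I do not anticipate any real obstacle here: the argument is essentially an unpacking of the definitions, with the only non-trivial point being the pointwise comparison $\sum_{i\in\phi(A)} w_i \le r_{\M,w}(A)$, which is immediate from feasibility of $\phi(A)$.
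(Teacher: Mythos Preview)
Your proof is correct and follows essentially the same approach as the paper: both invoke an $\alpha$-balanced CRS $\D_\phi$, bound $\E[\sum_{i\in\phi(A)}w_i]\ge\alpha\cdot\E[\sum_{i\in A}w_i]$ via linearity and the $\alpha$-balanced property, and then use feasibility of $\phi(A)$ to compare with $r_{\M,w}(A)$. The only cosmetic difference is that the paper routes the last step through monotonicity of $r_{\M,w}$ (writing $r_{\M,w}(A)\ge r_{\M,w}(\phi(A))=\sum_{i\in\phi(A)}w_i$), whereas you observe directly that $\phi(A)$ is a feasible candidate in the definition of $r_{\M,w}(A)$.
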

\begin{proof}
Since $\D_A$ is $\alpha$-uncontentious for $\M$, there exists an $\alpha$-balanced CRS $\D_{\phi}$ for $(\M,\D_A)$. For any weight vector $w\in\R_{\ge0}^n$, we derive that
\begin{align*}
    \E_{A\sim\D_A}[r_{\M,w}(A)]&\ge\E_{A\sim\D_A,\,\phi\sim\D_{\phi}}[r_{\M,w}(\phi(A))] &&\text{(By $\phi(A)\subseteq A$ and Lemma~\ref{lem:matroid_properties}-\ref{fact:rank_monotone})}\\
    &=\E_{A\sim\D_A,\,\phi\sim\D_{\phi}}\left[\sum\nolimits_{i\in\phi(A)}w_i\right] &&\text{(By $\phi(A)\in\M$)}\\
    &=\sum_{i\in[n]} \Pr_{A\sim\D_A,\,\phi\sim\D_{\phi}}[i\in \phi(A)]\cdot w_i\\
    &\ge\sum_{i\in[n]}\alpha\cdot\Pr_{A\sim\D_A}[i\in A]\cdot w_i &&\text{(Since $\D_{\phi}$ is $\alpha$-balanced)}\\
    &=\alpha\cdot\E_{A\sim\D_A}\left[\sum\nolimits_{i\in A} w_i\right].
\end{align*}
This implies that $\E_{A\sim\D_A}[r_{\M}(A)]\ge\alpha\cdot\E_{A\sim\D_A}[|A|]$ by setting $w$ as the all-ones vector.
\end{proof}
Now we proceed to the proof of Lemma~\ref{lem:uncontentious_subsampled_elements}.
\begin{proof}[Proof of Lemma~\ref{lem:uncontentious_subsampled_elements}]
First, we upper bound $\E_{A\sim\D_A}[r_{\M}(A)]$ using basic properties of matroids:
\begin{align}\label{eq:rank_decomposition_independent_subsampling_1}
    \E_{A\sim\D_A}[r_{\M}(A)]&=\E_{A\sim\D_A}[r_{\M}(A\cup\T_{\rho}(A))]\nonumber\\
    &=\E_{A\sim\D_A}[r_{\M}(A\mid\T_{\rho}(A))+r_{\M}(\T_{\rho}(A))]\nonumber\\
    &=\E_{A\sim\D_A}[r_{\M}(A\mid\T_{\rho}(A))]+\E_{A\sim\D_A}[r_{\M}(\T_{\rho}(A))] \nonumber\\
    &\le\E_{A\sim\D_A}[\sum_{i\in A}r_{\M}(\{i\}\mid\T_{\rho}(A))]+\E_{A\sim\D_A}[r_{\M}(\T_{\rho}(A))] &&\text{(By Lemma~\ref{lem:matroid_properties}-\ref{fact:rank_marginal_subadditive})}\nonumber\\
    &\le\E_{A\sim\D_A}[\sum_{i\in A}r_{\M}(\{i\}\mid\T_{\rho}(A))]+\E_{A\sim\D_A}[|\T_{\rho}(A)|] &&\text{(By Lemma~\ref{lem:matroid_properties}-\ref{fact:rank_bounded})}\nonumber\\
    &=\E_{A\sim\D_A}[\sum_{i\in A}r_{\M}(\{i\}\mid\T_{\rho}(A))]+\rho\cdot\E_{A\sim\D_A}[|A|] &&\text{(By definition of $\T_{\rho}$)}.
\end{align}
Since $r_{\M}(\{i\}\mid\T_{\rho}(A))$ equals $1$ if $i\notin\spa_{\M}(\T_{\rho}(A))$ and $0$ otherwise, it follows that
\begin{align}\label{eq:rank_decomposition_independent_subsampling_2}
    \E_{A\sim\D_A}[\sum_{i\in A}r_{\M}(\{i\}\mid\T_{\rho}(A))]&=\E_{A\sim\D_A}[\sum_{i\in [n]} \mathds{1}(i\in A,\,i\notin\spa_{\M}(\T_{\rho}(A)))]\nonumber\\
    &=\sum_{i\in[n]}\Pr\nolimits_{A\sim\D_A}[i\in A,\,i\notin\spa_{\M}(\T_{\rho}(A))].
\end{align}
Combining Ineq.~\eqref{eq:rank_decomposition_independent_subsampling_1} and Ineq.~\eqref{eq:rank_decomposition_independent_subsampling_2}, we have that
\[
    \E_{A\sim\D_A}[r_{\M}(A)]\le\sum_{i\in[n]}\Pr\nolimits_{A\sim\D_A}[i\in A,\,i\notin\spa_{\M}(\T_{\rho}(A))]+\rho\cdot\E_{A\sim\D_A}[|A|].
\]
By Lemma~\ref{lem:rank_of_uncontentious_active_set}, this implies that
\[
    \sum_{i\in[n]}\Pr\nolimits_{A\sim\D_A}[i\in A,\,i\notin\spa_{\M}(\T_{\rho}(A))]\ge(\alpha-\rho)\cdot\E_{A\sim\D_A}[|A|]=(\alpha-\rho)\cdot\sum_{i\in[n]}\Pr\nolimits_{A\sim\D_A}[i\in A].
\]
Therefore, there exists $i\in [n]$ such that
\[
    \Pr\nolimits_{A\sim\D_A}[i\notin\spa(\T_{\rho}(A))\mid i\in A]=\frac{\Pr\nolimits_{A\sim\D_A}[i\notin\spa_{\M}(\T_{\rho}(A)),\,i\in A]}{\Pr\nolimits_{A\sim\D_A}[i\in A]}\ge \alpha-\rho.
\]
\end{proof}

\subsection{Constructing our universal OCRS with independent subsampling}
Our universal OCRS with independent subsampling (Algorithm~\ref{alg:universal_ocrs_independent_subsampling}) first samples a set $T=\T_{\frac{\alpha}{2}}([n])$ and applies Lemma~\ref{lem:uncontentious_subsampled_elements} iteratively to determine an order of the elements, and then following this order, it greedily selects each element that is selectable and belongs to $T$. In Theorem~\ref{thm:universal_ocrs_independent_subsampling}, we show that Algorithm~\ref{alg:universal_ocrs_independent_subsampling} is a universal OCRS, and in Section~\ref{sec:tight_instances}, we prove that the universality guarantee in Theorem~\ref{thm:universal_ocrs_independent_subsampling} is essentially tight for Algorithm~\ref{alg:universal_ocrs_independent_subsampling}.

\begin{algorithm}[ht]
\SetAlgoLined
\SetKwInOut{Input}{Input}
\SetKwInOut{Output}{Output}
\Input{Matroid $\M\subseteq 2^{[n]}$, $\alpha\ge0$, $\alpha$-uncontentious distribution $\D_A$ for $\M$, and $A\sim\D_A$}
\Output{$X_n\subseteq A$ such that $X_n\in\M$}
\SetAlgorithmName{Algorithm}~~
\SetKw{Continue}{continue}
 $S_n\gets[n]$, $X_0\gets\emptyset$, and $T\gets\T_{\frac{\alpha}{2}}([n])$\;
 \For{$i=n,\dots,1$ \tcp{notice the reversed order}}{
    Find an element $\pi(i)\in S_i$ such that ${\Pr}_{A'\sim \D_A^{S_i}}[\pi(i)\notin \spa_{\M}(\T_{\frac{\alpha}{2}}(A'))\mid \pi(i)\in A']\ge\frac{\alpha}{2}$\;\label{algline:find_element}
    $S_{i-1}\gets S_i\setminus\{\pi(i)\}$\;
 }
 \For{$i=1,\dots,n$}{
    $X_i\gets X_{i-1}$\;
    \If{$\pi(i)\in A\cap T$ \textnormal{\textbf{and}} $X_{i-1}\cup\{\pi(i)\}\in\M$}{
        $X_i\gets X_{i-1}\cup\{\pi(i)\}$\;
    }
 }
 \Return $X_n$\;
 \caption{\textsc{Universal-OCRS-with-Independent-Subsampling}}
 \label{alg:universal_ocrs_independent_subsampling}
\end{algorithm}

\begin{theorem}\label{thm:universal_ocrs_independent_subsampling}
Algorithm~\ref{alg:universal_ocrs_independent_subsampling} is an $(\alpha,\frac{\alpha^2}{4})$-universal OCRS for any $\alpha\in[0,1]$.
\end{theorem}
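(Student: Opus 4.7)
Fix any element $j\in[n]$ with $\Pr_{A\sim \D_A}[j\in A]>0$; it suffices to show that $\Pr[j\in X_n\mid j\in A]\ge \alpha^2/4$. Let $i$ be the index with $\pi(i)=j$, so that $j\in S_i=\{\pi(1),\dots,\pi(i)\}$. First I would check that Line~\ref{algline:find_element} is always feasible: by Lemma~\ref{lem:uncontentious_marginal} the marginal distribution $\D_A^{S_i}$ is $\alpha$-uncontentious for $\M_{S_i}$, so Lemma~\ref{lem:uncontentious_subsampled_elements} applied to $(\M_{S_i},\D_A^{S_i})$ with $\rho=\alpha/2$ produces a valid choice of $\pi(i)$. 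Moreover, since $A\cap S_i\sim\D_A^{S_i}$ and, conditional on $A\cap S_i=B$, the set $T\cap A\cap S_i$ equals $\T_{\alpha/2}(B)$ in distribution (because $T$ is an independent $\alpha/2$-subsampling of $[n]$), the guarantee secured at step $i$ rewrites on the joint measure of $(A,T)$ as
\[
\Pr[j\notin\spa_{\M}(T\cap A\cap S_i)\mid j\in A]\ge \alpha/2.
\]

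Next I would unfold the event $\{j\in X_n\}$. The algorithm places $j$ into $X_n$ precisely when $j\in A$, $j\in T$, and $X_{i-1}\cup\{j\}\in\M$, i.e.\ $j\notin\spa_{\M}(X_{i-1})$. Because the online phase only ever adds elements lying in $T\cap A\cap\{\pi(1),\dots,\pi(i-1)\}$, we have $X_{i-1}\subseteq T\cap A\cap(S_i\setminus\{j\})\subseteq T\cap A\cap S_i$. Monotonicity of span (Lemma~\ref{lem:matroid_properties}) then yields the chain of event inclusions
\[
\{j\notin\spa_{\M}(T\cap A\cap S_i)\}\subseteq \{j\notin\spa_{\M}(T\cap A\cap(S_i\setminus\{j\}))\}\subseteq \{j\notin\spa_{\M}(X_{i-1})\},
\]
so the inequality above continues to hold with $T\cap A\cap(S_i\setminus\{j\})$ (and hence $X_{i-1}$) in place of $T\cap A\cap S_i$.

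Finally I exploit the independent coin for $\{j\in T\}$. The indicator $\mathbf{1}[j\in T]$ is a Bernoulli$(\alpha/2)$ variable independent of $A$ and of $T\setminus\{j\}$, while $T\cap A\cap(S_i\setminus\{j\})$ is a deterministic function of $A$ and $T\setminus\{j\}$. Hence the two events factor and
\[
\Pr[j\in X_n\mid j\in A]\ge \Pr[j\in T]\cdot\Pr[j\notin\spa_{\M}(T\cap A\cap(S_i\setminus\{j\}))\mid j\in A]\ge \tfrac{\alpha}{2}\cdot\tfrac{\alpha}{2}=\tfrac{\alpha^2}{4}.
\]
The main delicate point is the bookkeeping that links the guarantee of Line~\ref{algline:find_element}---phrased as $\T_{\alpha/2}$ applied to an independent sample from $\D_A^{S_i}$---to the joint distribution of the actually realized $T$ and $A$, and then using span-monotonicity twice: once to strip $j$ itself out of the subsampled set (making the independence between $\{j\in T\}$ and the residual spanning event usable), and once to replace the greedily constructed $X_{i-1}$ by the cleaner upper bound $T\cap A\cap(S_i\setminus\{j\})$. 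Everything else is a direct unwinding of the algorithm.
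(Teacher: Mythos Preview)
Your argument is correct and follows essentially the same route as the paper's proof: verify feasibility of Line~\ref{algline:find_element} via Lemma~\ref{lem:uncontentious_marginal} and Lemma~\ref{lem:uncontentious_subsampled_elements}, translate the guarantee to the joint law of $(A,T)$, use $X_{i-1}\subseteq T\cap A\cap S_{i-1}$ together with span monotonicity, and then factor out the independent Bernoulli coin for $\{j\in T\}$. Your independence justification is in fact a bit more careful than the paper's (you isolate that the residual spanning event depends only on $A$ and $T\setminus\{j\}$); the only step you leave implicit that the paper spells out is the passage from $\spa_{\M_{S_i}}$ to $\spa_{\M}$ via Lemma~\ref{lem:matroid_properties}-\ref{fact:matroid_restriction}, which is routine.
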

\begin{proof}
For any $\alpha\in[0,1]$, given any matroid $\M\subseteq2^{[n]}$ and any $\alpha$-uncontentious distribution $\D_A$ for $\M$, we can iteratively apply Lemma~\ref{lem:uncontentious_subsampled_elements} to determine an order $\pi:[n]\to[n]$. Specifically, for each $i\in[n]$, suppose that elements $\pi(i+1),\dots,\pi(n)$ have already been determined, and let $S_i=[n]\setminus\{\pi(i+1),\dots,\pi(n)\}$. By Lemma~\ref{lem:uncontentious_marginal}, since $\D_A$ is $\alpha$-uncontentious for $\M$, the marginal distribution $\D_A^{S_i}$ is $\alpha$-uncontentious for the restriction $\M_{S_i}$. We apply Lemma~\ref{lem:uncontentious_subsampled_elements} to matroid $\M_{S_i}$ and prior distribution $\D_A^{S_i}$ by setting $\rho$ as $\frac{\alpha}{2}$, which implies that there exists an element $\pi(i)\in S_i$ such that ${\Pr}_{A'\sim \D_A^{S_i}}[\pi(i)\notin \spa_{\M_{S_i}}(\T_{\frac{\alpha}{2}}(A'))\mid \pi(i)\in A']\ge\frac{\alpha}{2}$. Moreover, by Lemma~\ref{lem:matroid_properties}-\ref{fact:matroid_restriction}, $\pi(i)\in \spa_{\M_{S_i}}(\T_{\frac{\alpha}{2}}(A'))$ is equivalent to $\pi(i)\in \spa_{\M}(\T_{\frac{\alpha}{2}}(A'))$, and thus, we have that
\begin{equation}\label{eq:iterative_find}
    {\Pr}_{A'\sim \D_A^{S_i}}[\pi(i)\notin \spa_{\M}(\T_{\frac{\alpha}{2}}(A'))\mid \pi(i)\in A']\ge\frac{\alpha}{2}.
\end{equation}
By repeating the argument above (from $i=n$ to $i=1$), we can determine an order $\pi$ such that Ineq.~\eqref{eq:iterative_find} holds for all $i\in [n]$ and $S_i=\{\pi(1),\dots,\pi(i)\}$ (we also let $S_0=\emptyset$ for completeness), which is exactly what the first for loop of Algorithm~\ref{alg:universal_ocrs_independent_subsampling} does.

Now given a set of active elements $A\sim\D_A$, consider the selection procedure in the second for loop of Algorithm~\ref{alg:universal_ocrs_independent_subsampling}. For each $i\in[n]$, element $\pi(i)$ is included in $X_i$ only if $\pi(i)\in A\cap T$. Therefore, for all $i\in[n]$, the solution set $X_{i-1}$ is a subset of $A\cap T\cap S_{i-1}$, and hence, we have that
\begin{align}
&\,{\Pr}_{A\sim \D_A}[\pi(i)\notin \spa_{\M}(X_{i-1})\mid \pi(i)\in A]\nonumber\\
\ge&\,{\Pr}_{A\sim \D_A}[\pi(i)\notin \spa_{\M}(A\cap T\cap S_{i-1})\mid \pi(i)\in A]&&\text{(By Lemma~\ref{lem:matroid_properties}-\ref{fact:span_monotone})}\nonumber\\
=&\,{\Pr}_{A\sim \D_A}[\pi(i)\notin \spa_{\M}(\T_{\frac{\alpha}{2}}(A\cap S_{i-1}))\mid \pi(i)\in A]&&\text{(Since $T=\T_{\frac{\alpha}{2}}([n])$)}\nonumber\\
\ge&\,{\Pr}_{A\sim \D_A}[\pi(i)\notin \spa_{\M}(\T_{\frac{\alpha}{2}}(A\cap S_{i}))\mid \pi(i)\in A]&&\text{(Since $S_{i-1}\subseteq S_{i}$)}\nonumber\\
=&\,{\Pr}_{A'\sim \D_A^{S_i}}[\pi(i)\notin \spa_{\M}(\T_{\frac{\alpha}{2}}(A'))\mid \pi(i)\in A']&&\text{(By definition of $\D_A^{S_i}$)}.\label{eq:element_not_spaned}
\end{align}
Moreover, because element $\pi(i)$ is selected by Algorithm~\ref{alg:universal_ocrs_independent_subsampling} if $\pi(i)\in A\cap T$ and $\pi(i)\notin \spa_{\M}(X_{i-1})$, the probability that $\pi(i)$ is selected conditioned on it being active is 
\begin{align}
&{\Pr}_{A\sim \D_A}[\pi(i)\in T,\,\pi(i)\notin \spa_{\M}(X_{i-1})\mid \pi(i)\in A]\nonumber\\
=&{\Pr}_{A\sim \D_A}[\pi(i)\notin \spa_{\M}(X_{i-1})\mid \pi(i)\in A]\times\Pr[\pi(i)\in T] &&\text{(Since $T$ is independent of $A$)}\nonumber\\
=&{\Pr}_{A\sim \D_A}[\pi(i)\notin \spa_{\M}(X_{i-1})\mid \pi(i)\in A]\times\frac{\alpha}{2}&&\text{(Since $T=\T_{\frac{\alpha}{2}}([n])$)}\nonumber\\
\ge&\,{\Pr}_{A'\sim \D_A^{S_i}}[\pi(i)\notin \spa_{\M}(\T_{\frac{\alpha}{2}}(A'))\mid \pi(i)\in A']\times\frac{\alpha}{2} &&\text{(By Ineq.~\eqref{eq:element_not_spaned})}\label{eq:probability_of_selection} \\
\ge&\,\frac{\alpha}{2}\times\frac{\alpha}{2}=\frac{\alpha^2}{4} &&\text{(By Ineq.~\eqref{eq:iterative_find})}.\nonumber
\end{align}
It follows that Algorithm~\ref{alg:universal_ocrs_independent_subsampling} is $\frac{\alpha^2}{4}$-balanced for $(\M,\D_A)$, which finishes the proof.
\end{proof}

We note that Algorithm~\ref{alg:universal_ocrs_independent_subsampling} does not specify how to find element $\pi(i)$ at Line~\ref{algline:find_element}. In Section~\ref{sec:monte-carlo}, we implement this step by estimating probabilities ${\Pr}_{A'\sim \D_A^{S_i}}[j\in \spa_{\M}(\T_{\frac{\alpha}{2}}(A'))\mid j\in A']$ for all $i\in[n]$ and $j\in S_i$ using Monte-Carlo sampling, which results in the following corollary.
\begin{corollary}\label{cor:explicit_universal_OCRS}
For any $\alpha,\eps\in(0,1]$, there exists an $(\alpha,\frac{(1-\eps)\alpha^2}{4})$-universal OCRS with preselected order, which given input matroid $\M\subseteq2^{[n]}$ and $\alpha$-uncontentious prior distribution $\D_A$ for $\M$, runs in $O\left(\frac{n\log(n/\eps)}{\alpha^2\eps^2 p_{\min}}\cdot (t_{\D_A}+t_{\M}\cdot n)\right)$ time, where $p_{\min}:=\min_{i\in[n]}\Pr_{A\sim\D_A}[i\in A]$, and $t_{\D_A},\,t_{\M}$ are the time it takes to generate a sample from $\D_A$ and to check whether a set of elements belongs to $\M$ respectively.
\end{corollary}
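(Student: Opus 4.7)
The structural part of the OCRS is already handled by Theorem~\ref{thm:universal_ocrs_independent_subsampling}; the only missing ingredient is an efficient implementation of Line~\ref{algline:find_element}. The plan is to Monte-Carlo estimate, for each $i$ and each candidate $j\in S_i$, the conditional probability
\[
q_{i,j} \,:=\, {\Pr}_{A'\sim \D_A^{S_i}}\!\bigl[j\notin \spa_{\M}(\T_{\alpha/2}(A'))\,\big|\,j\in A'\bigr],
\]
relax the threshold of Line~\ref{algline:find_element} from $\alpha/2$ to $\tfrac{\alpha}{2}(1-\eps/2)$, and absorb the estimation error into the final $(1-\eps)$ factor. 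Lemma~\ref{lem:uncontentious_subsampled_elements} guarantees that some $j\in S_i$ satisfies $q_{i,j}\ge \alpha/2$, so on the event that every estimate $\hat q_{i,j}$ is accurate to within additive error $\eps\alpha/4$, the relaxed threshold is always achievable and the selected $\pi(i)$ satisfies $q_{i,\pi(i)}\ge \tfrac{\alpha}{2}(1-\eps)$. Plugging this bound into the step Ineq.~\eqref{eq:probability_of_selection} in the proof of Theorem~\ref{thm:universal_ocrs_independent_subsampling} yields balancedness $\ge \tfrac{\alpha}{2}\cdot\tfrac{\alpha}{2}(1-\eps)=(1-\eps)\tfrac{\alpha^2}{4}$.

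For each iteration $i$, I would draw $N=\Theta\!\bigl(\eps^{-2}\alpha^{-2}p_{\min}^{-1}\log(n/\eps)\bigr)$ i.i.d.\ samples from $\D_A$, intersect each with $S_i$ to obtain samples from $\D_A^{S_i}$, generate one independent realization of $\T_{\alpha/2}$ per sample, and let $\hat q_{i,j}$ be the empirical mean of $\mathds{1}[j\notin \spa_{\M}(\T_{\alpha/2}(A'\cap S_i))]$ restricted to the samples with $j\in A'$. The concentration argument has two levels. First, a multiplicative Chernoff bound shows that at least $\tfrac12 N\cdot\Pr_{\D_A}[j\in A]\ge \tfrac12 N p_{\min}=\Omega(\eps^{-2}\alpha^{-2}\log(n/\eps))$ of the $N$ samples satisfy $j\in A'$, with failure probability $\le 1/\poly(n/\eps)$. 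Second, conditional on this, the contributing indicators are i.i.d.\ Bernoulli with mean $q_{i,j}$ (using the independence of the $\T_{\alpha/2}$ coins), so Hoeffding's inequality gives $|\hat q_{i,j}-q_{i,j}|\le \eps\alpha/4$ with failure probability $\le 1/\poly(n/\eps)$. A union bound over the at most $n^2$ pairs $(i,j)$ makes all estimates simultaneously accurate with probability $\ge 1-\eps/2$, and this failure event can be absorbed into the $(1-\eps)$ factor (after rescaling $\eps$ by a constant).

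For the runtime, drawing a sample costs $O(t_{\D_A})$, and given one sample and its $\T_{\alpha/2}$ realization one first computes a basis of $\T_{\alpha/2}(A'\cap S_i)$ by the greedy matroid algorithm in $O(n\cdot t_{\M})$ time, then tests each of the $\le n$ candidates $j$ for span by one additional matroid query, for $O(n\cdot t_{\M})$ per sample total across all candidates. Summed over $n$ iterations with $N$ samples each, plus the negligible $O(n\cdot t_{\M})$ of the online selection loop, the total runtime is $O\!\bigl(\tfrac{n\log(n/\eps)}{\alpha^2\eps^2 p_{\min}}(t_{\D_A}+n\cdot t_{\M})\bigr)$, matching the stated bound. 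The main obstacle is the \emph{conditional} nature of the quantity being estimated: one has no direct sampler for $\D_A^{S_i}$ conditioned on $j\in A'$, so roughly $1/p_{\min}$ unconditional samples are required per useful conditional sample, which is why the $1/p_{\min}$ dependence in the runtime is unavoidable---and Example~\ref{ex:p_min} (formalized in Proposition~\ref{prop:p_min}) confirms that this dependence is inherent for any universal OCRS.
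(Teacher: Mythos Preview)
Your proposal is correct and follows essentially the same approach as the paper's proof: Monte-Carlo estimation of the conditional span probabilities with a relaxed threshold at Line~\ref{algline:find_element}, a two-stage concentration argument (multiplicative Chernoff for the count of samples hitting $j$, then Hoeffding on those samples for the conditional estimate), a union bound over all $(i,j)$ pairs, and the same per-sample runtime accounting via computing a basis of the subsampled set once and testing each candidate with a single membership query. The paper's version differs only in the specific constants chosen for the threshold and failure probabilities.
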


Furthermore, as we mentioned in the introduction, Algorithm~\ref{alg:universal_ocrs_independent_subsampling} (in fact, all of our universal OCRSs) generalizes ordered OCRSs for product distributions~\citep{chekuri2014submodular,gupta2013stochastic}. For product distributions, it is known that ordered OCRSs can be strengthened to obtain \emph{greedy} OCRSs, which work even for the worst-case arrival model~\citep[Theorem 2.1]{feldman2021online}. One might wonder whether we can also apply our techniques to make those greedy OCRSs universal. Unfortunately, the answer is no, because this would yield a universal OCRS for the worst-case arrival model, which does not exist even for 1-uniform matroids~\citep[Theorem 5.7]{dughmi2020outer}. However, this does not preclude the possibility that there might be meaningful relaxations of greedy OCRSs which can be made universal.

\section{An improved universal OCRS via correlated subsampling}\label{sec:ocrs_correlated_subsampling}
In this section, we design a universal OCRS with preselected order using correlated subsampling, which achieves a slightly stronger universality guarantee than Algorithm~\ref{alg:universal_ocrs_independent_subsampling}. We first introduce some notations which we will use in this section. We will use symbols $\pi,\sigma,\tau$ to represent permutations. Given any permutation $\sigma:[n]\to[n]$ and any element $e\in[n]$, we let $\pref(\sigma,e)$ denote the set of elements appearing before element $e$ in permutation $\sigma$, namely, $\pref(\sigma,e) := \{\sigma(j) \mid j\in[\sigma^{-1}(e)-1]\}$. Moreover, for any $k\in[n]$, we let $\sigma^k$ denote the set of the first $k$ elements in permutation $\sigma$, namely, $\sigma^k:=\{\sigma(j)\mid j\in[k]\}$, and we let $\sigma^0:=\emptyset$. Furthermore, for any set of elements $S$, we let $\cP(S)$ denote the uniform distribution over all permutations of elements in $S$.
\subsection{Main structural lemma}
The main idea of our improved universal OCRS is replacing the independent subsampling operator $\T_{\rho}$ with a correlated subsampling method\footnote{The author thanks an anonymous reviewer for suggesting this method and raising valuable questions.} that is inspired by Lemma~\ref{lem:uncontentious_preceding_elements}. Essentially, Lemma~\ref{lem:uncontentious_preceding_elements} says that given a matroid $\M$ and an uncontentious distribution $\D_A$ for $\M$, there exists an element $i$, such that if we sample a set of active elements $A$ from $\D_A$ and a uniformly random permutation $\sigma$, and then remove all elements in $A$ except those appearing before element $i$ in permutation $\sigma$, there is a decent chance that the remaining elements do not span $i$ conditioned on $i$ being active.
\begin{lemma}\label{lem:uncontentious_preceding_elements}
For any $\alpha\in[0,1]$, given any matroid $\M\subseteq 2^{[n]}$ and any $\alpha$-uncontentious distribution $\D_A$ for $\M$, there exists some $i\in[n]$ such that
\[
    \Pr_{A\sim\D_A,\,\sigma\sim\cP([n])}[i \notin\spa(A\cap\pref(\sigma,i))\mid i\in A]\ge \alpha.
\]
\end{lemma}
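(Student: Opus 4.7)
The plan is to mirror the structure of the proof of Lemma~\ref{lem:uncontentious_subsampled_elements}, but replacing the independent subsampling decomposition by a decomposition along a uniformly random order. The central observation is a well-known greedy identity for matroids: for \emph{any} permutation $\sigma$ of $[n]$ and any set $A\subseteq[n]$,
\[
r_{\M}(A) \;=\; \sum_{i\in A}\mathds{1}\bigl[i\notin\spa_{\M}(A\cap\pref(\sigma,i))\bigr].
\]
This is just the statement that the greedy algorithm, processing the elements of $A$ in the order induced by $\sigma$, selects exactly the ones that are not spanned by the previously processed elements, and the resulting independent set is a basis of $A$, so its size equals $r_{\M}(A)$.

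Next, I would take expectations over $A\sim\D_A$ and $\sigma\sim\cP([n])$ on both sides of this identity, obtaining
\[
\E_{A\sim\D_A}[r_{\M}(A)] \;=\; \sum_{i\in[n]}\Pr_{A\sim\D_A,\,\sigma\sim\cP([n])}\bigl[i\in A,\; i\notin\spa_{\M}(A\cap\pref(\sigma,i))\bigr].
\]
On the other hand, Lemma~\ref{lem:rank_of_uncontentious_active_set} (applied with $w$ being the all-ones vector) gives
\[
\E_{A\sim\D_A}[r_{\M}(A)] \;\ge\; \alpha\cdot\E_{A\sim\D_A}[|A|] \;=\; \alpha\sum_{i\in[n]}\Pr_{A\sim\D_A}[i\in A].
\]
Combining these two relations and averaging over $i\in[n]$ yields some $i\in[n]$ for which
\[
\Pr_{A,\,\sigma}\bigl[i\notin\spa_{\M}(A\cap\pref(\sigma,i))\bigm| i\in A\bigr]\;\ge\;\alpha,
\]
which is the claimed bound.

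There is no real obstacle here: the argument is strictly cleaner than the one for Lemma~\ref{lem:uncontentious_subsampled_elements}, because the greedy identity is a deterministic equality (not an inequality via subadditivity of marginal rank), which is exactly why the resulting bound is $\alpha$ rather than $\alpha-\rho$. The only small thing to be careful about is justifying the greedy identity; I would state it as a consequence of the fact that any maximal independent subset of $A$ obtained by a greedy sweep in order $\sigma$ is automatically a basis of $A$, using the exchange axiom of matroids (or simply cite the relevant part of Lemma~\ref{lem:matroid_properties}). Everything else is linearity of expectation and an averaging argument.
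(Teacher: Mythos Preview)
Your proposal is correct and takes essentially the same approach as the paper: the paper derives your ``greedy identity'' via the telescoping sum $r_{\M}(A)=\sum_{k\in[n]}\bigl(r_{\M}(A\cap\sigma^k)-r_{\M}(A\cap\sigma^{k-1})\bigr)$ and then reindexes by $i=\sigma(k)$, which is exactly the identity you state directly. The rest of the argument---taking expectations, invoking Lemma~\ref{lem:rank_of_uncontentious_active_set}, and averaging over $i$---is identical.
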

\begin{proof}
Since $\D_A$ is $\alpha$-uncontentious for $\M$, there exists an $\alpha$-balanced CRS $\D_{\phi}$ for $(\M,\D_A)$. First, we notice that Lemma~\ref{lem:rank_of_uncontentious_active_set} implies that
\begin{equation}\label{eq:rank_vs_cardinality_randomly_permuted}
    \E_{A\sim\D_A,\,\sigma\sim\cP([n])}[r_{\M}(A)]\ge\alpha\cdot\E_{A\sim\D_A,\,\sigma\sim\cP([n])}[|A|],
\end{equation}
because $r_{\M}(A)$ and $|A|$ are independent of $\sigma$. We decompose $\E_{A\sim\D_A,\,\sigma\sim\cP([n])}[r_{\M}(A)]$ as follows,
\begin{align}\label{eq:exchange_permutation_and_element}
    &\E_{A\sim\D_A,\,\sigma\sim\cP([n])}[r_{\M}(A)]\nonumber\\
    =&\E_{A\sim\D_A,\,\sigma\sim\cP([n])}[\sum_{k\in[n]}r_{\M}(A\cap \sigma^k)-r_{\M}(A\cap \sigma^{k-1})]\nonumber\\
    &\quad\text{(Telescoping sum)}\nonumber\\
    =&\E_{A\sim\D_A,\,\sigma\sim\cP([n])}[\sum_{k\in[n]}r_{\M}(A\cap (\pref(\sigma,\sigma(k))\cup\{\sigma(k)\}))-r_{\M}(A\cap \pref(\sigma,\sigma(k)))]\nonumber\\
    &\quad\text{(Since $\pref(\sigma,\sigma(k))=\sigma^{k-1}$ and $\sigma^k=\sigma^{k-1}\cup\{\sigma(k)\}$)}\nonumber\\
    =&\E_{A\sim\D_A,\,\sigma\sim\cP([n])}[\sum_{i\in[n]}r_{\M}(A\cap (\pref(\sigma,i)\cup\{i\}))-r_{\M}(A\cap \pref(\sigma,i))]\nonumber\\
    &\quad\text{(Since $\sigma$ is a permutation)}\nonumber\\
    &=\sum_{i\in[n]}\E_{A\sim\D_A,\,\sigma\sim\cP([n])}[r_{\M}(A\cap (\pref(\sigma,i)\cup\{i\}))-r_{\M}(A\cap \pref(\sigma,i))].
\end{align}
Note that the term $r_{\M}(A\cap (\pref(\sigma,i)\cup\{i\}))-r_{\M}(A\cap \pref(\sigma,i))$ equals $1$ if $i\in A$ and $i\notin\spa_{\M}(A\cap \pref(\sigma,i))$, and equals $0$ otherwise. It follows that
\begin{align*}\label{eq:rank_marginal_gain}
    &\E_{A\sim\D_A,\,\sigma\sim\cP([n])}[r_{\M}(A\cap (\pref(\sigma,i)\cup\{i\}))-r_{\M}(A\cap \pref(\sigma,i))]\\
    =&\Pr_{A\sim\D_A,\,\sigma\sim\cP([n])}[i\in A,\,i\notin \spa_{\M}(A\cap \pref(\sigma,i))].
\end{align*}
Combining this with Eq.~\eqref{eq:exchange_permutation_and_element}, we get
\begin{equation}\label{eq:randomly_permuted_rank}
    \E_{A\sim\D_A,\,\sigma\sim\cP([n])}[r_{\M}(A)]=\sum_{i\in[n]}\Pr_{A\sim\D_A,\,\sigma\sim\cP([n])}[i\in A,\,i\notin \spa_{\M}(A\cap \pref(\sigma,i))].
\end{equation}
Furthermore, we can decompose $\E_{A\sim\D_A,\,\sigma\sim\cP([n])}[|A|]$ as follows,
\begin{equation}\label{eq:randomly_permuted_cardinality}
\E_{A\sim\D_A,\,\sigma\sim\cP([n])}[|A|]=\sum_{i\in[n]}\Pr_{A\sim\D_A,\,\sigma\sim\cP([n])}[i\in A].
\end{equation}
Putting Ineq.~\eqref{eq:rank_vs_cardinality_randomly_permuted}, Eq.~\eqref{eq:randomly_permuted_rank} and Eq.~\eqref{eq:randomly_permuted_cardinality} together, we get
$$\sum_{i\in[n]}\Pr_{A\sim\D_A,\,\sigma\sim\cP([n])}[i\in A,\,i\notin \spa_{\M}(A\cap \pref(\sigma,i))]\ge \alpha\cdot\sum_{i\in[n]}\Pr_{A\sim\D_A,\,\sigma\sim\cP([n])}[i\in A],$$
which implies that there exists $i\in[n]$ such that
\begin{align*}
    \Pr_{\substack{A\sim\D_A,\\\sigma\sim\cP([n])}}[i \notin\spa_{\M}(A\cap\pref(\sigma,i))\mid i\in A]&=\frac{\Pr\limits_{\substack{A\sim\D_A,\\\sigma\sim\cP([n])}}[i\in A,\,i\notin \spa_{\M}(A\cap \pref(\sigma,i))]}{\Pr\limits_{A\sim\D_A,\,\sigma\sim\cP([n])}[i\in A]}\ge\alpha.
\end{align*}
\end{proof}

We can iteratively apply Lemma~\ref{lem:uncontentious_preceding_elements} to determine an order of elements, which will be the preselected order of our improved universal OCRS. We formally describe this procedure in Subroutine~\ref{sub:new_universal_order} and state its guarantee in Corollary~\ref{cor:preceding_elements}.

\begin{algorithm}[ht]
\SetAlgoLined
\SetKwInOut{Input}{Input}
\SetKwInOut{Output}{Output}
\Input{Matroid $\M\subseteq 2^{[n]}$, $\alpha\ge0$, $\alpha$-uncontentious distribution $\D_A$ for $\M$}
\Output{Permutation $\pi:[n]\to[n]$}
\SetAlgorithmName{Subroutine}~~
 $S_n\gets[n]$\;
 \For{$i=n,\dots,1$ \tcp{notice the reversed order}}{
    Find an element $\pi(i)\in S_i$ such that $\Pr_{A'\sim\D_A^{S_i},\,\sigma\sim\cP(S_i)}[\pi(i) \notin\spa_{\M}(A'\cap\pref(\sigma,\pi(i)))\mid \pi(i)\in A']\ge \alpha$\;\label{algline:find_element_new}
    $S_{i-1}\gets S_i\setminus\{\pi(i)\}$
 }
 \Return $\pi$\;
 \caption{\textsc{Order-Preselecting}}
 \label{sub:new_universal_order}
\end{algorithm}

\begin{corollary}\label{cor:preceding_elements}
For any $\alpha\in[0,1]$, given any matroid $\M\subseteq 2^{[n]}$ and any $\alpha$-uncontentious distribution $\D_A$ for $\M$, Subroutine~\ref{sub:new_universal_order} outputs a permutation $\pi:[n]\to[n]$ such that for all $i\in[n]$, 
\begin{equation}\label{eq:uncontentious_preceding_elements}
    \Pr_{A\sim\D_A,\,\sigma\sim\cP(\pi^i)}[\pi(i) \notin\spa_{\M}(A\cap\pref(\sigma,\pi(i)))\mid \pi(i)\in A]\ge \alpha.
\end{equation}
\end{corollary}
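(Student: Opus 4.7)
The plan is to prove the corollary by reverse induction on $i=n,n-1,\dots,1$, showing both that Line~\ref{algline:find_element_new} of Subroutine~\ref{sub:new_universal_order} always succeeds in finding a valid $\pi(i)\in S_i$ and that the guarantee thereby obtained is already the one required by inequality~\eqref{eq:uncontentious_preceding_elements}.

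To establish existence at iteration $i$, I would first invoke Lemma~\ref{lem:uncontentious_marginal}, which says the marginal $\D_A^{S_i}$ is $\alpha$-uncontentious for the restriction $\M_{S_i}$. Applying Lemma~\ref{lem:uncontentious_preceding_elements} to the instance $(\M_{S_i},\D_A^{S_i})$ then produces some $\pi(i)\in S_i$ satisfying
\[
\Pr_{A'\sim\D_A^{S_i},\,\sigma\sim\cP(S_i)}[\pi(i)\notin\spa_{\M_{S_i}}(A'\cap\pref(\sigma,\pi(i)))\mid\pi(i)\in A']\ge\alpha.
\]
Since $A'\cap\pref(\sigma,\pi(i))\subseteq S_i$, Lemma~\ref{lem:matroid_properties}-\ref{fact:matroid_restriction} allows me to replace $\spa_{\M_{S_i}}$ by $\spa_{\M}$ inside the event without changing it, which matches the test at Line~\ref{algline:find_element_new} exactly and shows the search never fails.

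To translate this guarantee into~\eqref{eq:uncontentious_preceding_elements}, I would observe that once the subroutine terminates the set $S_i$ coincides with $\pi^i=\{\pi(1),\dots,\pi(i)\}$, so $\cP(\pi^i)=\cP(S_i)$. Moreover, for $\sigma\sim\cP(S_i)$ the set $\pref(\sigma,\pi(i))$ lies in $S_i$, so the event in~\eqref{eq:uncontentious_preceding_elements} depends on $A\sim\D_A$ only through $A\cap S_i$, whose marginal distribution is by definition $\D_A^{S_i}$. Substituting this identification turns the probability in~\eqref{eq:uncontentious_preceding_elements} into the one already established, and the claimed lower bound of $\alpha$ follows.

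The only real obstacle is bookkeeping: freely switching between the full instance $(\M,\D_A)$ and its restriction $(\M_{S_i},\D_A^{S_i})$, namely identifying $\spa_{\M_{S_i}}$ with $\spa_\M$ on subsets of $S_i$ and identifying the marginal $A\cap S_i$ of a sample from $\D_A$ with a sample from $\D_A^{S_i}$. Both are immediate from the lemmas already in hand, so no new technical ingredient is required and the corollary drops out of the induction.
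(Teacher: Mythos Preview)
Your proposal is correct and follows essentially the same approach as the paper's own proof: both iterate from $i=n$ down to $1$, invoke Lemma~\ref{lem:uncontentious_marginal} to obtain $\alpha$-uncontentiousness of $\D_A^{S_i}$ for $\M_{S_i}$, apply Lemma~\ref{lem:uncontentious_preceding_elements} to find $\pi(i)$, use Lemma~\ref{lem:matroid_properties}-\ref{fact:matroid_restriction} to pass from $\spa_{\M_{S_i}}$ to $\spa_{\M}$, and then identify $S_i=\pi^i$ together with the marginal $A\cap S_i\sim\D_A^{S_i}$ to recover inequality~\eqref{eq:uncontentious_preceding_elements}. The bookkeeping you flag---handling the conditioning event $\pi(i)\in A$ versus $\pi(i)\in A'$ and the intersection $A\cap\pref(\sigma,\pi(i))$ versus $A'\cap\pref(\sigma,\pi(i))$---is exactly what the paper spells out in its final chain of equalities.
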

\begin{proof}
For any $\alpha\in[0,1]$, given any matroid $\M\subseteq2^{[n]}$ and $\alpha$-uncontentious distribution $\D_A$ for $\M$, we can iteratively apply Lemma~\ref{lem:uncontentious_preceding_elements} to determine an order $\pi:[n]\to[n]$. Specifically, for each $i\in[n]$, suppose that elements $\pi(i+1),\dots,\pi(n)$ have already been determined, and let $S_i=[n]\setminus\{\pi(i+1),\dots,\pi(n)\}$. By Lemma~\ref{lem:uncontentious_marginal}, since $\D_A$ is $\alpha$-uncontentious for $\M$, the marginal distribution $\D_A^{S_i}$ is $\alpha$-uncontentious for the restriction $\M_{S_i}$. We apply Lemma~\ref{lem:uncontentious_preceding_elements} to matroid $\M_{S_i}$ and prior distribution $\D_A^{S_i}$, which implies that there exists an element $\pi(i)\in S_i$ such that ${\Pr}_{A'\sim \D_A^{S_i},\,\sigma\sim\cP(S_i)}[\pi(i)\notin\spa_{\M_{S_i}}(A'\cap\pref(\sigma,\pi(i)))\mid \pi(i)\in A']\ge \alpha$. Moreover, by Lemma~\ref{lem:matroid_properties}-\ref{fact:matroid_restriction}, $\pi(i)\notin \spa_{\M_{S_i}}(A'\cap\pref(\sigma,\pi(i)))$ is equivalent to $\pi(i)\notin \spa_{\M}(A'\cap\pref(\sigma,\pi(i)))$, and hence, we have that $\Pr_{A'\sim\D_A^{S_i},\,\sigma\sim\cP(S_i)}[\pi(i)\notin\spa_{\M}(A'\cap\pref(\sigma,\pi(i)))\mid \pi(i)\in A']\ge \alpha$.

By repeating the argument above (from $i=n$ to $i=1$), we can determine an order $\pi$ such that for all $i\in [n]$, ${\Pr}_{A'\sim \D_A^{S_i},\,\sigma\sim\cP(S_i)}[\pi(i)\notin\spa_{\M}(A'\cap\pref(\sigma,\pi(i)))\mid \pi(i)\in A']\ge \alpha$, which is exactly what Subroutine~\ref{sub:new_universal_order} does. This implies Ineq.~\eqref{eq:uncontentious_preceding_elements} because we have that
\begin{align*}
    &{\Pr}_{A'\sim \D_A^{S_i},\,\sigma\sim\cP(S_i)}[\pi(i)\notin\spa_{\M}(A'\cap\pref(\sigma,\pi(i)))\mid \pi(i)\in A']\\
    =&\,{\Pr}_{A'\sim \D_A^{\pi^i},\,\sigma\sim\cP(\pi^i)}[\pi(i)\notin\spa_{\M}(A'\cap\pref(\sigma,\pi(i)))\mid \pi(i)\in A'] &&\text{(Since $S_i=\pi^i$)}\\
    =&\,{\Pr}_{A\sim \D_A,\,\sigma\sim\cP(\pi^i)}[\pi(i)\notin\spa_{\M}(A\cap\pi^i\cap\pref(\sigma,\pi(i)))\mid \pi(i)\in A\cap\pi^i] &&\text{(By definition of $\D_A^{\pi^i}$)}\\
    =&\,{\Pr}_{A\sim \D_A,\,\sigma\sim\cP(\pi^i)}[\pi(i)\notin\spa_{\M}(A\cap\pref(\sigma,\pi(i)))\mid \pi(i)\in A],
\end{align*}
where the last equality follows from the facts that $\pref(\sigma,\pi(i))\subseteq\pi^i$ and that $\pi(i)\in\pi^i$.
\end{proof}

\subsection{Correlated subsampling}
In order to make use of Corollary~\ref{cor:preceding_elements}, given a permutation $\pi$ generated by Subroutine~\ref{sub:new_universal_order}, we need to sample a set $T\subseteq[n]$ such that for each $i\in[n]$, the subset $\pi^{i-1}\cap T$ follows the same distribution as the random set $\pref(\sigma,\pi(i))$, where $\sigma\sim\cP(\pi^{i})$. In Lemma~\ref{lem:correlated_sampling_preceding_elements}, we show that this can be achieved by first sampling a permutation $\sigma'\sim\cP([n+1])$ and then setting $T$ as $\pref(\sigma',n+1)$.
\begin{lemma}\label{lem:correlated_sampling_preceding_elements}
Suppose that we are given a permutation $\pi:[n]\to[n]$, and we sample a permutation $\sigma'\sim\cP([n+1])$ and let $T=\pref(\sigma',n+1)$. Then, for all $i\in[n]$, the subset $T_{i-1}:=\pi^{i-1}\cap T$ follows the same distribution as the random set $\pref(\sigma,\pi(i))$, where $\sigma\sim\cP(\pi^{i})$. Moreover, for all $i\in[n]$ and $S\subseteq\pi^{i-1}$, we have that
\begin{equation}\label{eq:P_pi_i_in_T}
\Pr[\pi(i)\in T\mid T_{i-1}=S]=\frac{|S|+1}{i+1}.
\end{equation}
\end{lemma}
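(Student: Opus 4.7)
The plan is to use the standard fact that if $\sigma' \sim \cP([n+1])$ and $U \subseteq [n+1]$, then the relative order induced by $\sigma'$ on the elements of $U$ is itself a uniform random permutation of $U$. I would apply this with $U = \pi^{i-1} \cup \{n+1\}$ (of size $i$) for claim 1, and with $U = \pi^i \cup \{n+1\}$ (of size $i+1$) for claim 2.

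For claim 1, note that since $T = \pref(\sigma', n+1)$, the set $T_{i-1} = T \cap \pi^{i-1}$ is precisely the set of elements of $\pi^{i-1}$ preceding $n+1$ in the induced uniform random permutation of $\pi^{i-1} \cup \{n+1\}$. Letting $p$ denote the position of $n+1$ in this induced permutation, $p$ is uniform on $\{1,\ldots,i\}$, and conditional on $p=k$, the preceding elements form a uniformly random $(k-1)$-subset of $\pi^{i-1}$. This also describes $\pref(\sigma,\pi(i))$ for $\sigma \sim \cP(\pi^i)$ exactly: the position of $\pi(i)$ in $\sigma$ is uniform on $\{1,\ldots,i\}$, and conditional on this being $k$, the preceding elements form a uniformly random $(k-1)$-subset of $\pi^i \setminus \{\pi(i)\} = \pi^{i-1}$. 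Hence the two distributions coincide.

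For claim 2, fix $S \subseteq \pi^{i-1}$ with $|S|=j$, and compute the ratio
\[
\Pr[\pi(i) \in T \mid T_{i-1} = S] \;=\; \frac{\Pr[T_{i-1}=S,\,\pi(i)\in T]}{\Pr[T_{i-1}=S]}.
\]
Applying the observation to $\pi^{i-1} \cup \{n+1\}$, $\Pr[T_{i-1}=S]$ is the probability that in a uniform random permutation of $\pi^{i-1}\cup\{n+1\}$ the positions $1,\ldots,j$ contain $S$ (in some order) and position $j+1$ is $n+1$, which equals $\frac{j!(i-1-j)!}{i!}$. Applying it instead to $\pi^i \cup \{n+1\}$, $\Pr[T_{i-1}=S,\,\pi(i)\in T]$ is the probability that positions $1,\ldots,j+1$ contain $S\cup\{\pi(i)\}$ (in some order) and position $j+2$ is $n+1$, which equals $\frac{(j+1)!(i-1-j)!}{(i+1)!}$. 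The ratio simplifies to $\frac{j+1}{i+1} = \frac{|S|+1}{i+1}$, as claimed.

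I do not anticipate any real obstacle: both claims reduce to elementary combinatorial counts once the restriction-of-uniform-permutation fact is invoked. The only care needed is to consistently track cardinalities when comparing the two restricted permutations and to verify that the conditioning on $\pi(i) \in T$ corresponds exactly to placing $\pi(i)$ among the elements preceding $n+1$ in the induced order on $\pi^i \cup \{n+1\}$.
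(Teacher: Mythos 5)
Your proof is correct and follows essentially the same route as the paper: both arguments rest on the fact that a uniform random permutation of $[n+1]$ induces a uniform random permutation on any subset, applied to $\pi^{i-1}\cup\{n+1\}$ for the first claim and to $\pi^{i}\cup\{n+1\}$ for the second. The only cosmetic difference is that for Eq.~\eqref{eq:P_pi_i_in_T} you evaluate the conditional probability by directly counting permutations in the numerator and denominator, whereas the paper reaches the same ratio $\frac{|S|+1}{i+1}$ via an equivalent insertion argument (inserting $\pi(i)$ uniformly into one of the $i+1$ positions of a permutation of $\pi^{i-1}\cup\{n+1\}$).
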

\begin{proof}
Given any permutation $\pi:[n]\to[n]$, for any $i\in[n]$, we show that the outputs of the following three random processes have the same distribution:
\begin{enumerate}
    \item[(1)] Sample $\sigma\sim\cP(\pi^{i})$ and output $\pref(\sigma,\pi(i))$.
    \item[(2)] Sample $\sigma''\sim\cP(\pi^{i-1}\cup\{n+1\})$ and output $\pref(\sigma'',n+1)$.
    \item[(3)] Sample $\sigma'\sim\cP([n+1])$ and output $\pref(\sigma',n+1)\cap(\pi^{i-1}\cup\{n+1\})$.
\end{enumerate}
Specifically, the second random process is equivalent to the first, because it merely relabels $\pi(i)$ with $n+1$. The third random process is equivalent to the second, because sampling a uniformly random permutation of $\pi^{i-1}\cup\{n+1\}$ is the same as first sampling a uniformly random permutation of $[n+1]$ and then keeping only the elements of $\pi^{i-1}\cup\{n+1\}$. Thus, the outputs of all three processes follow the same distribution.

Moreover, notice that the output of the third process is equal to $\pref(\sigma',n+1)\cap\pi^{i-1}$ because $n+1\notin\pref(\sigma',n+1)$. We let $T=\pref(\sigma',n+1)$ and $T_{i-1}=T\cap\pi^{i-1}$ as in the lemma statement. Then, the subset $T_{i-1}$, which is equal to the output of the third process, follows the same distribution as the output of the first process, which is $\pref(\sigma,\pi(i))$.

To prove Eq.~\eqref{eq:P_pi_i_in_T}, we let $T'=\pref(\sigma',n+1)\cap(\pi^{i}\cup\{n+1\})$. Analogous to the equivalence between the second and the third random processes above, we observe that $T'$ follows the same distribution as the random set $\pref(\tau,n+1)$, where $\tau\sim\cP(\pi^{i}\cup\{n+1\})$. Furthermore, we note that $\pi(i)\in T$ iff $\pi(i)\in T'$, and that $\pi^{i-1}\cap T=\pi^{i-1}\cap T'$. Hence, we have that for any $S\subseteq\pi^{i-1}$,
\begin{align}\label{eq:insert_pi_i}
    \Pr[\pi(i)\in T\mid \pi^{i-1}\cap T=S]&=\Pr[\pi(i)\in T'\mid \pi^{i-1}\cap T'=S]\nonumber\\
    &=\Pr_{\tau\sim\cP(\pi^{i}\cup\{n+1\})}[\pi(i)\in\pref(\tau,n+1)\mid \pi^{i-1}\cap\pref(\tau,n+1)=S].
\end{align}

We observe that sampling $\tau\sim\cP(\pi^{i}\cup\{n+1\})$ is equivalent to the following sampling process: First, sample a permutation $\tau\sim\cP(\pi^{i-1}\cup\{n+1\})$. Then, sample $j\in [i+1]$ uniformly at random. If $j<i+1$, insert element $\pi(i)$ in front of the $j$-th element in permutation $\tau$. If $j=i+1$, insert element $\pi(i)$ after the last element in permutation $\tau$.

Notice that in the second step of this sampling process, there are $|\pi^{i-1}\cap\pref(\tau,n+1)|+1$ positions before element $n+1$ and $i+1$ positions in total, where element $\pi(i)$ can possibly be inserted. It follows that for any $S\subseteq\pi^{i-1}$,
\[
    \Pr_{\tau\sim\cP(\pi^{i}\cup\{n+1\})}[\pi(i)\in\pref(\tau,n+1)\mid \pi^{i-1}\cap\pref(\tau,n+1)=S]=\frac{|S|+1}{i+1},
\]
and hence, by Eq.~\eqref{eq:insert_pi_i}, we have that
\[
    \Pr[\pi(i)\in T\mid \pi^{i-1}\cap T=S]=\frac{|S|+1}{i+1},
\]
which implies Eq.~\eqref{eq:P_pi_i_in_T} since $T_{i-1}=\pi^{i-1}\cap T$.
\end{proof}

\subsection{Constructing our universal OCRS with correlated subsampling}
We are ready to present our improved universal OCRS with correlated subsampling (Algorithm~\ref{alg:universal_ocrs_correlated_subsampling}). Algorithm~\ref{alg:universal_ocrs_correlated_subsampling} first runs Subroutine~\ref{sub:new_universal_order} to preselect an order $\pi$, and then, it samples a set $T$ according to Lemma~\ref{lem:correlated_sampling_preceding_elements}. Finally, it iterates through all elements following order $\pi$, and for each $i\in[n]$, it selects element $\pi(i)$ if $\pi(i)$ is selectable and belongs to $T$.
\begin{algorithm}[ht]
\SetAlgoLined
\SetKwInOut{Input}{Input}
\SetKwInOut{Output}{Output}
\Input{Matroid $\M\subseteq 2^{[n]}$, $\alpha\ge0$, $\alpha$-uncontentious distribution $\D_A$ for $\M$, and $A\sim\D_A$}
\Output{$X_n\subseteq A$ such that $X_n\in\M$}
\SetAlgorithmName{Algorithm}~~
\SetKw{Continue}{continue}
 Run Subroutine~\ref{sub:new_universal_order} on input $(\M,\alpha,\D_A)$, which outputs a permutation $\pi:[n]\to[n]$\;
 Sample a permutation $\sigma'\sim\cP([n+1])$\;
 $T\gets \pref(\sigma',n+1)$ and $X_0\gets\emptyset$\;
 \For{$i=1,\dots,n$}{
    $X_i\gets X_{i-1}$\;
    \If{$\pi(i)\in A\cap T$ \textnormal{\textbf{and}} $X_{i-1}\cup\{\pi(i)\}\in\M$\label{algline:selection_rule_correlated_subsampling}}{
        $X_i\gets X_{i-1}\cup\{\pi(i)\}$\;
    }
 }
 \Return $X_n$\;
 \caption{\textsc{Universal-OCRS-with-Correlated-Subsampling}}
 \label{alg:universal_ocrs_correlated_subsampling}
\end{algorithm}

In Theorem~\ref{thm:universal_ocrs_correlated_subsampling}, we show that Algorithm~\ref{alg:universal_ocrs_correlated_subsampling} is a universal OCRS, and in Section~\ref{sec:tight_instances}, we prove that the universality guarantee in Theorem~\ref{thm:universal_ocrs_correlated_subsampling} is essentially tight for Algorithm~\ref{alg:universal_ocrs_correlated_subsampling}.
\begin{theorem}\label{thm:universal_ocrs_correlated_subsampling}
Algorithm~\ref{alg:universal_ocrs_correlated_subsampling} is an $(\alpha,\frac{\alpha^2}{2})$-universal OCRS for any $\alpha\in[0,1]$.
\end{theorem}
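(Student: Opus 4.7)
The plan is to generalize the argument for Theorem~\ref{thm:universal_ocrs_independent_subsampling}, with Corollary~\ref{cor:preceding_elements} in place of Lemma~\ref{lem:uncontentious_subsampled_elements}, carefully tracking the dependence---quantified by Lemma~\ref{lem:correlated_sampling_preceding_elements}---between the event ``$\pi(i)\in T$'' and the set $T_{i-1}:=T\cap\pi^{i-1}$. As in the earlier proof, the greedy rule on Line~\ref{algline:selection_rule_correlated_subsampling} ensures that $X_{i-1}$ is a maximal $\M$-independent subset of $A\cap T_{i-1}$, so $\spa_{\M}(X_{i-1})=\spa_{\M}(A\cap T_{i-1})$. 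Consequently $\pi(i)\in X_n$ iff $\pi(i)\in A\cap T$ and $\pi(i)\notin\spa_{\M}(A\cap T_{i-1})$, so it suffices to establish
\[
\Pr[\pi(i)\in T,\,\pi(i)\notin\spa_{\M}(A\cap T_{i-1})\mid \pi(i)\in A]\ge \frac{\alpha^2}{2}.
\]

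The next step is to condition on $J:=|T_{i-1}|$. Lemma~\ref{lem:correlated_sampling_preceding_elements} together with Eq.~\eqref{eq:P_pi_i_in_T} implies that $J$ is uniform on $\{0,1,\dots,i-1\}$; given $J=j$, $T_{i-1}$ is a uniformly random $j$-subset of $\pi^{i-1}$; and $\Pr[\pi(i)\in T\mid J=j]=\frac{j+1}{i+1}$ depends only on $j$, so $\pi(i)\in T$ is conditionally independent of $T_{i-1}$ given $J$. Define
\[
f(j):=\Pr[\pi(i)\notin\spa_{\M}(A\cap T_{i-1})\mid \pi(i)\in A,\,J=j].
\]
A monotone coupling of uniformly random $j$- and $(j{+}1)$-subsets of $\pi^{i-1}$, combined with monotonicity of span (Lemma~\ref{lem:matroid_properties}-\ref{fact:span_monotone}), shows that $f$ is non-increasing in $j$; and Corollary~\ref{cor:preceding_elements} gives $\sum_{j=0}^{i-1}f(j)\ge\alpha i$. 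Assembling these facts,
\[
\Pr[\pi(i)\in T,\,\pi(i)\notin\spa_{\M}(A\cap T_{i-1})\mid \pi(i)\in A]=\frac{1}{i(i+1)}\sum_{j=0}^{i-1}(j+1)\,f(j).
\]

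It then remains to prove the weighted-average inequality: whenever $f:\{0,\dots,i-1\}\to[0,1]$ is non-increasing with $\sum_j f(j)\ge\alpha i$, one has $\sum_{j=0}^{i-1}(j+1)f(j)\ge\tfrac{1}{2}\alpha i(\alpha i+1)$. I will prove this by a layer-cake rewriting: let $U\sim\mathrm{Uniform}[0,1]$ and $N:=|\{j:f(j)\ge U\}|$. Since $f$ is non-increasing, $\{j:f(j)\ge U\}=\{0,1,\dots,N-1\}$, so $\sum_j f(j)=\E[N]\ge\alpha i$ and $\sum_j(j+1)f(j)=\E\!\left[\tfrac{1}{2}N(N+1)\right]$. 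The map $x\mapsto\tfrac{1}{2}x(x+1)$ is convex and non-decreasing on $[0,\infty)$, so Jensen's inequality gives $\E[\tfrac{1}{2}N(N+1)]\ge\tfrac{1}{2}\E[N](\E[N]+1)\ge\tfrac{1}{2}\alpha i(\alpha i+1)$. Dividing by $i(i+1)$ and using $\alpha i+1\ge\alpha(i+1)$ (equivalent to $\alpha\le1$) yields the desired bound $\alpha^2/2$.

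The main obstacle is the weighted-average inequality above. The natural product-of-marginals estimate $\Pr[\pi(i)\in T]\cdot\Pr[\pi(i)\notin\spa_{\M}(A\cap T_{i-1})\mid \pi(i)\in A]=\tfrac{1}{2}\cdot\alpha$ does not yield a lower bound, because $\pi(i)\in T$ is positively correlated with $|T_{i-1}|$ by Eq.~\eqref{eq:P_pi_i_in_T} while the span-free event is anti-correlated with it by monotonicity of span, so the two events are \emph{negatively} correlated under the correlated subsampling. The layer-cake/Jensen step is precisely what converts the linear guarantee $\sum f(j)\ge\alpha i$ from Corollary~\ref{cor:preceding_elements} into the quadratic lower bound on $\sum(j+1)f(j)$ needed to recover the factor $\alpha^2/2$.
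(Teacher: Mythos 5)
Your proposal is correct and follows essentially the same route as the paper: the same reduction to bounding $\Pr[\pi(i)\in T,\,\pi(i)\notin\spa_{\M}(A\cap T_{i-1})\mid\pi(i)\in A]$, the same conditioning on $|T_{i-1}|$ via Lemma~\ref{lem:correlated_sampling_preceding_elements} to reach $\sum_{k=0}^{i-1}\frac{k+1}{i(i+1)}\cdot P_{i,k}$, and the same input $\sum_{k=0}^{i-1}\frac{1}{i}\cdot P_{i,k}\ge\alpha$ from Corollary~\ref{cor:preceding_elements}. The only divergence is the final combinatorial step: the paper's Fact~\ref{fact:fractional_knapsack} proves the weighted-sum bound for \emph{arbitrary} $x_k\in[0,1]$ by identifying the extremal prefix solution of the corresponding LP, whereas you first establish the extra (true, but unnecessary) monotonicity of $P_{i,k}$ in $k$ via a monotone coupling and then apply a layer-cake/Jensen argument---both yield the same $\frac{\alpha^2}{2}$.
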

Before we prove Theorem~\ref{thm:universal_ocrs_correlated_subsampling}, we note that unlike Algorithm~\ref{alg:universal_ocrs_independent_subsampling}, which is based on independent subsampling, Algorithm~\ref{alg:universal_ocrs_correlated_subsampling} employs a correlated subsampling method. Therefore, in Algorithm~\ref{alg:universal_ocrs_correlated_subsampling}, the event that element $\pi(i)$ is subsampled (i.e., $\pi(i)\in T$) is correlated with the event that element $\pi(i)$ is spanned by the partial solution set $X_{i-1}$ (which is a subset of $T$). Analyzing this correlation will be the main technicality of the proof. In the following, we state a useful inequality.
\begin{fact}\label{fact:fractional_knapsack}
For any integer $i\ge1$, for any real numbers $\alpha,x_0,\dots,x_{i-1}\in[0,1]$, if $\sum_{k=0}^{i-1}\frac{1}{i}\cdot x_{k}\ge\alpha$, then $\sum_{k=0}^{i-1} \frac{k+1}{i(i+1)}\cdot x_{k}\ge\frac{\alpha^2}{2}$.
\end{fact}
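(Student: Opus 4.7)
The plan is to reduce the inequality to a small linear program and finish with a short algebraic identity.

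First, I would reinterpret the statement as a constrained minimization: over the polytope $\{x\in[0,1]^i:\sum_{k=0}^{i-1}x_k\ge i\alpha\}$, minimize the linear functional $F(x):=\sum_{k=0}^{i-1}\frac{k+1}{i(i+1)}\,x_k$ and show the minimum is at least $\alpha^2/2$. Because the coefficients $\frac{k+1}{i(i+1)}$ are strictly increasing in $k$, a standard exchange argument (equivalently, an LP vertex analysis, since every vertex of this polytope is $0/1$ outside at most one coordinate) shows that the minimizer is the greedy assignment that fills in the smallest-coefficient coordinates first. Writing $m:=\lfloor i\alpha\rfloor$ and $f:=i\alpha-m\in[0,1)$, the minimizer is $x_0=\cdots=x_{m-1}=1$, $x_m=f$, and $x_k=0$ for $k>m$ (with the understanding that when $m=i$ and $f=0$, one simply sets all coordinates to $1$).

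Second, I would evaluate $F$ on this minimizer using $\sum_{k=0}^{m-1}(k+1)=m(m+1)/2$ to obtain
\[
F_{\min}\;=\;\frac{1}{i(i+1)}\Bigl(\tfrac{m(m+1)}{2}+(m+1)f\Bigr)\;=\;\frac{(m+1)(m+2f)}{2\,i(i+1)}.
\]
Thus it suffices to prove $(m+1)(m+2f)\ge i(i+1)\alpha^2$. Substituting $i\alpha=m+f$ and clearing denominators, this is equivalent to the polynomial inequality $i(m+1)(m+2f)\ge(i+1)(m+f)^2$.

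Third, I would verify this by direct expansion. The plan is to show
\[
i(m+1)(m+2f)-(i+1)(m+f)^{2}\;=\;i^{2}\alpha(1-\alpha)+i\,f(1-f),
\]
after re-substituting $m=i\alpha-f$ on the left-hand side. Both terms on the right are nonnegative because $\alpha,f\in[0,1]$, which closes the argument.

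I do not anticipate a real obstacle: steps one and two are routine LP/rearrangement observations, and step three is a mechanical expansion that becomes transparent after the substitution $i\alpha=m+f$. The only subtlety worth highlighting is that the uniform formula $\frac{(m+1)(m+2f)}{2\,i(i+1)}$ correctly covers both the generic fractional case and the boundary cases ($f=0$, including $\alpha=1$ with $m=i$); the inequality is in fact tight at $\alpha=1$, which is a useful sanity check on the whole calculation.
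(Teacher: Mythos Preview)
Your proposal is correct and follows essentially the same approach as the paper: both reduce to the LP, identify the greedy minimizer $x_0=\cdots=x_{m-1}=1$, $x_m=f$, $x_k=0$ for $k>m$, and compute $F_{\min}=\frac{(m+1)(m+2f)}{2i(i+1)}$. The only difference is in the final algebra: the paper splits into the cases $\alpha i\ge1$ and $\alpha i<1$ and bounds each separately, whereas your unified identity $i(m+1)(m+2f)-(i+1)(m+f)^2=i^2\alpha(1-\alpha)+if(1-f)$ (which I verified) handles both at once and is arguably cleaner.
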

We defer the proof of Fact~\ref{fact:fractional_knapsack} to Section~\ref{sec:supplementary_proofs} and proceed to the proof of Theorem~\ref{thm:universal_ocrs_correlated_subsampling}.
\begin{proof}[Proof of Theorem~\ref{thm:universal_ocrs_correlated_subsampling}]
For any $\alpha\in[0,1]$, given any matroid $\M$ and any $\alpha$-uncontentious distribution $\D_A$ for $\M$, we want to show that for all $i\in[n]$, Algorithm~\ref{alg:universal_ocrs_correlated_subsampling} selects element $\pi(i)$ with probability at least $\frac{\alpha^2}{2}$ conditioned on $\pi(i)$ being active. We denote $T_{i}:=\pi^{i}\cap T$ for all $i\in\{0,1,\dots,n\}$. For readability, we will omit the random sets $A$, $T$ and $T_i$'s from the subscripts of $\Pr$ in the proof.

First, notice that by Line~\ref{algline:selection_rule_correlated_subsampling}, for all $i\in[n]$, when element $\pi(i)$ is active, it is selected if $\pi(i)\in T_i$ and $\pi(i)\notin\spa_{\M}(X_{i-1})$. We observe that $X_{i-1}\subseteq A\cap T_{i-1}$, and thus, by Lemma~\ref{lem:matroid_properties}-\ref{fact:span_monotone}, $\pi(i)\notin\spa_{\M}(A\cap T_{i-1})$ implies $\pi(i)\notin\spa_{\M}(X_{i-1})$. Hence, for any $i\in[n]$, to prove that element $\pi(i)$ is selected with probability at least $\frac{\alpha^2}{2}$ conditioned on it being active, it suffices to show that
\begin{equation}\label{eq:probability_of_pi(i)_being_selected}
\Pr[\pi(i)\in T_i,\,\pi(i)\notin\spa_{\M}(A\cap T_{i-1})\mid \pi(i)\in A]\ge\frac{\alpha^2}{2}.
\end{equation}
To this end, for any $i\in[n]$, we first decompose $\Pr[\pi(i)\in T_i,\,\pi(i)\notin\spa_{\M}(A\cap T_{i-1})\mid \pi(i)\in A]$ as
\begin{align*}
    &\Pr[\pi(i)\in T_i,\,\pi(i)\notin\spa_{\M}(A\cap T_{i-1})\mid \pi(i)\in A]\nonumber\\
    =&\sum_{S\subseteq\pi^{i-1}} \Pr[T_{i-1}=S,\,\pi(i)\in T_i,\,\pi(i)\notin\spa_{\M}(A\cap S)\mid \pi(i)\in A]\nonumber\\
    &\quad\text{(Since $T_{i-1}\subseteq\pi^{i-1}$)}\nonumber\\
    =&\sum_{S\subseteq\pi^{i-1}} \Pr[T_{i-1}=S\mid \pi(i)\in A]\times\Pr[\pi(i)\in T_i,\,\pi(i)\notin\spa_{\M}(A\cap S)\mid \pi(i)\in A,\,T_{i-1}=S]\nonumber\\
    =&\sum_{S\subseteq\pi^{i-1}} \Pr[T_{i-1}=S]\times\Pr[\pi(i)\in T_i,\,\pi(i)\notin\spa_{\M}(A\cap S)\mid \pi(i)\in A,\,T_{i-1}=S]\nonumber\\
    &\quad\text{(Since $T_{i-1}$ is independent of $A$)}\nonumber\\
    =&\sum_{S\subseteq\pi^{i-1}} \Pr[T_{i-1}=S]\times\Pr[\pi(i)\in T_i\mid \pi(i)\notin\spa_{\M}(A\cap S),\,\pi(i)\in A,\,T_{i-1}=S]\nonumber\\
    &\times\Pr[\pi(i)\notin\spa_{\M}(A\cap T_{i-1})\mid \pi(i)\in A,\,T_{i-1}=S]\nonumber\\
    =&\sum_{S\subseteq\pi^{i-1}} \Pr[T_{i-1}=S]\times\Pr[\pi(i)\in T_i\mid T_{i-1}=S]\nonumber\\
    &\times\Pr[\pi(i)\notin\spa_{\M}(A\cap T_{i-1})\mid \pi(i)\in A,\,T_{i-1}=S]\nonumber\\
    &\quad\text{(Since $T_i$ is independent of $A$)}\nonumber\\
    =&\sum_{S\subseteq\pi^{i-1}} \Pr[T_{i-1}=S]\times\frac{|S|+1}{i+1}\times\Pr[\pi(i)\notin\spa_{\M}(A\cap S)\mid \pi(i)\in A,\,T_{i-1}=S]\nonumber\\
    &\quad\text{(By Eq.~\eqref{eq:P_pi_i_in_T} in Lemma~\ref{lem:correlated_sampling_preceding_elements})}.
\end{align*}
Then, we group the subsets $S\subseteq\pi^{i-1}$ according to their sizes, and we have that
\begin{align}\label{eq:decomposed_probability_of_being_selected}
    &\Pr[\pi(i)\in T_i,\,\pi(i)\notin\spa_{\M}(A\cap T_{i-1})\mid \pi(i)\in A]\nonumber\\
    =&\sum_{k=0}^{i-1}\sum_{\substack{S\subseteq\pi^{i-1}\\\textrm{s.t. }|S|=k}} \Pr[T_{i-1}=S]\times\frac{k+1}{i+1}\times\Pr[\pi(i)\notin\spa_{\M}(A\cap S)\mid \pi(i)\in A,\,T_{i-1}=S]\nonumber\\
    =&\sum_{k=0}^{i-1}\sum_{\substack{S\subseteq\pi^{i-1}\\\textrm{s.t. }|S|=k}}\frac{k+1}{i+1}\times\Pr[T_{i-1}=S\mid \pi(i)\in A]\times\Pr[\pi(i)\notin\spa_{\M}(A\cap S)\mid \pi(i)\in A,\,T_{i-1}=S]\nonumber\\
    &\quad\text{(Since $T_{i-1}$ is independent of $A$)}\nonumber\\
    =&\sum_{k=0}^{i-1}\sum_{\substack{S\subseteq\pi^{i-1}\\\textrm{s.t. }|S|=k}}\frac{k+1}{i+1}\times\Pr[\pi(i)\notin\spa_{\M}(A\cap S),\,T_{i-1}=S\mid \pi(i)\in A]\nonumber\\
    =&\sum_{k=0}^{i-1}\frac{k+1}{i+1}\times\Pr[\pi(i)\notin\spa_{\M}(A\cap T_{i-1}),\,|T_{i-1}|=k\mid \pi(i)\in A]\nonumber\\
    =&\sum_{k=0}^{i-1} \frac{k+1}{i+1}\times\Pr[|T_{i-1}|=k\mid\pi(i)\in A]\times\Pr[\pi(i)\notin\spa_{\M}(A\cap T_{i-1})\mid \pi(i)\in A,\,|T_{i-1}|=k]\nonumber\\
    =&\sum_{k=0}^{i-1} \frac{k+1}{i+1}\times\Pr[|T_{i-1}|=k]\times\Pr[\pi(i)\notin\spa_{\M}(A\cap T_{i-1})\mid \pi(i)\in A,\,|T_{i-1}|=k]\nonumber\\
    &\quad\text{(Since $T_{i-1}$ is independent of $A$)}\nonumber\\
    =&\sum_{k=0}^{i-1} \frac{k+1}{i+1}\times\Pr_{\sigma\sim\cP(\pi^i)}[|\pref(\sigma,\pi(i))|=k]\nonumber\\
    &\times\Pr_{\sigma\sim\cP(\pi^i)}[\pi(i)\notin\spa_{\M}(A\cap \pref(\sigma,\pi(i)))\mid \pi(i)\in A,\,|\pref(\sigma,\pi(i))|=k]\nonumber\\
    &\quad\text{(By Lemma~\ref{lem:correlated_sampling_preceding_elements}, $T_{i-1}$ follows the same distribution as $\pref(\sigma,\pi(i))$, where $\sigma\sim\cP(\pi^i)$)}\nonumber\\
    =&\sum_{k=0}^{i-1} \frac{k+1}{i+1}\times\frac{1}{i}\times\Pr_{\sigma\sim\cP(\pi^i)}[\pi(i)\notin\spa_{\M}(A\cap \pref(\sigma,\pi(i)))\mid \pi(i)\in A,\,|\pref(\sigma,\pi(i))|=k],
\end{align}
where the last equality is because the number of elements that precede element $\pi(i)$ in a uniformly random permutation of $\pi^i$ follows the uniform distribution over $\{0,\dots,i-1\}$. For simplicity, we denote $P_{i,k}:=\Pr_{\sigma\sim\cP(\pi^i)}[\pi(i)\notin\spa_{\M}(A\cap \pref(\sigma,\pi(i)))\mid \pi(i)\in A,\,|\pref(\sigma,\pi(i))|=k]$ for all $i\in[n]$ and $k\in\{0,\dots,i-1\}$, and then Eq.~\eqref{eq:decomposed_probability_of_being_selected} simplifies to
\begin{equation}\label{eq:decomposed_probability_of_being_selected_simplified}
    \Pr[\pi(i)\in T_i,\,\pi(i)\notin\spa_{\M}(A\cap T_{i-1})\mid \pi(i)\in A]=\sum_{k=0}^{i-1} \frac{k+1}{i(i+1)}\cdot P_{i,k}.
\end{equation}
Next, we decompose $\Pr_{\sigma\sim\cP(\pi^i)}[\pi(i)\notin\spa_{\M}(A\cap \pref(\sigma,\pi(i)))\mid \pi(i)\in A]$ as follows,
\begin{align}\label{eq:decomposed_probabilty_of_not_being_spanned}
&\Pr_{\sigma\sim\cP(\pi^i)}[\pi(i)\notin\spa_{\M}(A\cap \pref(\sigma,\pi(i)))\mid \pi(i)\in A]\nonumber\\
=&\sum_{k=0}^{i-1}\Pr_{\sigma\sim\cP(\pi^i)}[|\pref(\sigma,\pi(i))|=k,\,\pi(i)\notin\spa_{\M}(A\cap \pref(\sigma,\pi(i)))\mid \pi(i)\in A]\nonumber\\
&\quad\text{(The number of elements preceding element $\pi(i)$ in $\sigma$ is at most $|\pi^i|-1=i-1$)}\nonumber\\
=&\sum_{k=0}^{i-1}\Pr_{\sigma\sim\cP(\pi^i)}[|\pref(\sigma,\pi(i))|=k\mid \pi(i)\in A]\nonumber\\
&\times\Pr_{\sigma\sim\cP(\pi^i)}[\pi(i)\notin\spa_{\M}(A\cap \pref(\sigma,\pi(i)))\mid \pi(i)\in A,\,|\pref(\sigma,\pi(i))|=k]\nonumber\\
=&\sum_{k=0}^{i-1}\Pr_{\sigma\sim\cP(\pi^i)}[|\pref(\sigma,\pi(i))|=k]\times P_{i,k}\nonumber\\
&\quad\text{(By definition of $P_{i,k}$ and the fact that $\pref(\sigma,\pi(i))$ is independent of $A$)}\nonumber\\
=&\sum_{k=0}^{i-1}\frac{1}{i}\cdot P_{i,k}.
\end{align}
Because Corollary~\ref{cor:preceding_elements} guarantees that $\Pr_{\sigma\sim\cP(\pi^i)}[\pi(i)\notin\spa_{\M}(A\cap \pref(\sigma,\pi(i)))\mid \pi(i)\in A]\ge\alpha$, by Eq.~\eqref{eq:decomposed_probabilty_of_not_being_spanned}, we have that $\sum_{k=0}^{i-1}\frac{1}{i}\cdot P_{i,k}\ge\alpha$. Finally, it follows by Fact~\ref{fact:fractional_knapsack} that $\sum_{k=0}^{i-1} \frac{k+1}{i(i+1)}\cdot P_{i,k}\ge\frac{\alpha^2}{2}$, which implies Ineq.~\eqref{eq:probability_of_pi(i)_being_selected} by Eq.~\eqref{eq:decomposed_probability_of_being_selected_simplified}.
\end{proof}

\section{An optimal universal OCRS via linear programming}\label{sec:ocrs_lp}
In this section, we present an LP that computes a universal OCRS with nearly optimal universality guarantee in the preselected-order model, and we solve this LP efficiently using the ellipsoid method. In fact, this approach was originally developed by~\citet{chekuri2014submodular} to compute optimal offline CRSs for product distributions. We show that it applies naturally to ordered OCRSs for (arbitrarily correlated) uncontentious distributions.

\begin{algorithm}[ht]
\SetAlgoLined
\SetKwInOut{Input}{Input}
\SetKwInOut{Output}{Output}
\Input{Matroid $\M\subseteq 2^{[n]}$, prior distribution $\D_A$, and $A\sim\D_A$}
\Output{$X\subseteq A$ such that $X\in\M$}
\SetAlgorithmName{Algorithm}~~
\SetKw{Continue}{continue}
$X\gets\emptyset$\;
 \For{$i=1,\dots,n$}{
    \If{$\pi(i)\in A$ \textnormal{\textbf{and}} $X\cup\{\pi(i)\}\in\M$}{
        $X\gets X\cup\{\pi(i)\}$\;
    }
 }
 \Return $X$\;
 \caption{Deterministic ordered OCRS $\phi_{\pi}$}
 \label{alg:phi_pi}
\end{algorithm}
We start by introducing some notations which we will use in this section. We let $\cS_n$ denote the set of all permutations of $[n]$. For each permutation $\pi\in\cS_n$, we let 
$\phi_{\pi}$ denote the deterministic ordered OCRS with preselected order $\pi$ (Algorithm~\ref{alg:phi_pi}). Moreover, we define a permutation $\pi_w$ for each weight vector $w\in\R_{\ge0}^n$ as follow:
\begin{definition}\label{def:pi_w}
Given any weight vector $w\in\R_{\ge0}^n$, we assign weight $w_i$ to each element $i\in[n]$, and we permute elements of $[n]$ in the decreasing order of their weights (with arbitrary tie-breaking for equal weights), and then we let $\pi_w$ denote this permutation.
\end{definition}
In the following lemma, we establish a key property of OCRS $\phi_{\pi}$.
\begin{lemma}\label{lem:phi_pi}
For any matroid $\M\subseteq2^{[n]}$ and any prior distribution $\D_A\in\Delta(2^{[n]})$, for any weight vector $w\in\R_{\ge0}^n$ and any permutation $\pi\in\cS_n$, we have that
\[\E_{A\sim\D_A}\left[\sum\nolimits_{i\in\phi_{\pi}(A)} w_i\right]\le\E_{A\sim\D_A}\left[\sum\nolimits_{i\in\phi_{\pi_w}(A)} w_i\right]=\E_{A\sim\D_A}[r_{\M,w}(A)].\]
\end{lemma}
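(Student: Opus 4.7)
The plan is to prove both parts pointwise in $A$ (i.e., for each fixed realization $A\subseteq[n]$) and then take expectations. First, I observe that Algorithm~\ref{alg:phi_pi} maintains the invariant $X\in\M$ throughout, and only adds elements of $A$, so $\phi_\pi(A)\subseteq A$ and $\phi_\pi(A)\in\M$ for every permutation $\pi$ and every $A$. By the definition of the weighted rank function, this immediately gives
\[
\sum\nolimits_{i\in\phi_\pi(A)} w_i \;\le\; r_{\M,w}(A) \qquad \text{for every } A\subseteq[n],
\]
which after taking expectations yields the left inequality of the lemma (once we know the right equality).

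The key step is therefore the equality $\sum_{i\in\phi_{\pi_w}(A)}w_i = r_{\M,w}(A)$, which is precisely the classical greedy theorem for matroids: running the greedy algorithm in decreasing order of $w$ on the restricted matroid $\M_A$ produces a maximum-weight basis of $A$. I would invoke this standard result directly (citing \citet{welsh2010matroid}), since Algorithm~\ref{alg:phi_pi} with $\pi=\pi_w$ is exactly the greedy procedure (it skips elements outside $A$, and among those in $A$ processes them in decreasing weight order, adding each one that maintains independence). If one wished to be self-contained, the standard exchange proof would proceed as follows: let $Y^*\subseteq A$ be any independent set maximizing $\sum_{i\in Y^*}w_i$, and let $G=\phi_{\pi_w}(A)$. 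Order both sets in decreasing weight order, and use the augmentation property (Definition~\ref{def:matroid}.iii) applied iteratively to show that the $k$-th element of $G$ has weight at least that of the $k$-th element of $Y^*$, yielding $\sum_{i\in G}w_i\ge\sum_{i\in Y^*}w_i$; combined with the reverse inequality from the first paragraph, this gives the equality.

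Combining the pointwise equality with the pointwise inequality and taking expectations over $A\sim\D_A$ gives the full chain
\[
\E_{A\sim\D_A}\!\left[\sum\nolimits_{i\in\phi_\pi(A)}w_i\right] \le \E_{A\sim\D_A}[r_{\M,w}(A)] = \E_{A\sim\D_A}\!\left[\sum\nolimits_{i\in\phi_{\pi_w}(A)}w_i\right].
\]
There is no real obstacle here; the content is essentially a restatement of the matroid greedy theorem, and the only thing to be careful about is the correspondence between Algorithm~\ref{alg:phi_pi} (which processes a fixed permutation of all of $[n]$ and filters by membership in $A$) and the usual greedy algorithm on $\M_A$, which is immediate since elements outside $A$ are simply skipped at the \textbf{if} check.
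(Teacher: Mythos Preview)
Your proposal is correct and follows essentially the same approach as the paper's proof: both argue pointwise in $A$ that $\phi_\pi(A)\in\M$ and $\phi_\pi(A)\subseteq A$ give the inequality, and that the greedy-in-decreasing-weight procedure realizes $r_{\M,w}(A)$ (citing \citet{welsh2010matroid}) gives the equality, then take expectations. The only difference is cosmetic ordering and your optional sketch of the exchange argument, which the paper omits.
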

\begin{proof}
Given any weight vector $w\in\R_{\ge0}^n$, we assign weight $w_i$ to each element $i\in[n]$.
We first prove that $\E_{A\sim\D_A}[\sum_{i\in\phi_{\pi_w}(A)} w_i]=\E_{A\sim\D_A}[r_{\M,w}(A)]$. We observe that given a set of active elements $A\subseteq[n]$ as input, OCRS $\phi_{\pi_w}$ visits elements of $A$ in the decreasing order of their weights and greedily selects each element that can be selected without violating the matroid constraint. It is well-known that this greedy procedure selects a maximum-weight subset of $A$ that satisfies the matroid constraint (see e.g.,~\citet[Chapter 19]{welsh2010matroid}). Therefore, for any $A\subseteq[n]$, we have that $\sum_{i\in\phi_{\pi_w}(A)} w_i=r_{\M,w}(A)$. It follows that $\E_{A\sim\D_A}[\sum_{i\in\phi_{\pi_w}(A)} w_i]=\E_{A\sim\D_A}[r_{\M,w}(A)]$.

Next, we show that for any permutation $\pi\in\cS_n$, $\E_{A\sim\D_A}[\sum_{i\in\phi_{\pi}(A)} w_i]\le\E_{A\sim\D_A}[r_{\M,w}(A)]$. We observe that given any input set of active elements $A\subseteq[n]$, OCRS $\phi_{\pi}$ guarantees that $\phi_{\pi}(A)\subseteq A$ and $\phi_{\pi}(A)\in\M$. Therefore, for any $A\subseteq[n]$, we have that $\sum_{i\in\phi_{\pi}(A)} w_i\le r_{\M,w}(A)$. It follows that $\E_{A\sim\D_A}[\sum_{i\in\phi_{\pi}(A)} w_i]\le\E_{A\sim\D_A}[r_{\M,w}(A)]$.
\end{proof}
\subsection{Formulating the LP}
Now we formulate an LP that computes an $\alpha$-balanced OCRS for any matroid $\M$ and any $\alpha$-uncontentious distribution $\D_A$ for $\M$. The resulting OCRS will be a random mixture of OCRSs $\phi_{\pi}$ for various permutations $\pi\in\cS_n$. Specifically, we let $x_i:=\Pr_{A\sim\D_A}[i\in A]$ and $q_{i,\pi}:=\Pr_{A\sim\D_A}[i\in\phi_{\pi}(A)]$ for all $i\in[n]$ and $\pi\in\cS_n$, and we consider the following LP (LP) and its dual (DP).
\begin{align}\label{eq:greedy_crs_lp}
    \textrm{(LP)}\qquad\max_{\beta,\,\lambda_{\pi}}&\,\beta\nonumber\\
    \textrm{s.t. }& \sum_{\pi\in\cS_n} q_{i,\pi}\lambda_{\pi}\ge\beta x_i \quad\,\,\,\,\forall i\in [n]\nonumber\\
    & \sum_{\pi\in\cS_n}\lambda_{\pi}=1\nonumber\\
    & \lambda_{\pi}\ge 0 \qquad\qquad\qquad\,\,\forall\pi\in\cS_n.\nonumber\\
    \textrm{(DP)}\qquad\min_{\gamma,\,\mu_i}&\,\gamma\nonumber\\
    \textrm{s.t. }& \sum_{i\in[n]} q_{i,\pi}\mu_i\le\gamma \qquad\,\,\,\,\,\,\forall\pi\in\cS_n\nonumber\\
    & \sum_{i\in[n]}x_i\mu_i=1\nonumber\\
    & \mu_i\ge 0 \qquad\qquad\qquad\,\,\,\forall i\in [n].
\end{align}

We note that every feasible solution $(\beta,\lambda_{\pi})$ to (LP) corresponds to a $\beta$-balanced OCRS with preselected order for $(\M,\D_A)$. Indeed, because of the last two constraints in (LP), variables $\lambda_{\pi}$ for all $\pi\in \cS_n$ together specify a distribution $\D_{\pi}$ over $\cS_n$. We observe that $\phi_{\pi}$ with $\pi\sim\D_{\pi}$ is a randomized OCRS with preselected order for $(\M,\D_A)$, and the first constraint in (LP) ensures that this OCRS is $\beta$-balanced.

Next, we show that if the prior distribution $\D_A$ is $\alpha$-uncontentious for matroid $\M$, then the optimal value of (DP) is at least $\alpha$. By LP duality, the optimal value of (LP) is also at least $\alpha$, and hence, the optimal solution to (LP) corresponds to an $\alpha$-balanced OCRS with preselected order for $(\M,\D_A)$.
\begin{lemma}\label{lem:greedy_crs_lp}
For any $\alpha\in[0,1]$, if the prior distribution $\D_A$ is $\alpha$-uncontentious for matroid $\M$, then any vector $\mu\in\R^n$ that satisfies the last two constraints in (DP) in Eq.~\eqref{eq:greedy_crs_lp} must also satisfy that $\sum_{i\in[n]} q_{i,\pi_{\mu}}\mu_i\ge\alpha$, where permutation $\pi_{\mu}$ is defined in Definition~\ref{def:pi_w}. This implies that the optimal value of (DP) is at least $\alpha$.
\end{lemma}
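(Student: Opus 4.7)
The plan is to unpack $\sum_{i\in[n]} q_{i,\pi_{\mu}}\mu_i$ as an expectation over $\D_A$ and then chain together Lemma~\ref{lem:phi_pi} and Lemma~\ref{lem:rank_of_uncontentious_active_set}, treating $\mu$ itself as the weight vector. Specifically, by definition of $q_{i,\pi_\mu}$, I would rewrite
\[
\sum_{i\in[n]} q_{i,\pi_{\mu}}\mu_i \;=\; \sum_{i\in[n]}\mu_i\cdot\Pr_{A\sim\D_A}[i\in\phi_{\pi_\mu}(A)] \;=\; \E_{A\sim\D_A}\!\left[\sum\nolimits_{i\in\phi_{\pi_\mu}(A)}\mu_i\right].
\]
Since $\pi_\mu$ is the permutation that sorts elements in decreasing order of the weights $\mu_i$, Lemma~\ref{lem:phi_pi} (applied with $w=\mu$) gives that this expectation equals $\E_{A\sim\D_A}[r_{\M,\mu}(A)]$, i.e., the greedy ordered OCRS with order $\pi_\mu$ picks a maximum $\mu$-weight independent subset of $A$.

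Next, because $\D_A$ is $\alpha$-uncontentious for $\M$ and $\mu\in\R_{\ge0}^n$, Lemma~\ref{lem:rank_of_uncontentious_active_set} yields
\[
\E_{A\sim\D_A}[r_{\M,\mu}(A)] \;\ge\; \alpha\cdot\E_{A\sim\D_A}\!\left[\sum\nolimits_{i\in A}\mu_i\right] \;=\; \alpha\cdot\sum_{i\in[n]} x_i\mu_i \;=\; \alpha,
\]
where the last equality uses the dual normalization constraint $\sum_i x_i\mu_i=1$. Combining the two displays establishes $\sum_{i\in[n]}q_{i,\pi_\mu}\mu_i\ge\alpha$, which is the main claim.

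For the implication about the optimal value of (DP), I would observe that any feasible $(\gamma,\mu)$ satisfies the primary constraint $\sum_i q_{i,\pi}\mu_i\le\gamma$ for every $\pi\in\cS_n$, in particular for the specific permutation $\pi=\pi_\mu$. Plugging in the bound just proved gives $\gamma\ge\alpha$, so every feasible value of (DP) is at least $\alpha$.

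There is essentially no real obstacle here; the lemma is a packaging step. The only thing to be careful about is invoking Lemma~\ref{lem:phi_pi} with the correct weight vector (namely $\mu$ itself, not some rescaled version) and remembering that the dual normalization $\sum_i x_i\mu_i=1$ is exactly what converts the factor $\E_{A\sim\D_A}[\sum_{i\in A}\mu_i]$ produced by Lemma~\ref{lem:rank_of_uncontentious_active_set} into the desired constant $1$. The permutation $\pi_\mu$ may not be unique when $\mu$ has ties, but any tie-breaking works since Lemma~\ref{lem:phi_pi} does not depend on the choice.
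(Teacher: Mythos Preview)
Your proposal is correct and follows essentially the same approach as the paper's proof: rewrite $\sum_i q_{i,\pi_\mu}\mu_i$ as $\E_{A\sim\D_A}[\sum_{i\in\phi_{\pi_\mu}(A)}\mu_i]$, apply Lemma~\ref{lem:phi_pi} with $w=\mu$ to identify this with $\E_{A\sim\D_A}[r_{\M,\mu}(A)]$, then invoke Lemma~\ref{lem:rank_of_uncontentious_active_set} and the normalization $\sum_i x_i\mu_i=1$ to obtain the bound, and finally instantiate the first dual constraint at $\pi=\pi_\mu$ to conclude $\gamma\ge\alpha$.
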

\begin{proof}
For any vector $\mu\in\R^n$ that satisfies the last two constraints in (DP), by Lemma~\ref{lem:phi_pi}, we have that $\E_{A\sim\D_A}[\sum_{i\in\phi_{\pi_{\mu}}(A)} \mu_i]=\E_{A\sim\D_A}[r_{\M,\mu}(A)]$. Since $\D_A$ is $\alpha$-uncontentious, by Lemma~\ref{lem:rank_of_uncontentious_active_set}, we have that $\E_{A\sim\D_A}[r_{\M,\mu}(A)]\ge\alpha\cdot\E_{A\sim\D_A}[\sum_{i\in A} \mu_i]$. It follows that $\E_{A\sim\D_A}[\sum_{i\in\phi_{\pi_{\mu}}(A)} \mu_i]\ge\alpha\cdot\E_{A\sim\D_A}[\sum_{i\in A} \mu_i]$, which is equivalent to $\sum_{i\in[n]} q_{i,\pi_{\mu}}\mu_i\ge \alpha\cdot\sum_{i\in[n]}x_i\mu_i$. Notice that $\sum_{i\in[n]}x_i\mu_i=1$ by the second constraint in (DP). Therefore, we have that $\sum_{i\in[n]} q_{i,\pi_{\mu}}\mu_i\ge\alpha$ for any vector $\mu$ that satisfies the last two constraints in (DP). This implies that any feasible solution $(\gamma,\mu_i)$ to (DP) must satisfy $\gamma\ge\alpha$ because of the constraint $\sum_{i\in[n]} q_{i,\pi_{\mu}}\mu_i\le\gamma$. It follows that the optimal value of (DP) is at least $\alpha$.
\end{proof}

\subsection{Solving the LP efficiently}\label{subsec:solving_greedy_crs_lp}
We have shown that for any matroid $\M$ and any $\alpha$-uncontentious distribution $\D_A$ for $\M$, we can find an $\alpha$-balanced OCRS for $(\M,\D_A)$ by solving (LP) in Eq.~\eqref{eq:greedy_crs_lp}. Solving (LP) directly is not computationally efficient because there are super-exponentially many variables $\lambda_{\pi}$. However, we can use the ellipsoid method~\citep{grotschel2012geometric} to solve (DP) in Eq.~\eqref{eq:greedy_crs_lp}.

To apply the ellipsoid method to solve (DP), we need to construct a separation oracle, which given any $\gamma\in\R$ and $\mu\in\R_{\ge0}^n$ such that $\sum_{i\in[n]}x_i\mu_i=1$, checks whether there is a permutation $\pi\in\cS_n$ such that $\sum_{i\in[n]} q_{i,\pi}\mu_i>\gamma$, and if so, outputs the constraint $\sum_{i\in[n]} q_{i,\pi}\mu_i\le\gamma$. We notice that by Lemma~\ref{lem:phi_pi}, $\E_{A\sim\D_A}[\sum_{i\in\phi_{\pi}(A)} \mu_i]\le\E_{A\sim\D_A}[\sum_{i\in\phi_{\pi_{\mu}}(A)} \mu_i]$ for all permutations $\pi\in\cS_n$, which is equivalent to $\sum_{i\in[n]} q_{i,\pi}\mu_i\le\sum_{i\in[n]} q_{i,\pi_{\mu}}\mu_i$ for all $\pi\in\cS_n$. Therefore, we can implement the separation oracle efficiently as follows: Given $\gamma\in\R$ and $\mu\in\R_{\ge0}^n$ such that $\sum_{i\in[n]}x_i\mu_i=1$, the oracle checks whether $\sum_{i\in[n]} q_{i,\pi_{\mu}}\mu_i>\gamma$, and if so, outputs the constraint $\sum_{i\in[n]} q_{i,\pi_{\mu}}\mu_i\le\gamma$.

Given this separation oracle, we can solve (DP) in Eq.~\eqref{eq:greedy_crs_lp} efficiently using the ellipsoid method, which identifies a polynomial number of dual constraints that certify the optimal value of (DP). Then, we can solve (LP) efficiently by restricting it to the variables that correspond to the dual constraints identified by the ellipsoid method. The only issue is that the coefficients $x_i$ and $q_{i,\pi}$ in the LP constraints are not necessarily known. However, this can be addressed by estimating the coefficients \emph{on demand} using Monte-Carlo sampling (i.e., we estimate coefficients $x_i$ for all $i\in[n]$ at the beginning, and we estimate coefficients $q_{i,\pi_{\mu}}$ for all $i\in[n]$ only when the ellipsoid method queries the separation oracle with input $\mu\in\R_{\ge0}^n$ and certain $\gamma\in\R$) and solving the approximate versions of the LPs. We defer the details to Section~\ref{sec:approximate_lp_solving} and state the guarantee of the resulting OCRS in Theorem~\ref{thm:universal_ocrs_linear_programming}.
\begin{theorem}\label{thm:universal_ocrs_linear_programming}
For any $\alpha,\eps\in(0,1]$, there exists an $(\alpha,(1-\eps)\alpha)$-universal OCRS with preselected order, which given input matroid $\M\subseteq2^{[n]}$ and $\alpha$-uncontentious prior distribution $\D_A$ for $\M$, runs in $O\left(\poly\left(\frac{n}{\alpha\cdot\eps\cdot p_{\min}}\right)\cdot(t_{\D_A}+t_{\M})\right)$ time, where $p_{\min}:=\min_{i\in[n]}\Pr_{A\sim\D_A}[i\in A]$, and $t_{\D_A},\,t_{\M}$ are the time it takes to generate a sample from $\D_A$ and to check whether a set of elements belongs to $\M$ respectively.
\end{theorem}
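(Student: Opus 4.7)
The plan is to solve the dual (DP) in Eq.~\eqref{eq:greedy_crs_lp} approximately by the ellipsoid method, extract a polynomial-size restricted primal (LP), and solve that to obtain the distribution $\D_{\pi}$ over permutations that drives the OCRS. The decisive structural fact, already established in Lemma~\ref{lem:greedy_crs_lp} and Lemma~\ref{lem:phi_pi}, is that for any candidate dual vector $\mu\in\R_{\ge0}^n$ satisfying $\sum_i x_i\mu_i=1$, the greedy-order permutation $\pi_\mu$ from Definition~\ref{def:pi_w} attains $\max_{\pi\in\cS_n}\sum_i q_{i,\pi}\mu_i$. Hence an \emph{exact} separation oracle for (DP) at input $(\gamma,\mu)$ need only build $\pi_\mu$ (an $O(n\log n)$ sort) and compare $\sum_i q_{i,\pi_\mu}\mu_i$ with $\gamma$. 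Running ellipsoid with this oracle certifies the optimum of (DP) via a polynomial list of permutations $\pi^{(1)},\dots,\pi^{(N)}$; by LP duality, restricting (LP) to the variables $\lambda_{\pi^{(1)}},\dots,\lambda_{\pi^{(N)}}$ preserves the optimum, and this polynomial-size LP is then solved directly.

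The main obstacle is that the coefficients $x_i=\Pr_{A\sim\D_A}[i\in A]$ and $q_{i,\pi}=\Pr_{A\sim\D_A}[i\in\phi_\pi(A)]$ are not known in closed form; we only have sample access to $\D_A$ and membership access to $\M$. The plan is to replace them by Monte-Carlo estimates $\widetilde x_i,\widetilde q_{i,\pi}$, computed \emph{on demand}: draw $m=\poly(n/(\alpha\eps p_{\min}))$ samples $A^{(1)},\dots,A^{(m)}\sim\D_A$ once at the start to form every $\widetilde x_i$, and, whenever the separation oracle is invoked on some $\mu$, simulate $\phi_{\pi_\mu}$ (Algorithm~\ref{alg:phi_pi}) on the same samples to form $\widetilde q_{i,\pi_\mu}$. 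Using $x_i\ge p_{\min}$ and a Chernoff bound (Lemma~\ref{lem:concentration}), each estimate is within a multiplicative $\eps'=\Theta(\eps)$ of its true value with probability $1-n^{-\omega(1)}$; a union bound over the polynomially many ellipsoid queries makes this accuracy simultaneous. This is precisely where the technique borrowed from~\citet{lee2018optimal} comes in, and where the inverse dependence on $p_{\min}$ becomes unavoidable (as Example~\ref{ex:p_min} already shows).

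Using the sampled oracle, we run ellipsoid on a slightly relaxed dual: separate at threshold $\gamma+\eta$ rather than $\gamma$, where $\eta=\Theta(\eps\alpha)$ absorbs the estimation error. Lemma~\ref{lem:greedy_crs_lp} together with the Chernoff concentration then implies that any $(\gamma,\mu)$ which the sampled oracle declares feasible has true value $\gamma\ge(1-O(\eps))\alpha$, and vice versa. Ellipsoid terminates in polynomial time with a witness set $\pi^{(1)},\dots,\pi^{(N)}$ with $N=\poly(n/(\alpha\eps p_{\min}))$; solving the restricted approximate primal with the same $\widetilde x_i$, $\widetilde q_{i,\pi^{(j)}}$ yields $\lambda^\ast$ and $\beta^\ast\ge(1-O(\eps))\alpha$. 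Rescaling $\eps$ by a constant and invoking the empirical accuracy of $\widetilde x_i,\widetilde q_{i,\pi^{(j)}}$ converts this into a genuine $(1-\eps)\alpha$-balanced guarantee under the true distribution $\D_A$.

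The hard parts are threefold and interact: first, proving that the sampled separation oracle is still compatible with ellipsoid (it is, because it is consistent with a fixed—random but realized—polyhedron whose optimum is $(1\pm O(\eps))\alpha$); second, choosing $m$ so that with high probability \emph{all} $O(\poly(n))$ distinct estimates encountered throughout the run are simultaneously accurate; and third, translating ``$(1-O(\eps))\alpha$ on the sampled LP'' to ``$(1-\eps)\alpha$ on the true instance'' via a clean error-propagation step on the first LP constraint $\sum_\pi q_{i,\pi}\lambda_\pi\ge\beta x_i$. Once these are in place, the resulting algorithm preselects $\pi\sim\D_{\pi^\ast}$ and then runs Algorithm~\ref{alg:phi_pi}, giving an $(\alpha,(1-\eps)\alpha)$-universal OCRS with preselected order in time $O(\poly(n/(\alpha\eps p_{\min}))\cdot(t_{\D_A}+t_{\M}))$, as claimed.
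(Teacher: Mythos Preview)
Your proposal is correct and follows essentially the same route as the paper (Section~\ref{sec:ocrs_lp} together with Appendix~\ref{sec:approximate_lp_solving}): run ellipsoid on the dual using the greedy separation oracle $\pi_\mu$, estimate the coefficients $x_i$ and $q_{i,\pi}$ by Monte-Carlo on demand, extract the polynomial list of permutations that certify the dual value, and solve the restricted primal. One detail to tighten: the union bound cannot be taken merely over the polynomially many ellipsoid queries, since those queries are adaptive in the samples; the paper handles this by union-bounding over all $n!$ permutations (setting $\delta=\eps/(n(n!+1))$), which costs only an extra $\log(n!)=O(n\log n)$ factor in the sample count, and correspondingly uses additive error $\eps\alpha x_i$ for $\tilde q_{i,\pi}$ rather than multiplicative error (which would fail when $q_{i,\pi}$ is small).
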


\section{From secretary algorithms to universal OCRSs (efficiently)}\label{sec:secretary_to_ocrs}
In this section, we show how to use linear programming to efficiently compute a universal OCRS for any arrival model, given a constant-competitive matroid secretary algorithm for that model. Briefly, in the \emph{matroid secretary problem}, we are given a matroid $\M\subseteq2^{[n]}$ and a weight vector $w\in\R_{\ge0}^{n}$. At the beginning, a matroid secretary algorithm knows\footnote{We note that many matroid secretary algorithms in the literature do not require full knowledge of $\M$ from the outset. Our result in this section also applies to these algorithms.} only $\M$ but not $w$. Then, elements in $[n]$ arrive in a certain order according to the arrival model. Upon the arrival of each element $i\in[n]$, its weight $w_i$ is revealed, and the algorithm must decide immediately and irrevocably whether to select the element. The goal of the algorithm is to select a set of elements $X\in\M$ with maximum total weight. We say that the algorithm is $c$-competitive if for any input matroid $\M\subseteq2^{[n]}$ and weight vector $w\in\R_{\ge0}^{n}$, it guarantees that $\E[\sum_{i\in X}w_i]\ge c\cdot r_{\M,w}([n])$, where the expectation is taken over the randomness of the algorithm (and possibly the arrival model).

Given a matroid secretary algorithm $\alg$ in any arrival model (we assume w.l.o.g.~that $\alg$ only selects elements with strictly positive weights), for each weight vector $w\in\R_{\ge0}^{n}$, we construct an OCRS $\D_{\phi}^{(\alg,w)}$ in the same arrival model as $\alg$: Given input matroid $\M\subseteq2^{[n]}$ and a set of active elements $A\subseteq[n]$, OCRS $\D_{\phi}^{(\alg,w)}$ provides $\M$ as the input matroid to algorithm $\alg$. Suppose that elements in $[n]$ arrive in an order $\pi:[n]\to[n]$ according to the arrival model of $\alg$ (it is possible that $\pi$ is chosen randomly and adaptively by $\alg$). For each $i\in[n]$, when element $\pi(i)$ arrives, OCRS $\D_{\phi}^{(\alg,w)}$ checks whether $\pi(i)\in A$. If $\pi(i)\in A$, it presents element $\pi(i)$ with weight $w_{\pi(i)}$ to algorithm $\alg$; otherwise it presents $\pi(i)$ with weight $0$ to $\alg$. OCRS $\D_{\phi}^{(\alg,w)}$ selects element $\pi(i)$ if and only if algorithm $\alg$ selects $\pi(i)$.

This OCRS was originally constructed in the proof of~\citet[Theorem 4.1]{dughmi2020outer}. We state its key property in the following lemma.

\begin{lemma}[{\citet[Lemma 4.3]{dughmi2020outer}}]\label{lem:D_phi_A_w}
For any $\alpha,c\in[0,1]$, if the matroid secretary algorithm $\alg$ is $c$-competitive, then given any input matroid $\M\subseteq2^{[n]}$ and $\alpha$-uncontentious prior distribution $\D_A$ for $\M$, for any weight vector $w\in\R_{\ge0}^n$, OCRS $\D_{\phi}^{(\alg,w)}$ guarantees that 
\[
\E_{A\sim\D_A,\phi\sim\D_{\phi}^{(\alg,w)}}\left[\sum\nolimits_{i\in\phi(A)} w_i\right]\ge c\cdot\alpha\cdot\E_{A\sim\D_A}\left[\sum\nolimits_{i\in A} w_i\right].
\]
\end{lemma}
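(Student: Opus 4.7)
The plan is to view the interaction between the OCRS and the secretary algorithm as simply running $\alg$ on the matroid $\M$ with a modified weight vector that zeros out inactive elements, and then apply the competitive guarantee of $\alg$ together with Lemma~\ref{lem:rank_of_uncontentious_active_set}.

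More concretely, for any fixed active set $A\subseteq[n]$, define the weight vector $w^{(A)}\in\R_{\ge0}^n$ by $w^{(A)}_i := w_i$ if $i\in A$ and $w^{(A)}_i := 0$ otherwise. By construction of $\D_{\phi}^{(\alg,w)}$, the algorithm $\alg$ is executed on the matroid $\M$ with weights $w^{(A)}$. Since we assumed w.l.o.g.~that $\alg$ selects only elements with strictly positive weight, we have $\phi(A)\subseteq A$ for every $\phi\sim\D_{\phi}^{(\alg,w)}$, so $\sum_{i\in\phi(A)} w_i = \sum_{i\in\phi(A)} w^{(A)}_i$. Moreover, because zero-weight elements cannot contribute to a maximum-weight independent set, $r_{\M,w^{(A)}}([n])=r_{\M,w}(A)$. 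The $c$-competitiveness of $\alg$ then gives
\[
\E_{\phi\sim\D_{\phi}^{(\alg,w)}}\!\left[\sum\nolimits_{i\in\phi(A)} w_i\right] \;\ge\; c\cdot r_{\M,w^{(A)}}([n]) \;=\; c\cdot r_{\M,w}(A).
\]

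Next, I would take the expectation of this inequality over $A\sim\D_A$ and invoke Lemma~\ref{lem:rank_of_uncontentious_active_set}: since $\D_A$ is $\alpha$-uncontentious for $\M$, we have $\E_{A\sim\D_A}[r_{\M,w}(A)]\ge\alpha\cdot\E_{A\sim\D_A}[\sum_{i\in A}w_i]$. Chaining these two inequalities yields
\[
\E_{A\sim\D_A,\,\phi\sim\D_{\phi}^{(\alg,w)}}\!\left[\sum\nolimits_{i\in\phi(A)} w_i\right] \;\ge\; c\cdot\E_{A\sim\D_A}[r_{\M,w}(A)] \;\ge\; c\cdot\alpha\cdot\E_{A\sim\D_A}\!\left[\sum\nolimits_{i\in A} w_i\right],
\]
which is exactly the desired bound.

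The only subtle points are the two identifications $\phi(A)\subseteq A$ and $r_{\M,w^{(A)}}([n])=r_{\M,w}(A)$; both rely on the assumption that $\alg$ never selects zero-weight elements (else a ``free'' element outside $A$ could be picked, breaking the reduction). I do not anticipate any real obstacle, as everything else is a direct substitution of the competitiveness guarantee and an application of a lemma already established earlier in the paper.
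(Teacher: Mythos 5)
Your proof is correct and complete; the two identifications you flag ($\phi(A)\subseteq A$ and $r_{\M,w^{(A)}}([n])=r_{\M,w}(A)$) are exactly the points that the paper's w.l.o.g.\ assumption (that $\alg$ never selects zero-weight elements) and the downward-closure of $\M$ are there to handle. The paper itself states this lemma without proof, importing it from \citet{dughmi2020outer}, and your argument is precisely the standard one: condition on $A$, apply $c$-competitiveness to the instance $(\M,w^{(A)})$, then average over $A$ and invoke Lemma~\ref{lem:rank_of_uncontentious_active_set}.
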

\subsection{Formulating the LP}
Now we formulate an LP that given a $c$-competitive matroid secretary algorithm $\alg$, computes a nearly $(c\cdot\alpha)$-balanced OCRS for any matroid $\M$ and any $\alpha$-uncontentious distribution $\D_A$ for $\M$. The resulting OCRS will be a random mixture of OCRSs $\D_{\phi}^{(\alg,w)}$ for various weight vectors $w\in\R_{\ge0}^n$. Specifically, we let $x_i:=\Pr_{A\sim\D_A}[i\in A]$ and $q_{i,w}:=\Pr_{A\sim\D_A,\phi\sim\D_{\phi}^{(\alg,w)}}[i\in\phi(A)]$ for all $i\in[n]$ and $w\in\R_{\ge0}^{n}$, and we define $W_{\eps}:=\left\{\frac{\eps\cdot i}{n}\bigm\vert i\in\left\{0,\dots,\ceil{\frac{n}{\eps\cdot p_{\min}}}\right\}\right\}$, where $\eps>0$ is a parameter which we can choose arbitrarily, and $p_{\min}:=\min_{i\in[n]}\Pr_{A\sim\D_A}[i\in A]$ (we assume that $p_{\min}>0$ as in the preliminary). We consider the following LP (LP1) and its dual (DP1).
\begin{align}\label{eq:secretary_to_crs_lp_1}
    \textrm{(LP1)}\qquad\max_{\beta,\,\lambda_w}&\,\beta\nonumber\\
    \textrm{s.t. }& \sum_{w\in W_{\eps}^n} q_{i,w}\lambda_w\ge\beta x_i \quad\forall i\in [n]\nonumber\\
    & \sum_{w\in W_{\eps}^n}\lambda_w=1\nonumber\\
    & \lambda_w\ge 0 \qquad\qquad\qquad\,\,\forall w\in W_{\eps}^n.\nonumber\\
    \textrm{(DP1)}\qquad\min_{\gamma,\,\mu_i}&\,\gamma\nonumber\\
    \textrm{s.t. }& \sum_{i\in[n]} q_{i,w}\mu_i\le\gamma \qquad\,\,\,\,\,\,\forall w\in W_{\eps}^n\nonumber\\
    & \sum_{i\in[n]}x_i\mu_i=1\nonumber\\
    & \mu_i\ge 0 \qquad\qquad\qquad\,\,\,\,\forall i\in [n].
\end{align}

We note that every feasible solution $(\beta,\lambda_{w})$ to (LP1) corresponds to a $\beta$-balanced OCRS for $(\M,\D_A)$ in the same arrival model as algorithm $\alg$. Indeed, because of the last two constraints in (LP1), variables $\lambda_{w}$ for all $w\in W_{\eps}^n$ together specify a distribution $\D_{w}$ over $W_{\eps}^n$. We observe that $\D_{\phi}^{(\alg,w)}$ with $w\sim\D_{w}$ is an OCRS for $(\M,\D_A)$ in the same arrival model as algorithm $\alg$, and the first constraint in (LP1) ensures that this OCRS is $\beta$-balanced.

Next, we show that if the matroid secretary algorithm $\alg$ is $c$-competitive, and the prior distribution $\D_A$ is $\alpha$-uncontentious for matroid $\M$, then the optimal value of (DP1) is at least $(1-\eps)c\cdot\alpha$. By LP duality, the optimal value of (LP1) is also at least $(1-\eps)c\cdot\alpha$, and hence, the optimal solution to (LP1) corresponds to an $((1-\eps)c\cdot\alpha)$-balanced OCRS for $(\M,\D_A)$ in the same arrival model as algorithm $\alg$.

\begin{lemma}\label{lem:secretary_to_crs_dual}
For any $\alpha,c,\eps\in(0,1]$, if the matroid secretary algorithm $\alg$ is $c$-competitive, then for any matroid $\M\subseteq2^{[n]}$ and $\alpha$-uncontentious prior distribution $\D_A$ for $\M$, any vector $\mu\in\R^n$ that satisfies the last two constraints in (DP1) must also satisfy that $\sum_{i\in[n]} q_{i,\mu'}\mu_i\ge(1-\eps)c\cdot\alpha$, where $\mu'\in W_{\eps}^n$ is defined as follows:
\begin{equation}\label{eq:mu'}
\mu'_i=\max\{y\in W_{\eps}\mid y\le\mu_i\} \textrm{ for all } i\in[n].
\end{equation}
This implies that the optimal value of (DP1) is at least $(1-\eps)c\cdot\alpha$.
\end{lemma}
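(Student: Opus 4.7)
The plan is to reduce the claim to Lemma~\ref{lem:D_phi_A_w} by applying it with the rounded weight vector $\mu'$ as weights, and then argue that the rounding loss is at most a factor of $(1-\eps)$.

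First, I would apply Lemma~\ref{lem:D_phi_A_w} with $w := \mu'$, where $\mu' \in W_{\eps}^n$ is the coordinate-wise rounding of $\mu$ from Eq.~\eqref{eq:mu'}. Since $\alg$ is $c$-competitive and $\D_A$ is $\alpha$-uncontentious for $\M$, this yields
\[
\sum_{i\in[n]} q_{i,\mu'}\mu'_i \;\ge\; c\cdot\alpha\cdot\sum_{i\in[n]} x_i\,\mu'_i.
\]
Next, because $0\le\mu'_i\le\mu_i$ for every $i$ and $q_{i,\mu'}\ge0$, replacing $\mu'_i$ by $\mu_i$ on the left only makes the sum larger, so $\sum_{i\in[n]} q_{i,\mu'}\mu_i\ge c\cdot\alpha\cdot\sum_{i\in[n]} x_i\,\mu'_i$.

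The remaining task is to show $\sum_{i\in[n]} x_i\,\mu'_i\ge 1-\eps$. For this I would first argue that $\mu'_i$ is well-defined by showing $\mu_i\le 1/p_{\min}$ for each $i$: from the constraint $\sum_j x_j \mu_j=1$ and $\mu_j\ge 0$ we get $x_i\mu_i\le 1$, hence $\mu_i\le 1/x_i\le 1/p_{\min}$. Since the largest element of $W_{\eps}$ is at least $1/p_{\min}$ by the definition $W_{\eps}=\{\eps i/n : i\in\{0,\dots,\lceil n/(\eps p_{\min})\rceil\}\}$, and consecutive grid points are spaced by $\eps/n$, rounding down costs at most $\eps/n$: $\mu'_i\ge \mu_i-\eps/n$. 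Combined with $x_i\le 1$ and $\sum_i x_i\le n$, this gives
\[
\sum_{i\in[n]} x_i\,\mu'_i \;\ge\; \sum_{i\in[n]} x_i\,\mu_i \;-\; \frac{\eps}{n}\sum_{i\in[n]} x_i \;\ge\; 1-\eps,
\]
using the second constraint in (DP1). Chaining the three inequalities yields $\sum_{i\in[n]} q_{i,\mu'}\mu_i\ge (1-\eps)c\cdot\alpha$.

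Finally, for the ``implies'' clause, I would observe that any feasible $(\gamma,\mu)$ of (DP1) satisfies the first constraint at the particular $w=\mu'\in W_{\eps}^n$, so $\gamma\ge\sum_{i\in[n]} q_{i,\mu'}\mu_i\ge (1-\eps)c\cdot\alpha$, which lower bounds the optimal value. The only step that requires any care is verifying that $\mu'$ really lives in $W_{\eps}^n$ (not truncated above), which is exactly where the upper bound $\mu_i\le 1/p_{\min}$ and the choice of the top of the grid $W_{\eps}$ come in; this is the main, but very mild, technical obstacle.
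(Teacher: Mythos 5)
Your proposal is correct and follows essentially the same route as the paper's proof: apply Lemma~\ref{lem:D_phi_A_w} with the rounded weights $\mu'$, use $\mu'_i\le\mu_i$ to pass from $\mu'$ to $\mu$ on the left, and bound the rounding loss via $\mu_i\le 1/x_i\le 1/p_{\min}$ and the grid spacing $\eps/n$ to get $\sum_i x_i\mu'_i\ge 1-\eps$. The only cosmetic difference is that you frame the upper bound $\mu_i\le 1/p_{\min}$ as a well-definedness issue for $\mu'$, whereas it is really what guarantees the rounding error is at most $\eps/n$ (well-definedness already follows from $0\in W_{\eps}$); the substance is identical.
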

\begin{proof}
Consider any vector $\mu\in\R^n$ that satisfies the last two constraints in (DP1). Let $\mu'\in W_{\eps}^n$ be the corresponding vector defined by Eq.~\eqref{eq:mu'}. First, we notice that by Eq.~\eqref{eq:mu'}, we have that $\mu_i'\le\mu_i$ for all $i\in[n]$, and it follows that
\begin{equation}\label{eq:mu'_proof_1}
    \sum_{i\in[n]} q_{i,\mu'}\mu_i\ge\sum_{i\in[n]} q_{i,\mu'}\mu_i'.
\end{equation}
Moreover, we show that $\mu_i'\ge\mu_i-\frac{\eps}{n}$ for all $i\in[n]$. Specifically, the last two constraints in (DP1) imply that $\mu_i\le\frac{1}{x_i}\le\frac{1}{p_{\min}}$. Notice that the largest number in $W_{\eps}$ is at least $\frac{1}{p_{\min}}\ge\mu_i$, and hence, there exists $j\in\left\{0,\dots,\ceil{\frac{n}{\eps\cdot p_{\min}}}\right\}$ such that $\frac{\eps\cdot j}{n}\le\mu_i<\frac{\eps(j+1)}{n}$. It follows that $\mu_i'=\frac{\eps\cdot j}{n}\ge\mu_i-\frac{\eps}{n}$ by Eq.~\eqref{eq:mu'}. Since $\mu_i'\ge\mu_i-\frac{\eps}{n}$ for all $i\in[n]$, we have that
\begin{equation}\label{eq:mu'_proof_2}
    \sum_{i\in[n]}x_i\mu'_i\ge\sum_{i\in[n]}x_i\mu_i-\sum_{i\in[n]}\frac{\eps x_i}{n}\ge\sum_{i\in[n]}x_i\mu_i-\sum_{i\in[n]}\frac{\eps}{n}=\sum_{i\in[n]}x_i\mu_i-\eps=1-\eps,
\end{equation}
where the last inequality is because $x_i\le 1$ for all $i\in[n]$, and the last equality follows from the second constraint of (DP1). Furthermore, by Lemma~\ref{lem:D_phi_A_w}, we have that $\E_{A\sim\D_A,\phi\sim\D_{\phi}^{(\alg,\mu')}}[\sum_{i\in\phi(A)} \mu'_i]\ge c\cdot\alpha\cdot\E_{A\sim\D_A}[\sum_{i\in A} \mu'_i]$, which is equivalent to 
\begin{equation}\label{eq:mu'_proof_3}
    \sum_{i\in[n]} q_{i,\mu'}\mu_i'\ge c\cdot\alpha\cdot\sum_{i\in[n]}x_i\mu'_i.
\end{equation}
Combining Ineq.~\eqref{eq:mu'_proof_1}-\eqref{eq:mu'_proof_3}, we get $\sum_{i\in[n]} q_{i,\mu'}\mu_i\ge\sum_{i\in[n]} q_{i,\mu'}\mu_i'\ge c\cdot\alpha\cdot\sum_{i\in[n]}x_i\mu'_i\ge(1-\eps)c\cdot\alpha$. This implies that any feasible solution $(\gamma,\mu_i)$ to (DP1) must satisfy $\gamma\ge(1-\eps)c\cdot\alpha$ because of the constraint $\sum_{i\in[n]} q_{i,\mu'}\mu_i\le\gamma$. It follows that the optimal value of (DP1) is at least $(1-\eps)c\cdot\alpha$.
\end{proof}
\subsection{Reducing the number of dual constraints}
We will not directly solve (DP1) in Eq.~\eqref{eq:secretary_to_crs_lp_1} using the ellipsoid method (because here we do not have a straightforward implementation of the separation oracle). Instead, we will use the ellipsoid method to reduce the number of constraints in (DP1) such that its optimal value remains at least $(1-2\eps)c\cdot\alpha$ (this technique was also used by~\citet{lee2018optimal} to compute optimal OCRSs for product distributions). Specifically, we consider the following polytope $Q_{\eps}$:
\[
    Q_{\eps}:=\{\mu\in\R_{\ge0}^n\mid \sum_{i\in[n]}x_i\mu_i=1,\,\sum_{i\in[n]} q_{i,w}\mu_i\le (1-2\eps)c\cdot\alpha\textrm{ for all }w\in W_{\eps}^n\}.
\]
If $\alpha,c,\eps\in(0,1]$, then by Lemma~\ref{lem:secretary_to_crs_dual}, any vector $\mu\in\R_{\ge0}^n$ such that $\sum_{i\in[n]}x_i\mu_i=1$ must satisfy that $\sum_{i\in[n]} q_{i,\mu'}\mu_i\ge (1-\eps)c\cdot\alpha$, where $\mu'\in W_{\eps}^n$ is defined in Eq.~\eqref{eq:mu'}. Therefore, polytope $Q_{\eps}$ is empty, and moreover, we can construct an efficient separation oracle for $Q_{\eps}$ as follows: Given any $\mu\in\R_{\ge0}^n$ such that $\sum_{i\in[n]}x_i\mu_i=1$, the oracle outputs the violated constraint $\sum_{i\in[n]} q_{i,\mu'}\mu_i\le (1-2\eps)c\cdot\alpha$ for $\mu'\in W_{\eps}^n$ given by Eq.~\eqref{eq:mu'}. Using this separation oracle, we can apply the ellipsoid method to identify a polynomial-size subset $W'\subseteq W_{\eps}^n$ such that the following polytope $Q_{\eps}'$ is empty:
\[
    Q_{\eps}':=\{\mu\in\R_{\ge0}^n\mid \sum_{i\in[n]}x_i\mu_i=1,\,\sum_{i\in[n]} q_{i,w}\mu_i\le (1-2\eps)c\cdot\alpha\textrm{ for all }w\in W'\}.
\]

Now we consider the following LP (LP2) and its dual (DP2), which are the reduced versions of (LP1) and (DP1) respectively:

\begin{align}\label{eq:secretary_to_crs_lp_2}
    \textrm{(LP2)}\qquad\max_{\beta,\,\lambda_w}&\,\beta\nonumber\\
    \textrm{s.t. }& \sum_{w\in W'} q_{i,w}\lambda_w\ge\beta x_i \quad\,\forall i\in [n]\nonumber\\
    & \sum_{w\in W'}\lambda_w=1\nonumber\\
    & \lambda_w\ge 0 \qquad\qquad\qquad\,\,\forall w\in W'.\nonumber\\
    \textrm{(DP2)}\qquad\min_{\gamma,\,\mu_i}&\,\gamma\nonumber\\
    \textrm{s.t. }& \sum_{i\in[n]} q_{i,w}\mu_i\le\gamma \qquad\,\,\,\,\,\,\forall w\in W'\nonumber\\
    & \sum_{i\in[n]}x_i\mu_i=1\nonumber\\
    & \mu_i\ge 0 \qquad\qquad\qquad\,\,\,\,\forall i\in [n].
\end{align}
Because polytope $Q_{\eps}'$ is empty, the optimal value of (DP2) is greater than $(1-2\eps)c\cdot\alpha$, and by LP duality, the optimal value of (LP2) is also greater than $(1-2\eps)c\cdot\alpha$.

Furthermore, note that (LP2) has polynomially many variables and constraints, and hence, its optimal solution, which we denote by $(\beta^*,\lambda^*_w)$, can be computed in polynomial time. By the last two constraints in (LP2), variables $\lambda^*_{w}$ for all $w\in W'$ together specify a distribution $\D_{w}^*$ over $W'$. We observe that $\D_{\phi}^{(\alg,w)}$ with $w\sim\D_{w}^*$ is an OCRS for $(\M,\D_A)$ in the same arrival model as the matroid secretary algorithm $\alg$. This OCRS is $\beta^*$-balanced because of the first constraint in (LP2). Thus, it is $((1-2\eps)c\cdot\alpha)$-balanced since $\beta^*>(1-2\eps)c\cdot\alpha$.

Finally, similar to Subsection~\ref{subsec:solving_greedy_crs_lp}, here we also need to estimate the coefficients $x_i$ and $q_{i,w}$ using Monte-Carlo sampling because they are not necessarily known. We defer the details to Section~\ref{sec:approximate_lp_solving} and summarize the result in Theorem~\ref{thm:secretary_to_ocrs}.
\begin{theorem}\label{thm:secretary_to_ocrs}
For any $\alpha,c,\eps\in(0,1]$, for any arrival model, if there is a $c$-competitive matroid secretary algorithm $\alg$, then there exists an $(\alpha,(1-\eps)c\cdot\alpha)$-universal OCRS, which given input matroid $\M\subseteq2^{[n]}$ and $\alpha$-uncontentious distribution $\D_A$ for $\M$, runs in $O\left(\poly\left(\frac{n}{\alpha\cdot c\cdot\eps\cdot p_{\min}}\right)\cdot(t_{\alg}+t_{\D_A})\right)$ time, where $p_{\min}:=\min_{i\in[n]}\Pr_{A\sim\D_A}[i\in A]$, and $t_{\alg}$ is the worst-case runtime of algorithm $\alg$ on matroid secretary problem instances specified by matroid $\M$ and weight vectors $w\in W_{\eps}^n$, and $t_{\D_A}$ is the time it takes to generate a sample from $\D_A$.
\end{theorem}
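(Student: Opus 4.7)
The plan is to assemble the ingredients already laid out in Section~\ref{sec:secretary_to_ocrs}, with the remaining work being the efficient (approximate) solution of (LP2) in Eq.~\eqref{eq:secretary_to_crs_lp_2}. At a high level, given input $(\M,\D_A)$, I would first run the ellipsoid method on the polytope $Q_{\eps}$, using the separation oracle that, on input $\mu$, computes $\mu'$ via Eq.~\eqref{eq:mu'} and returns the constraint indexed by $\mu'$ (whose violation is guaranteed by Lemma~\ref{lem:secretary_to_crs_dual}). After $\poly(n/\eps)$ iterations the ellipsoid method declares $Q_{\eps}$ empty while exposing a polynomial-size witness set $W'\subseteq W_{\eps}^n$ of weight vectors making $Q_{\eps}'$ empty. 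Solving the polynomial-size (LP2) with these $W'$ variables then yields $(\beta^*,\lambda^*_w)$ with $\beta^*>(1-2\eps)c\cdot\alpha$, and the final OCRS samples $w\sim\D_{w}^{*}$ and runs $\D_{\phi}^{(\alg,w)}$, which by Lemma~\ref{lem:D_phi_A_w} and the first constraint of (LP2) is $((1-2\eps)c\cdot\alpha)$-balanced.

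The real work is handling the fact that the coefficients $x_i=\Pr_{A\sim\D_A}[i\in A]$ and $q_{i,w}=\Pr_{A\sim\D_A,\phi\sim\D_{\phi}^{(\alg,w)}}[i\in\phi(A)]$ are not given explicitly. I would estimate all $x_i$'s once at the outset, and estimate the relevant $q_{i,w}$'s \emph{on demand}: whenever the ellipsoid method queries the separation oracle on some $\mu$, we compute the associated $\mu'$ and draw enough Monte-Carlo samples to estimate $\widehat{q}_{i,\mu'}$ for every $i\in[n]$. Analogously, when we finally solve (LP2) and need its coefficients, we reuse or draw fresh estimates. By Hoeffding (Lemma~\ref{lem:concentration}) and a union bound, taking $\poly(n/(\alpha c\eps p_{\min}))$ samples per coefficient yields additive accuracy $\delta=\Theta(\eps\alpha c\cdot p_{\min}/n)$ with probability $1-1/\poly(n)$, which translates to a multiplicative $(1\pm O(\eps))$ error in each ratio $q_{i,w}/x_i$ (using $x_i\ge p_{\min}$).

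With these estimates one runs the ellipsoid method on the \emph{approximate} polytope $\widehat{Q}_{\eps}$ obtained by substituting $\widehat{x}_i,\widehat{q}_{i,w}$ for $x_i,q_{i,w}$ and loosening the RHS by $\delta$; the perturbation is small enough that Lemma~\ref{lem:secretary_to_crs_dual} still certifies $\widehat{Q}_{\eps}$ is empty, and the ellipsoid output $W'$ of polynomial size certifies this. One then solves the approximate (LP2) (polynomial-size, with all estimated coefficients already in hand) to obtain $(\widehat{\beta}^*,\widehat{\lambda}^*_w)$; transferring back to the true coefficients introduces only an additional $O(\eps)$ multiplicative loss in $\beta$, so the resulting OCRS is $((1-O(\eps))c\cdot\alpha)$-balanced. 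Rescaling $\eps$ by a universal constant gives the stated $(1-\eps)c\cdot\alpha$ bound.

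I expect the main obstacle to be the error-propagation bookkeeping: quantifying exactly how estimation error in $\widehat{x}_i$ and $\widehat{q}_{i,w}$ shifts (i) the emptiness certificate of $Q_{\eps}$, (ii) the constraints encountered during the ellipsoid run, and (iii) the value of the final (LP2). All three steps are standard once one fixes the right sampling accuracy, which is why pushing the details into Section~\ref{sec:approximate_lp_solving} is natural. The runtime bound $\poly(n/(\alpha c\eps p_{\min}))\cdot(t_{\alg}+t_{\D_A})$ then follows by multiplying the $\poly(n/\eps)$ ellipsoid iterations, the $n$ coefficients estimated per iteration, the $\poly(n/(\alpha c\eps p_{\min}))$ samples per coefficient, and the $(t_{\alg}+t_{\D_A})$ cost per sample (each sample draws $A\sim\D_A$ and simulates $\alg$ on the induced weighted instance).
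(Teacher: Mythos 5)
Your proposal is correct and follows essentially the same route as the paper's Section~\ref{sec:approximate_lp_solving}: estimate $x_i$ upfront and $q_{i,w}$ on demand via Monte-Carlo sampling, run the ellipsoid method on a slightly loosened approximate version of $Q_{\eps}$ whose emptiness is still certified by (the perturbed form of) Lemma~\ref{lem:secretary_to_crs_dual}, solve the resulting polynomial-size primal, and absorb all estimation and failure-probability losses into a constant-factor rescaling of $\eps$. The only detail worth flagging is that the paper handles the adaptivity of the ellipsoid queries by (notionally) defining estimates for all $w\in W_{\eps}^n$ and union-bounding over that exponential-size family with $\delta=\eps/(n(|W_{\eps}|^n+1))$, which still costs only $\log(1/\delta)=\poly(n)$ samples; your per-query union bound needs a word to the same effect but is otherwise fine.
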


\bibliography{cite}

\begin{thebibliography}{24}
\providecommand{\natexlab}[1]{#1}
\providecommand{\url}[1]{\texttt{#1}}
\expandafter\ifx\csname urlstyle\endcsname\relax
  \providecommand{\doi}[1]{doi: #1}\else
  \providecommand{\doi}{doi: \begingroup \urlstyle{rm}\Url}\fi

\bibitem[Adamczyk and W{\l}odarczyk(2018)]{adamczyk2018random}
M.~Adamczyk and M.~W{\l}odarczyk.
\newblock Random order contention resolution schemes.
\newblock In \emph{2018 IEEE 59th Annual Symposium on Foundations of Computer Science (FOCS)}, pages 790--801. IEEE, 2018.

\bibitem[Agrawal et~al.(2020)Agrawal, Sethuraman, and Zhang]{agrawal2020optimal}
S.~Agrawal, J.~Sethuraman, and X.~Zhang.
\newblock On optimal ordering in the optimal stopping problem.
\newblock In \emph{Proceedings of the 21st ACM Conference on Economics and Computation}, pages 187--188, 2020.

\bibitem[Babaioff et~al.(2007)Babaioff, Immorlica, and Kleinberg]{babaioff2007matroids}
M.~Babaioff, N.~Immorlica, and R.~Kleinberg.
\newblock Matroids, secretary problems, and online mechanisms.
\newblock In \emph{Symposium on Discrete Algorithms (SODA'07)}, pages 434--443, 2007.

\bibitem[Beyhaghi et~al.(2018)Beyhaghi, Golrezaei, Leme, Pal, and Sivan]{beyhaghi2018improved}
H.~Beyhaghi, N.~Golrezaei, R.~P. Leme, M.~Pal, and B.~Sivan.
\newblock Improved approximations for free-order prophets and second-price auctions.
\newblock \emph{arXiv preprint arXiv:1807.03435}, 2018.

\bibitem[Bubna and Chiplunkar(2023)]{bubna2023prophet}
A.~Bubna and A.~Chiplunkar.
\newblock Prophet inequality: Order selection beats random order.
\newblock In \emph{Proceedings of the 24th ACM Conference on Economics and Computation}, pages 302--336, 2023.

\bibitem[Chawla et~al.(2010)Chawla, Hartline, Malec, and Sivan]{chawla2010multi}
S.~Chawla, J.~D. Hartline, D.~L. Malec, and B.~Sivan.
\newblock Multi-parameter mechanism design and sequential posted pricing.
\newblock In \emph{Proceedings of the forty-second ACM symposium on Theory of computing}, pages 311--320, 2010.

\bibitem[Chekuri et~al.(2014)Chekuri, Vondr\'{a}k, and Zenklusen]{chekuri2014submodular}
C.~Chekuri, J.~Vondr\'{a}k, and R.~Zenklusen.
\newblock Submodular function maximization via the multilinear relaxation and contention resolution schemes.
\newblock \emph{SIAM Journal on Computing}, 43\penalty0 (6):\penalty0 1831--1879, 2014.

\bibitem[Chernoff(1952)]{chernoff1952measure}
H.~Chernoff.
\newblock A measure of asymptotic efficiency for tests of a hypothesis based on the sum of observations.
\newblock \emph{The Annals of Mathematical Statistics}, pages 493--507, 1952.

\bibitem[Dughmi(2020)]{dughmi2020outer}
S.~Dughmi.
\newblock The outer limits of contention resolution on matroids and connections to the secretary problem.
\newblock In \emph{47th International Colloquium on Automata, Languages, and Programming (ICALP 2020)}. Schloss-Dagstuhl-Leibniz Zentrum f{\"u}r Informatik, 2020.

\bibitem[Dughmi(2022)]{dughmi2022matroid}
S.~Dughmi.
\newblock Matroid secretary is equivalent to contention resolution.
\newblock In \emph{13th Innovations in Theoretical Computer Science Conference (ITCS 2022)}. Schloss Dagstuhl-Leibniz-Zentrum f{\"u}r Informatik, 2022.

\bibitem[Dughmi et~al.(2024)Dughmi, Kalayci, and Patel]{dughmi2024limitations}
S.~Dughmi, Y.~H. Kalayci, and N.~Patel.
\newblock Limitations of stochastic selection problems with pairwise independent priors.
\newblock In \emph{Proceedings of the 56th Annual ACM Symposium on Theory of Computing}, page 479–490, 2024.

\bibitem[Feldman et~al.(2021)Feldman, Svensson, and Zenklusen]{feldman2021online}
M.~Feldman, O.~Svensson, and R.~Zenklusen.
\newblock Online contention resolution schemes with applications to bayesian selection problems.
\newblock \emph{SIAM Journal on Computing}, 50\penalty0 (2):\penalty0 255--300, 2021.

\bibitem[Fu et~al.(2024)Fu, Lu, Tang, Wu, Wu, and Zhang]{fu2024samplebased}
H.~Fu, P.~Lu, Z.~G. Tang, H.~Wu, J.~Wu, and Q.~Zhang.
\newblock Sample-based matroid prophet inequalities.
\newblock In \emph{Proceedings of the 25th ACM Conference on Economics and Computation}, 2024.

\bibitem[Gr{\"o}tschel et~al.(2012)Gr{\"o}tschel, Lov{\'a}sz, and Schrijver]{grotschel2012geometric}
M.~Gr{\"o}tschel, L.~Lov{\'a}sz, and A.~Schrijver.
\newblock \emph{Geometric algorithms and combinatorial optimization}, volume~2.
\newblock Springer Science \& Business Media, 2012.

\bibitem[Gupta and Nagarajan(2013)]{gupta2013stochastic}
A.~Gupta and V.~Nagarajan.
\newblock A stochastic probing problem with applications.
\newblock In \emph{Integer Programming and Combinatorial Optimization: 16th International Conference, IPCO 2013, Valpara{\'\i}so, Chile, March 18-20, 2013. Proceedings 16}, pages 205--216. Springer, 2013.

\bibitem[Gupta et~al.(2024)Gupta, Hu, Kehne, and Levin]{gupta2024pairwise}
A.~Gupta, J.~Hu, G.~Kehne, and R.~Levin.
\newblock Pairwise-independent contention resolution.
\newblock In \emph{International Conference on Integer Programming and Combinatorial Optimization}, pages 196--209. Springer, 2024.

\bibitem[Hajiaghayi et~al.(2022)Hajiaghayi, Kowalski, Krysta, and Olkowski]{hajiaghayi2022optimal}
M.~T. Hajiaghayi, D.~R. Kowalski, P.~Krysta, and J.~Olkowski.
\newblock Optimal algorithms for free order multiple-choice secretary.
\newblock \emph{arXiv preprint arXiv:2207.10703}, 2022.

\bibitem[Hill(1983)]{hill1983prophet}
T.~Hill.
\newblock Prophet inequalities and order selection in optimal stopping problems.
\newblock \emph{Proceedings of the American Mathematical Society}, 88\penalty0 (1):\penalty0 131--137, 1983.

\bibitem[Hoeffding(1994)]{hoeffding1994probability}
W.~Hoeffding.
\newblock Probability inequalities for sums of bounded random variables.
\newblock In \emph{The Collected Works of Wassily Hoeffding}, pages 409--426. Springer, 1994.

\bibitem[Jaillet et~al.(2013)Jaillet, Soto, and Zenklusen]{jaillet2013advances}
P.~Jaillet, J.~A. Soto, and R.~Zenklusen.
\newblock Advances on matroid secretary problems: Free order model and laminar case.
\newblock In \emph{International Conference on Integer Programming and Combinatorial Optimization}, pages 254--265. Springer, 2013.

\bibitem[Lee and Singla(2018)]{lee2018optimal}
E.~Lee and S.~Singla.
\newblock Optimal online contention resolution schemes via ex-ante prophet inequalities.
\newblock In \emph{26th Annual European Symposium on Algorithms (ESA 2018)}. Schloss Dagstuhl-Leibniz-Zentrum fuer Informatik, 2018.

\bibitem[Liu et~al.(2021)Liu, Leme, P{\'a}l, Schneider, and Sivan]{liu2021variable}
A.~Liu, R.~P. Leme, M.~P{\'a}l, J.~Schneider, and B.~Sivan.
\newblock Variable decomposition for prophet inequalities and optimal ordering.
\newblock In \emph{Proceedings of the 22nd ACM Conference on Economics and Computation}, pages 692--692, 2021.

\bibitem[Peng and Tang(2022)]{peng2022order}
B.~Peng and Z.~G. Tang.
\newblock Order selection prophet inequality: From threshold optimization to arrival time design.
\newblock In \emph{2022 IEEE 63rd Annual Symposium on Foundations of Computer Science (FOCS)}, pages 171--178. IEEE, 2022.

\bibitem[Welsh(2010)]{welsh2010matroid}
D.~J. Welsh.
\newblock \emph{Matroid theory}.
\newblock Courier Corporation, 2010.

\end{thebibliography}

\appendix

\section{Useful lemmata}\label{sec:lemmata}
\begin{lemma}[see e.g.,~\cite{welsh2010matroid}]\label{lem:matroid_properties}
Any matroid $\M\subseteq 2^{[n]}$ satisfies the following properties:
\begin{enumerate}[i.]
    \item\label{fact:rank_bounded} For any $X\subseteq[n]$, $r_{\M}(X)\le|X|$.
    \item\label{fact:rank_monotone} For any $Y\subseteq X\subseteq[n]$ and any weight vector $w\in\R_{\ge0}^n$, $r_{\M,w}(Y)\le r_{\M,w}(X)$.
    \item\label{fact:rank_marginal_subadditive} For any $X,Y\subseteq [n]$, $r_{\M}(X\mid Y)\le\sum_{i\in X} r_{\M}(\{i\}\mid Y)$.
    \item\label{fact:span_monotone} For any $Y\subseteq X\subseteq[n]$, $\spa_{\M}(Y)\subseteq \spa_{\M}(X)$.
    \item\label{fact:span_of_basis} For any $X\subseteq[n]$ and any basis $Y$ of $X$, an element $i\in[n]$ is not spanned by $X$ if and only if $i\notin Y$ and $Y\cup\{i\}\in\M$.
    \item\label{fact:matroid_restriction} For any $X\subseteq [n]$, the restriction $\M_{X}$ is also a matroid, and moreover, for any $Y\subseteq X$ and $i\in X$, $i\in\spa_{\M_X}(Y)$ holds if and only if $i\in\spa_{\M}(Y)$.
\end{enumerate}
\end{lemma}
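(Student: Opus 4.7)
Since every clause is a classical consequence of the matroid axioms in Definition~\ref{def:matroid}, the plan is to dispatch them in order of increasing subtlety. The only auxiliary ingredient I would isolate as a standalone claim is submodularity of the rank function: $r_{\M}(X\cup\{i\})-r_{\M}(X)\le r_{\M}(Y\cup\{i\})-r_{\M}(Y)$ whenever $Y\subseteq X$, which itself is a short consequence of the exchange axiom applied to any basis of $Y$ extended to a basis of $X\cup\{i\}$.

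The trivial pieces come first. Part (i) is immediate from $r_{\M}(X)=\max_{Z\subseteq X,\,Z\in\M}|Z|\le|X|$. Part (ii) holds because every independent subset of $Y$ is also an independent subset of $X$ with the same weight, so enlarging the feasible region in the definition of $r_{\M,w}$ can only increase the maximum. For part (vi), I would check the three matroid axioms of Definition~\ref{def:matroid} for $\M_X$ directly: $\emptyset\in\M_X$ and downward closure are inherited from $\M$, and the exchange axiom for $\M_X$ follows from the exchange axiom for $\M$ after observing that the exchanged element $i\in Y\setminus Z$ lies in $X$ since $Y\subseteq X$. The span equivalence in (vi) then reduces to the observation that for any $Z\subseteq X$, $r_{\M_X}(Z)=r_{\M}(Z)$ because the sets competing in the two maxima coincide; applying this to $Y$ and $Y\cup\{i\}$ gives the claim.

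For (iv), after establishing submodularity, the inequality $r_{\M}(X\cup\{i\})-r_{\M}(X)\le r_{\M}(Y\cup\{i\})-r_{\M}(Y)=0$ combined with monotonicity from (ii) forces $r_{\M}(X\cup\{i\})=r_{\M}(X)$, i.e., $i\in\spa_{\M}(X)$. For (iii), enumerate $X=\{x_1,\dots,x_k\}$ and write the telescoping identity $r_{\M}(X\cup Y)-r_{\M}(Y)=\sum_{j=1}^{k}\bigl[r_{\M}(\{x_1,\dots,x_j\}\cup Y)-r_{\M}(\{x_1,\dots,x_{j-1}\}\cup Y)\bigr]$, and bound the $j$th summand by $r_{\M}(\{x_j\}\cup Y)-r_{\M}(Y)=r_{\M}(\{x_j\}\mid Y)$ using submodularity applied to the pair $Y\subseteq \{x_1,\dots,x_{j-1}\}\cup Y$.

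The only non-routine step is (v). The easy direction is short: if $i\notin Y$ and $Y\cup\{i\}\in\M$, then $Y\cup\{i\}\subseteq X\cup\{i\}$ is independent of size $|Y|+1=r_{\M}(X)+1$, so $r_{\M}(X\cup\{i\})\ge r_{\M}(X)+1$ and $i\notin\spa_{\M}(X)$. The hard direction is where care is needed: assume $i\notin\spa_{\M}(X)$, so $r_{\M}(X\cup\{i\})=r_{\M}(X)+1$ and in particular $i\notin X$, hence $i\notin Y$. Pick any basis $B$ of $X\cup\{i\}$; its size $|Y|+1$ together with the fact that no subset of $X$ can have rank exceeding $r_{\M}(X)$ forces $i\in B$. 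Now apply the exchange axiom to $Y,B\in\M$ with $|B|>|Y|$ to obtain $j\in B\setminus Y$ with $Y\cup\{j\}\in\M$. The main obstacle, and the one place where one has to actually use the basis hypothesis on $Y$, is pinning down $j=i$: any candidate $j\in X\setminus Y$ is ruled out by the basis property of $Y$ (which says $Y\cup\{j\}\notin\M$ for every such $j$), so the containment $B\subseteq X\cup\{i\}$ leaves $j=i$ as the only possibility, completing the proof.
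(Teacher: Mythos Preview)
Your proof is correct and self-contained. Note, however, that the paper does not actually prove this lemma: it is stated in the appendix as a list of standard matroid facts with a textbook citation (\emph{see e.g.}, Welsh) and no argument. So there is no ``paper's own proof'' to compare against; you have simply supplied what the paper chose to omit. Each of your arguments is sound, including the slightly delicate direction of (v), where your use of the exchange axiom together with the basis property of $Y$ to force $j=i$ is exactly the right move.
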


\begin{lemma}[concentration inequalities~\citep{chernoff1952measure,hoeffding1994probability}]\label{lem:concentration}
Let $X=\sum_{i=1}^n X_i$ where $X_i$'s are independent random variables taking values in $\{0,1\}$. Then, it holds that
\begin{enumerate}[i.]
    \item\label{chernoff} $\Pr[|X-\E[X]|\ge \delta \E[X]]\le 2\exp\left(-\frac{\delta^2 \E[X]}{3}\right)$ for all $\delta\in (0,1)$, and
    \item\label{hoeffding} $\P[|X-\E[X]|\ge t]\le 2\exp\left(-\frac{2t^2}{n}\right)$.
\end{enumerate}
\end{lemma}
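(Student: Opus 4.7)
The plan is to prove both inequalities via the standard Chernoff--Cram\'er moment-generating-function method. For each tail bound, I will apply Markov's inequality to $e^{\lambda(X-\E[X])}$ (or $e^{\lambda X}$), use independence to factor the MGF into a product over the $X_i$, bound each individual MGF, and then optimize over $\lambda>0$. Parts (i) and (ii) diverge precisely in how each individual MGF is bounded.

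For part (i), I would first handle the upper tail $\Pr[X\ge (1+\delta)\mu]$ where $\mu:=\E[X]$. Writing $p_i:=\E[X_i]$, the identity $\E[e^{\lambda X_i}]=1+p_i(e^{\lambda}-1)$ together with $1+x\le e^x$ gives $\E[e^{\lambda X_i}]\le \exp(p_i(e^{\lambda}-1))$. Taking the product and applying Markov's inequality yields $\Pr[X\ge(1+\delta)\mu]\le\exp(\mu(e^{\lambda}-1)-\lambda(1+\delta)\mu)$ for any $\lambda>0$. Choosing $\lambda=\ln(1+\delta)$ gives the classical bound $\bigl(e^{\delta}/(1+\delta)^{1+\delta}\bigr)^{\mu}$, which I would then simplify via the elementary calculus inequality $e^{\delta}/(1+\delta)^{1+\delta}\le\exp(-\delta^2/3)$ valid for $\delta\in(0,1)$. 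The lower tail $\Pr[X\le(1-\delta)\mu]$ is handled analogously with $\lambda<0$, producing an even stronger bound ($\exp(-\delta^2\mu/2)$), and a union bound over the two tails supplies the factor of $2$.

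For part (ii), I would invoke Hoeffding's lemma: for any random variable $Y\in[a,b]$ with $\E[Y]=0$, $\E[e^{\lambda Y}]\le\exp(\lambda^2(b-a)^2/8)$. Applying this to $Y_i:=X_i-\E[X_i]$, which takes values in an interval of length $1$, gives $\E[e^{\lambda Y_i}]\le\exp(\lambda^2/8)$. By independence, $\E[e^{\lambda(X-\E[X])}]\le\exp(n\lambda^2/8)$, so Markov yields $\Pr[X-\E[X]\ge t]\le\exp(n\lambda^2/8-\lambda t)$. Optimizing at $\lambda=4t/n$ produces the bound $\exp(-2t^2/n)$, and the lower tail is symmetric, giving the factor of $2$.

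The only nonroutine steps are the two underlying calculus facts: the inequality $e^{\delta}/(1+\delta)^{1+\delta}\le\exp(-\delta^2/3)$ for $\delta\in(0,1)$ in part (i), which follows from a Taylor expansion of $(1+\delta)\ln(1+\delta)-\delta$, and Hoeffding's lemma in part (ii), which follows by using convexity to bound $e^{\lambda y}$ linearly on $[a,b]$, taking expectations to obtain a log-MGF $\psi(\lambda)$, and showing $\psi''(\lambda)\le(b-a)^2/4$ by recognizing the second derivative as the variance of a bounded random variable; integrating twice from $\psi(0)=\psi'(0)=0$ gives the stated quadratic bound. Both facts are completely standard, so I would simply cite them rather than rederive them in full detail.
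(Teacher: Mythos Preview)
Your proof is correct and follows the standard textbook approach (Markov on the MGF, factor via independence, bound each factor, optimize over $\lambda$). However, the paper does not supply a proof of this lemma at all: it is stated in the appendix with citations to \citet{chernoff1952measure} and \citet{hoeffding1994probability} and treated as a known result. So there is nothing to compare against---your write-up simply fills in what the paper deliberately omits.
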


\section{Supplementary Proofs}\label{sec:supplementary_proofs}
\begin{lemma}[Restatement of Lemma~\ref{lem:uncontentious_marginal}]
Given any matroid $\M\subseteq 2^{[n]}$ and $\alpha$-uncontentious distribution $\D_A$ for $\M$, for any $S\subseteq [n]$, the marginal distribution $\D_A^{S}$ is $\alpha$-uncontentious for $\M_{S}$.
\end{lemma}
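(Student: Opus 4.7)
The plan is to construct an $\alpha$-balanced CRS for $(\M_S, \D_A^S)$ starting from an $\alpha$-balanced CRS for $(\M, \D_A)$, whose existence is guaranteed by $\alpha$-uncontentiousness.

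First I would let $\D_\phi \in \Delta(\Phi_\M)$ be an $\alpha$-balanced CRS for $(\M,\D_A)$ and define a candidate CRS $\D_{\psi}$ for $(\M_S,\D_A^S)$ operationally as follows. On input $A' \subseteq S$ (interpreted as a sample from $\D_A^S$), first draw a full active set $A \sim \D_A$ conditioned on $A \cap S = A'$ (which is well-defined whenever $A'$ has positive probability under $\D_A^S$), then draw $\phi \sim \D_\phi$, and finally output $\psi(A') := \phi(A) \cap S$. I would verify feasibility: $\phi(A) \in \M$ and $\phi(A) \cap S \subseteq \phi(A)$, so by downward-closedness of $\M$ we get $\phi(A) \cap S \in \M$, and since it is a subset of $S$, it lies in $\M_S$. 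Also $\phi(A) \cap S \subseteq A \cap S = A'$, as required for a CRS.

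Next, I would verify the $\alpha$-balanced condition. Fix $i \in S$ with $\Pr_{A' \sim \D_A^S}[i \in A'] > 0$. The key observation is that for $i \in S$, the event $\{i \in A'\}$ in the generative process coincides exactly with the event $\{i \in A\}$, since $A' = A \cap S$. Moreover $i \in \psi(A') \iff i \in \phi(A)$ because $i \in S$. Therefore
\[
\Pr[i \in \psi(A') \mid i \in A'] \;=\; \Pr[i \in \phi(A) \mid i \in A] \;\ge\; \alpha,
\]
where the inequality is the $\alpha$-balanced property of $\D_\phi$. Note that the positivity of $\Pr_{A' \sim \D_A^S}[i \in A']$ implies the positivity of $\Pr_{A \sim \D_A}[i \in A]$, so the conditioning on the right-hand side is valid.

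I do not foresee a serious obstacle here: the only subtlety is being careful that the operational description of $\D_\psi$ actually defines a distribution over $\Phi_{\M_S}$ (a family of maps $2^S \to \M_S$) rather than just a procedure, but this is purely formal—one can absorb the auxiliary randomness (the completion of $A'$ to $A$ and the draw $\phi \sim \D_\phi$) into the distribution over maps $\psi : 2^S \to \M_S$ by fixing, for each $A' \subseteq S$, a consistent coupling. The conditional distribution of $A$ given $A \cap S = A'$ is well-defined whenever $\Pr[A \cap S = A'] > 0$, and on the zero-probability inputs $\psi$ can be defined arbitrarily (e.g.\ $\psi(A') = \emptyset$) without affecting the $\alpha$-balanced inequality, which only needs to hold for $i \in S$ with positive marginal probability.
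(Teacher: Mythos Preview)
Your proof is correct and follows essentially the same approach as the paper: construct the marginal CRS by completing the input $A' \subseteq S$ to a full active set $A$ conditioned on $A \cap S = A'$, applying the original $\alpha$-balanced CRS $\D_\phi$, and intersecting the output with $S$. Your verification of the $\alpha$-balanced property is more streamlined than the paper's explicit summation over subsets $Z \subseteq S$, but the underlying construction and logic are identical.
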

\begin{proof}[Proof]
Since $\D_A$ is $\alpha$-uncontentious for $\M$, there exists a CRS $\D_{\phi}$ that is $\alpha$-balanced for $(\M,\D_A)$. For any $S\subseteq [n]$, consider a new CRS $\D_{\phi}^{S}$ for $(\M_S,\D_A^{S})$, which given an input set of active elements $Z\subseteq S$, outputs a random subset of $Z$ as follows:
\begin{itemize}
    \item[(1)] Sample a map $\phi$ from $\D_{\phi}$, and sample a set $A$ from $\D_A$ conditioned on $A\cap S=Z$.
    \item[(2)] Then, output the subset $\phi(A)\cap Z$ (which is in $\M_S$ because $\phi(A)\in\M$ and $Z\subseteq S$).
\end{itemize}
Now we show that $\D_{\phi}^{S}$ is an $\alpha$-balanced CRS for $(\M_S,\D_A^{S})$, which will finish the proof. To this end, we derive that for all $i\in S$,
\begin{align*}
    &\Pr_{A'\sim \D_A^S,\,\phi'\sim \D_{\phi}^S}[i\in \phi'(A')\mid i\in A']\\
    =&\sum_{\substack{Z\subseteq S\\ \textrm{s.t.~}i\in Z}}\Pr_{A'\sim \D_A^S,\,\phi'\sim \D_{\phi}^S}[A'=Z,\,i\in \phi'(Z)\mid i\in A']\\
    =&\sum_{\substack{Z\subseteq S\\ \textrm{s.t.~}i\in Z}}\Pr_{A'\sim \D_A^S}[A'=Z\mid i\in A']\times\Pr_{\phi'\sim \D_{\phi}^S}[i\in \phi'(Z)]&&\text{($A',\phi'$ are independent)}\\
    =&\sum_{\substack{Z\subseteq S\\ \textrm{s.t.~}i\in Z}}\Pr_{A'\sim \D_A^S}[A'=Z\mid i\in A']\times\Pr_{\substack{A\sim\D_{A},\\\phi\sim \D_{\phi}}}[i\in \phi(A)\cap Z\mid A\cap S=Z]&&\text{(By definition of $\D_{\phi}^S$)}\\
    =&\sum_{\substack{Z\subseteq S\\ \textrm{s.t.~}i\in Z}}\Pr_{A\sim \D_A}[A\cap S=Z\mid i\in A\cap S]\times\Pr_{\substack{A\sim\D_{A},\\\phi\sim \D_{\phi}}}[i\in \phi(A)\cap Z\mid A\cap S=Z]&&\text{(By definition of $\D_A^S$)}\\
    \mathclap{\qquad\qquad\qquad\qquad\qquad\qquad\qquad\qquad\qquad\qquad\qquad\qquad\qquad\qquad\qquad\qquad\,\,\,\,\,=\sum_{\substack{Z\subseteq S\\ \textrm{s.t.~}i\in Z}}\Pr_{A\sim \D_A}[A\cap S=Z\mid i\in A\cap S]\times\Pr_{\substack{A\sim\D_{A},\\\phi\sim \D_{\phi}}}[i\in \phi(A)\mid A\cap S=Z,\,i\in A\cap S]} \\
    &&&\text{(Since $i\in Z$)}\\
    =&\sum_{\substack{Z\subseteq S\\ \textrm{s.t.~}i\in Z}}\Pr_{A\sim \D_A,\,\phi\sim \D_{\phi}}[i\in\phi(A),\,A\cap S=Z\mid i\in A\cap S]\\
    =&\Pr_{A\sim\D_A,\,\phi\sim \D_{\phi}}[i\in \phi(A) \mid i\in A\cap S]\\
    =&\Pr_{A\sim\D_A,\,\phi\sim \D_{\phi}}[i\in \phi(A) \mid i\in A] &&\text{(Since $i\in S$)},
\end{align*}
which is at least $\alpha$ because $\D_{\phi}$ is $\alpha$-balanced for $(\M,\D_A)$.
\end{proof}

\begin{proposition}\label{prop:p_min}
Consider the 1-uniform matroid $\M$ and the prior distributions $\D_{A}^{(j)}$ for all $j\in[n]$, which are defined in Example~\ref{ex:p_min}. Suppose that given instance $(\M,\D_A^{(j)})$ for any $j\in[n]$, an algorithm $\alg$ only has sample access to $\D_A^{(j)}$, and it draws at most $o(\frac{1}{\delta})$ samples from $\D_A^{(j)}$. Then, there exists $j\in[n]$ such that $\alg$ is at most $o(1)$-balanced for $(\M,\D_A^{(j)})$.
\end{proposition}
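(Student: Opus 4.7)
The plan is a Yao-style averaging argument. Place a uniform prior $J \sim \mathrm{Unif}([n])$ on the identity of the ``special'' element, and show that under the resulting joint distribution
\[
    \Pr[J \in X_{\alg} \mid J \in A] \;=\; o(1).
\]
Because $\Pr[j\in A]=\Pr[A=[n]]=\delta(1/\alpha-1)$ has the same value for every $j$, this left-hand side equals the uniform $j$-average of $\Pr[j\in X_{\alg}\mid j\in A]$ computed in the individual instance $(\M,\D_A^{(j)})$, and so averaging produces a $j$ witnessing the claim.

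The first key reduction is that the conditioning $J \in A$ is equivalent to the event $A = [n]$, whose probability does not depend on $J$. Hence conditioning on $A=[n]$ leaves the posterior of $J$ uniform and does not distort the joint distribution of $(J, S_1,\dots,S_M)$, where $S_1,\dots,S_M$ denote the $M=o(1/\delta)$ oracle samples the algorithm draws. Furthermore, on the event $A=[n]$ every element is revealed active in the online phase, so $X_{\alg}$ --- a subset of size at most $1$ because $\M$ is $1$-uniform --- is a function purely of the samples, the (possibly adaptive) arrival order, and the algorithm's randomness; it is in particular conditionally independent of $J$ given the samples.

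The second key step is computing the posterior of $J$ given the samples. Since $\D_A^{(j)}$ puts zero mass on $\{j\}$ while assigning identical masses to $\emptyset$, $[n]$, and to each $\{i\}$ with $i\ne j$, the likelihood $\prod_k \Pr_{\D_A^{(j)}}[S_k]$ of any sample transcript is zero whenever $j$ lies in the observed-singletons set $S^* := \{i : S_k=\{i\}\text{ for some }k\}$, and takes the same positive value for every $j \notin S^*$. By Bayes' rule the posterior of $J$ given the samples is therefore uniform over $[n]\setminus S^*$, whence
\[
    \Pr[J \in X_{\alg} \mid S_1,\dots,S_M,\, A=[n]] \;\le\; \frac{1}{|[n] \setminus S^*|}.
\]

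The final step bounds $|S^*|$. Integrating out $J$, each sample equals a given singleton $\{i\}$ with probability $\frac{n-1}{n}\delta$, so $\E[|S^*|]\le M(n-1)\delta = o(n)$ under $M=o(1/\delta)$. Markov's inequality gives $|S^*|\le n/2$ with probability $1-o(1)$, on which event $1/|[n]\setminus S^*|\le 2/n$; the complementary event contributes at most $o(1)$ because the quantity is bounded by $1$. Taking expectations yields $\Pr[J \in X_{\alg}\mid A=[n]]\le 2/n + o(1)$, which is $o(1)$ in the natural regime $n\to\infty$ and $M\delta\to0$, and averaging over $J$ completes the argument. The main obstacle is the bookkeeping to ensure that conditioning on the online event $A=[n]$ leaks no information about $J$ beyond that contained in the samples, and that no adaptive choice of arrival order can exploit online observations; both points reduce to the fact that the online trace on $A=[n]$ is identical for every value of $J$, so all dependence on $J$ enters only through the samples.
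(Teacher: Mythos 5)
Your proposal is correct and follows essentially the same route as the paper's proof: both place a uniform prior on the special element, observe that conditioning on the element being active is conditioning on $A=[n]$, show the posterior given the samples is uniform over the elements not observed as singletons (your $[n]\setminus S^*$ is the paper's $J(A_1,\dots,A_m)$), and finish with a first-moment bound plus Markov's inequality to get $2/n+o(1)$. No substantive differences.
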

\begin{proof}
By the assumption in the proposition, given instance $(\M,\D_{A}^{(j)})$ for any $j\in[n]$, $\alg$ draws at most $m=o(\frac{1}{\delta})$ samples from $\D_{A}^{(j)}$. We consider the following random process:
\begin{enumerate}[(1)]
    \item Sample $j^*\in[n]$ uniformly at random.
    \item Sample $m$ sets $A_1,\dots,A_m\subseteq[n]$ from $\D_{A}^{(j^*)}$.
    \item Given access to matroid $\M$ and samples $A_1,\dots,A_m$ from $\D_{A}^{(j^*)}$, run $\alg$ with the input set of active elements $A=[n]$. Let $j'\in[n]$ denote the element that is selected by $\alg$ (we assume w.l.o.g.~that $\alg$ always selects an element when the input set of active elements is $[n]$).
\end{enumerate}
Suppose that $\alg$ is $\beta$-balanced for $(\M,\D_A^{(j)})$ for all $j\in[n]$. Then, given instance $(\M,\D_{A}^{(j)})$ for any $j\in[n]$, $\alg$ must select element $j$ with probability at least $\beta$ if the input set of active elements is $[n]$, because this is the only case where element $j$ is active by our construction of $\D_{A}^{(j)}$. This implies that $\Pr[j'=j^*]\ge\beta$ in the above random process. Therefore, to prove the proposition, it suffices to show that $\Pr[j'=j^*]=o(1)$.

To this end, we let $J(A_1,\dots,A_m):=\{j\in[n] \mid \forall\,\ell\in[m],\,A_{\ell}\neq\{j\}\}$. By our construction of the prior distributions $\D_A^{(j)}$ for all $j\in[n]$, we have that for any $j\notin J(A_1,\dots,A_m)$, the likelihood of observing $A_1,\dots,A_m$ in the above random process is $0$ if $j^*=j$, and for any $j_1,j_2\in J(A_1,\dots,A_m)$, the likelihood of observing $A_1,\dots,A_m$ in the above process is the same regardless of whether $j^*=j_1$ or $j^*=j_2$, namely, $\Pr[A_1,\dots,A_m\mid j^*=j_1]=\Pr[A_1,\dots,A_m\mid j^*=j_2]$. Moreover, since $j^*$ is chosen uniformly at random from $[n]$, we have that $\Pr[j^*=j_1]=\Pr[j^*=j_2]$. It follows that
\begin{align*}
    \Pr[j^*=j_1 \mid A_1,\dots,A_m]&=\frac{\Pr[A_1,\dots,A_m\mid j^*=j_1]\times\Pr[j^*=j_1]}{\Pr[A_1,\dots,A_m]}\\
    &=\frac{\Pr[A_1,\dots,A_m\mid j^*=j_2]\times\Pr[j^*=j_2]}{\Pr[A_1,\dots,A_m]}\\
    &=\Pr[j^*=j_2 \mid A_1,\dots,A_m].
\end{align*}
Thus, conditioned on $A_1,\dots,A_m$, $j^*$ is a uniformly random number in $J(A_1,\dots,A_m)$. Moreover, we observe that conditioned on $A_1,\dots,A_m$, $j'$ is independent of $j^*$. Therefore, we have that
\begin{equation}\label{eq:uniformly_random_j*}
\Pr[j'=j^*\mid A_1,\dots,A_m]\le\frac{1}{|J(A_1,\dots,A_m)|}.
\end{equation}

Next, we show that $|J(A_1,\dots,A_m)|$ is large with high probability. We first notice that for all $j\in[n]$, by our construction of $\D_A^{(j)}$, we have that $\Pr_{A\sim\D_A^{(j)}}[A=\{i\}]=\delta$ for any $i\in[n]$ such that $i\neq j$. Hence, in the above random process, we have that for all $j\in[n]$ and $\ell\in[m]$,
\[
\Pr[A_{\ell}=\{i\}\mid j^*=j]=\Pr_{A_{\ell}\sim\D_A^{(j)}}[A_{\ell}=\{i\}]=\delta \textrm{ for any $i\in[n]$ such that $i\neq j$}.
\]
It follows that for any $i,j\in[n]$ such that $i\neq j$,
\[
    \Pr[i\in J(A_1,\dots,A_m)\mid j^*=j]=\prod_{\ell\in[m]}\Pr[A_{\ell}\neq\{i\}\mid j^*=j]=(1-\delta)^m=(1-\delta)^{o(\frac{1}{\delta})}=1-o(1),
\]
which implies that for any $j\in[n]$,
\[
\E[|J(A_1,\dots,A_m)|\mid j^*=j]=\sum_{i\in[n]\textrm{ s.t. } i\neq j}\Pr[i\in J(A_1,\dots,A_m)\mid j^*=j]=(1-o(1))\cdot(n-1).
\]
Therefore, we have that $\E[|J(A_1,\dots,A_m)|]=(1-o(1))n$, and by Markov's inequality, we have that $\Pr[n-|J(A_1,\dots,A_m)|\ge\frac{n}{2}]\le\frac{n-E[|J(A_1,\dots,A_m)|]}{n/2}=\frac{o(n)}{n/2}=o(1)$, which is equivalent to $\Pr[|J(A_1,\dots,A_m)|\le\frac{n}{2}]=o(1)$. We finish the proof by noticing that
\begin{align*}
\Pr[j'=j^*]&\le\Pr\left[j'=j^*\,\middle\vert\, |J(A_1,\dots,A_m)|\ge\frac{n}{2}\right]+\Pr\left[|J(A_1,\dots,A_m)|\le\frac{n}{2}\right]\\
&\le\Pr\left[j'=j^*\,\middle\vert\, |J(A_1,\dots,A_m)|\ge\frac{n}{2}\right]+o(1)\\
&\le\frac{2}{n}+o(1)=o(1) &&\text{(By Eq.~\eqref{eq:uniformly_random_j*})}.
\end{align*}
\end{proof}

\begin{fact}[Restatement of Fact~\ref{fact:fractional_knapsack}]
For any integer $i\ge1$, for any real numbers $\alpha,x_0,\dots,x_{i-1}\in[0,1]$, if $\sum_{k=0}^{i-1}\frac{1}{i}\cdot x_{k}\ge\alpha$, then $\sum_{k=0}^{i-1} \frac{k+1}{i(i+1)}\cdot x_{k}\ge\frac{\alpha^2}{2}$.
\end{fact}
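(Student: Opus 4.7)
The plan is to view the inequality as a linear program in the variables $x_0,\dots,x_{i-1}$ and identify its worst case. Concretely, I would minimize $F(x):=\sum_{k=0}^{i-1}\tfrac{k+1}{i(i+1)}\,x_k$ subject to $\sum_{k=0}^{i-1} x_k\ge\alpha i$ and $x_k\in[0,1]$, and show that the minimum is at least $\alpha^2/2$. Since the coefficients $\tfrac{k+1}{i(i+1)}$ are strictly increasing in $k$, a standard exchange argument applies: if some feasible point has $x_k<1$ and $x_{k'}>0$ for $k<k'$, shifting a small amount of mass from $x_{k'}$ to $x_k$ preserves feasibility and weakly decreases $F$. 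Moreover, since all coefficients are positive, the constraint $\sum_k x_k\ge\alpha i$ must be tight at the optimum. Hence at the minimum we may assume a prefix configuration $x_0=\cdots=x_{j-1}=1$, $x_j=t\in[0,1]$, $x_{j+1}=\cdots=x_{i-1}=0$, with $j+t=\alpha i$ (where $j\in\{0,1,\dots,i-1\}$).

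Next, I would evaluate $F$ on this prefix solution:
\[
F \;=\; \frac{1}{i(i+1)}\!\left(\sum_{k=0}^{j-1}(k+1) + (j+1)\,t\right)
\;=\; \frac{j(j+1)/2 + (j+1)t}{i(i+1)}
\;=\; \frac{(j+1)(j+2t)}{2\,i(i+1)}.
\]
Substituting $j=\alpha i - t$, the numerator becomes $(\alpha i+1-t)(\alpha i+t)$, and expanding gives the clean identity
\[
(\alpha i+1-t)(\alpha i+t) \;=\; (\alpha i)^2 + \alpha i + t(1-t).
\]
Because $t\in[0,1]$, we have $t(1-t)\ge 0$, so the numerator is at least $(\alpha i)^2+\alpha i=\alpha i(\alpha i+1)$. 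Finally, using $\alpha\le 1$ we get $\alpha i+1\ge \alpha i+\alpha=\alpha(i+1)$, whence
\[
(j+1)(j+2t) \;\ge\; \alpha i(\alpha i+1) \;\ge\; \alpha i\cdot\alpha(i+1) \;=\; \alpha^2 i(i+1),
\]
and therefore $F\ge \alpha^2/2$, as required.

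The main obstacle is really just making the exchange/reduction step precise; once we are reduced to a prefix configuration the remaining work is the two-line algebraic identity and a single use of $\alpha\le 1$ at the end. It is worth noting that the hypothesis $\alpha\le 1$ is genuinely used only in this last step; without it the bound $\alpha^2/2$ would fail (the factor $\alpha i+1$ vs.\ $\alpha(i+1)$ is exactly where the slack lives). The case of non-integer $\alpha i$ causes no trouble because the exchange argument yields the prefix form with a single fractional coordinate $t$, which the algebra above handles uniformly.
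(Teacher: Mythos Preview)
Your proposal is correct and follows essentially the same approach as the paper: reduce to the prefix minimizer via an exchange argument, then evaluate and bound the objective algebraically using $t(1-t)\ge 0$ and $\alpha\le 1$. The only cosmetic difference is that the paper splits into two cases ($\alpha i\ge 1$ versus $\alpha i<1$, handling the latter by a separate computation using $i\ge 1$), whereas your substitution $j=\alpha i-t$ and the identity $(\alpha i+1-t)(\alpha i+t)=(\alpha i)^2+\alpha i+t(1-t)$ handle both cases uniformly.
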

\begin{proof}
To prove this fact, it suffices to show that for any $\alpha\in[0,1]$, $\frac{\alpha^2}{2}$ is a lower bound of the optimal value of the following LP,
\begin{align}\label{eq:fractional_knapsack_LP}
    \min_{x_0,\dots,x_{i-1}}& \sum_{k=0}^{i-1} \frac{k+1}{i(i+1)}\cdot x_{k}\nonumber\\
    \textrm{s.t. }& \sum_{k=0}^{i-1} x_{k}\ge\alpha\cdot i\nonumber\\
    & \forall j\in\{0,\dots,i-1\},\,0\le x_j\le 1.
\end{align}
Intuitively, the LP in Eq.~\eqref{eq:fractional_knapsack_LP} is the ``opposite'' of the fractional knapsack problem.
\subsubsection*{Deriving the optimal solution to the LP}
First, we observe that the optimal solution to the LP satisfies $\sum_{k=0}^{i-1} x_{k}=\alpha\cdot i$, because otherwise we can decrease any strictly positive variable $x_k$ in the optimal solution slightly without violating the LP constraints, which decreases the objective value. Moreover, since the coefficient of variable $x_k$ in the objective strictly increases as $k$ increases, for any $k_1,k_2\in\{0,1,\dots,i-1\}$ such that $k_1<k_2$, the optimal solution should assign as much value as possible to variable $x_{k_1}$ before it assigns any value to variable $x_{k_2}$. Therefore, the optimal solution to the LP is the following:
\begin{align}\label{eq:fractional_knapsack_optimal_solution}
x_k^*=
    \begin{cases}
        1 &\textrm{for } 0\le k<m\\
        r &\textrm{for } k = m\\
        0 &\textrm{for } m< k\le i-1,
    \end{cases}
\end{align}
where $m:=\floor{\alpha\cdot i}$ and $r:=\alpha\cdot i-\floor{\alpha\cdot i}$.
\subsubsection*{Lower bounding the objective value of the optimal solution}
Next, we prove that the objective value of the optimal solution $(x^*_0,\dots,x^*_{i-1})$, which we denote by $\textrm{OPT}$, is at least $\frac{\alpha^2}{2}$. We divide the proof into two cases.
\subsubsection*{Case 1: $\alpha\cdot i\ge 1$}
If $\alpha\cdot i\ge 1$, we have that $m=\floor{\alpha\cdot i}\ge 1$, and we can lower bound $\textrm{OPT}$ as follows,
\begin{align*}
    \textrm{OPT}&=\sum_{k=0}^{i-1}\frac{k+1}{i(i+1)}\cdot x^*_k=\sum_{k=0}^{m-1}\frac{k+1}{i(i+1)}\cdot 1 + \frac{m+1}{i(i+1)}\cdot r &&\text{(By Eq.~\eqref{eq:fractional_knapsack_optimal_solution})}\\
    &=\frac{m(m+1)}{2i(i+1)} + \frac{r(m+1)}{i(i+1)}=\frac{(m+2r)(m+1)}{2i(i+1)}\\
    &=\frac{(m+r+r)(m+r+1-r)}{2i(i+1)}=\frac{(\alpha\cdot i +r)(\alpha\cdot i+1-r)}{2i(i+1)} &&\textrm{(Since $m+r=\alpha\cdot i$)}\\
    &=\frac{(\alpha\cdot i)^2+\alpha\cdot i+r(1-r)}{2i(i+1)}\ge\frac{(\alpha\cdot i)^2+\alpha\cdot i}{2i(i+1)} &&\text{(Since $r=\alpha\cdot i-\floor{\alpha\cdot i}\in[0,1]$)}\\
    &=\frac{\alpha\cdot i(\alpha\cdot i +1)}{2i(i+1)}\ge\frac{\alpha\cdot i(\alpha\cdot i +\alpha)}{2i(i+1)} &&\text{(Since $\alpha\le1$)}\\
    &=\frac{\alpha^2}{2}.
\end{align*}
\subsubsection*{Case 2: $\alpha\cdot i< 1$}
If $\alpha\cdot i< 1$, we have that $m=\floor{\alpha\cdot i}=0$, and we can lower bound $\textrm{OPT}$ as follows,
\begin{align*}
    \textrm{OPT}&=\sum_{k=0}^{i-1}\frac{k+1}{i(i+1)}\cdot x^*_k=\frac{1}{i(i+1)}\cdot r &&\text{(By Eq.~\eqref{eq:fractional_knapsack_optimal_solution})}\\
    &=\frac{\alpha\cdot i}{i(i+1)} &&\text{(Since $r=\alpha\cdot i-\floor{\alpha\cdot i}=\alpha\cdot i$ in this case)}\\
    &=\frac{\alpha}{i+1}\ge\frac{\alpha}{i+1}\cdot\alpha\cdot i &&\text{(Since $\alpha\cdot i<1$)}\\
    &=\alpha^2\cdot\left(1-\frac{1}{i+1}\right)\ge\frac{\alpha^2}{2} &&\text{(Since $i\ge 1$)}.
\end{align*}
\end{proof}

\section{Implementing Algorithm~\ref{alg:universal_ocrs_independent_subsampling} via Monte-Carlo sampling}\label{sec:monte-carlo}
Both Algorithm~\ref{alg:universal_ocrs_independent_subsampling} and Algorithm~\ref{alg:universal_ocrs_correlated_subsampling} preselect the arrival order by iteratively identifying elements that have relatively low probabilities of being spanned by the subsampled active elements (see Line~\ref{algline:find_element} of Algorithm~\ref{alg:universal_ocrs_independent_subsampling} and Line~\ref{algline:find_element_new} of Subroutine~\ref{sub:new_universal_order}). In this section, we demonstrate how to estimate those probabilities using Monte-Carlo sampling and establish bounds on the time and sample complexity for Algorithm~\ref{alg:universal_ocrs_independent_subsampling} (similar bounds can be established for Algorithm~\ref{alg:universal_ocrs_correlated_subsampling} analogously).
\begin{corollary}[Restatement of Corollary~\ref{cor:explicit_universal_OCRS}]
For any $\alpha,\eps\in(0,1]$, there exists an $(\alpha,\frac{(1-\eps)\alpha^2}{4})$-universal OCRS with preselected order, which given input matroid $\M\subseteq2^{[n]}$ and $\alpha$-uncontentious distribution $\D_A$, runs in $O\left(\frac{n\log(n/\eps)}{\alpha^2\eps^2 p_{\min}}\cdot (t_{\D_A}+t_{\M}\cdot n)\right)$ time, where $p_{\min}:=\min_{i\in[n]}\Pr_{A\sim\D_A}[i\in A]$, and $t_{\D_A},\,t_{\M}$ are the time it takes to generate a sample from $\D_A$ and to check whether a set of elements is in $\M$ respectively.
\end{corollary}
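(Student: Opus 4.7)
The plan is to implement Line~\ref{algline:find_element} of Algorithm~\ref{alg:universal_ocrs_independent_subsampling} by Monte-Carlo estimation of the relevant conditional probabilities. At iteration $i$ of the first for-loop, for each candidate $j\in S_i$ define
\[
   p_{i,j}\;:=\;\Pr_{A'\sim\D_A^{S_i}}\!\bigl[j\in\spa_{\M}(\T_{\alpha/2}(A'))\bigm|j\in A'\bigr],
\]
and note that Lemma~\ref{lem:uncontentious_subsampled_elements} applied to $(\M_{S_i},\D_A^{S_i})$ guarantees that at least one $j^*\in S_i$ satisfies $p_{i,j^*}\le 1-\alpha/2$. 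To estimate $p_{i,j}$ I will draw $N$ fresh samples $A^{(1)},\dots,A^{(N)}\sim\D_A$; those $A^{(k)}$ with $j\in A^{(k)}$ form an i.i.d.\ sample from $\D_A$ conditioned on $j\in A$ and, after restriction to $S_i$, are distributed exactly as $\D_A^{S_i}$ conditioned on $j\in A'$. For each such ``good'' sample I apply a fresh independent realization of $\T_{\alpha/2}$ to $A^{(k)}\cap S_i$ and record whether $j$ lies in the resulting span; the empirical frequency $\hat p_{i,j}$ is the estimator, and the implementation selects any $j\in S_i$ with $\hat p_{i,j}\le 1-\alpha/2+\eps'$, where $\eps':=\eps\alpha/8$.

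Taking $N=\Theta\!\bigl(\log(n/\eps)/(\alpha^2\eps^2\,p_{\min})\bigr)$ with a sufficiently large constant, the multiplicative Chernoff bound (Lemma~\ref{lem:concentration}-\ref{chernoff}) guarantees that for every $j$ the number of good samples is at least $N p_{\min}/2=\Theta(\log(n/\eps)/(\alpha\eps)^2)$ except on an event of probability $O(\eps/n^2)$, and conditional on having that many good samples, Hoeffding's inequality (Lemma~\ref{lem:concentration}-\ref{hoeffding}) gives $|\hat p_{i,j}-p_{i,j}|\le\eps'$ again except on an event of probability $O(\eps/n^2)$. A union bound over the $O(n^2)$ pairs $(i,j)$ shows that all estimates are $\eps'$-accurate with probability at least $1-\eps/2$. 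On this event, the element $\pi(i)$ accepted at iteration $i$ has true $p_{i,\pi(i)}\le 1-\alpha/2+2\eps'$, so $\Pr_{A'\sim\D_A^{S_i}}[\pi(i)\notin\spa_{\M}(\T_{\alpha/2}(A'))\mid\pi(i)\in A']\ge\alpha/2-2\eps'$; substituting into Ineq.~\eqref{eq:probability_of_selection} from the proof of Theorem~\ref{thm:universal_ocrs_independent_subsampling} yields a per-element balance of at least $(\alpha/2)(\alpha/2-2\eps')=(1-\eps/2)\alpha^2/4$, and after paying the $\eps/2$ failure probability the overall balance is at least $(1-\eps/2)^2\alpha^2/4\ge(1-\eps)\alpha^2/4$.

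For the runtime, for each sample $A^{(k)}$ I compute $\spa_{\M}(\T_{\alpha/2}(A^{(k)}\cap S_i))$ using the standard greedy basis construction ($O(n)$ matroid-oracle calls) and then decide membership of each candidate $j$ in the span with one additional oracle call, for a total of $O(n\cdot t_{\M})$ per sample; generating a sample costs $t_{\D_A}$. Each outer-loop iteration therefore costs $O(N\cdot(t_{\D_A}+n\cdot t_{\M}))$, and summing over the $n$ iterations yields $O\!\bigl(\tfrac{n\log(n/\eps)}{\alpha^2\eps^2\,p_{\min}}\cdot(t_{\D_A}+n\cdot t_{\M})\bigr)$, matching the corollary. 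The main subtlety is not a deep obstacle but rather the bookkeeping around the conditional estimation: one must verify that rejection sampling realizes the correct conditional law and that enough surviving samples remain for every element simultaneously---which is precisely what forces the $1/p_{\min}$ factor (cf.\ Example~\ref{ex:p_min}).
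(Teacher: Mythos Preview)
Your proposal is correct and follows essentially the same approach as the paper's proof: Monte-Carlo estimation of the conditional spanning probabilities with Chernoff/Hoeffding concentration, a threshold rule for selecting $\pi(i)$, and a union bound over all $(i,j)$ pairs, combined with the selection-probability analysis from Ineq.~\eqref{eq:probability_of_selection}. The differences from the paper (working with $p_{i,j}$ rather than its complement, the specific choice $\eps'=\eps\alpha/8$ versus the paper's threshold $\frac{(1-\eps/4)\alpha}{2}$, and the $\eps/2$--$\eps/2$ split of the error budget) are purely cosmetic.
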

\begin{proof}
We first elaborate how to use Monte-Carlo sampling to implement Line~\ref{algline:find_element} of Algorithm~\ref{alg:universal_ocrs_independent_subsampling}, and then we prove the universality guarantee and analyze the total runtime.
\subsubsection*{Implementing Line~\ref{algline:find_element} of Algorithm~\ref{alg:universal_ocrs_independent_subsampling} through Monte-Carlo sampling}
We implement Line~\ref{algline:find_element} of Algorithm~\ref{alg:universal_ocrs_independent_subsampling} using Monte-Carlo sampling as follows:
\begin{enumerate}[(1)]
    \item Draw $m:=\ceil{\frac{128\ln(4n/\eps)}{\alpha^2\eps^2 p_{\min}}}$ independent samples $A_1,\dots,A_m$ from the $\alpha$-uncontentious prior distribution $\D_A$, and then for each $\ell\in[m]$, sample a subset $B_{\ell}\subseteq A_{\ell}$ by including each element of $A_{\ell}$ into $B_{\ell}$ independently with probability $\frac{\alpha}{2}$.
    \item For each $j\in S_i$, compute $m_j:=|\{\ell\in[m] \mid j\in A_{\ell}\}|$ and $k_j:=|\{\ell\in[m]\mid j\notin \spa_{\M}(B_{\ell}\cap S_i) \textrm{ and } j\in A_{\ell}\}|$ as follows:
    \begin{enumerate}[(i)]
        \item Initialize $m_j=0$ and $k_j=0$ for all $j\in[n]$.
        \item For each $\ell\in[m]$,  compute a basis\footnote{This can be done by removing elements from set $B_{\ell}\cap S_i$ one by one until the set is in $\M$, which takes at most $n$ queries to the membership oracle of $\M$.} $X_{\ell}$ of $B_{\ell}\cap S_i$.
        \item For each $\ell\in[m]$ and $j\in S_i$, increase $m_j$ by $1$ if $j\in A_{\ell}$, and increase $k_j$ by 1 if $j\in A_{\ell}$ and $j\notin X_{\ell}$ and $X_{\ell}\cup\{j\}\in\M$ (note that $j\notin X_{\ell}$ and $X_{\ell}\cup\{j\}\in\M$ iff $j\notin \spa_{\M}(B_{\ell}\cap S_i)$ by Lemma~\ref{lem:matroid_properties}-\ref{fact:span_of_basis}).
    \end{enumerate}
    \item\label{step:find_element} Let $\pi(i)$ be any $j\in S_i$ such that $k_j\ge\frac{(1-\eps/4)\alpha}{2}\cdot m_j$. If such $j$ does not exist, let Algorithm~\ref{alg:universal_ocrs_independent_subsampling} terminate and output an empty set.
\end{enumerate}
To prove that with the above implementation of Line~\ref{algline:find_element}, Algorithm~\ref{alg:universal_ocrs_independent_subsampling} is $(\alpha,\frac{(1-\eps)\alpha^2}{4})$-universal, we first establish the following claim.
\begin{claim}\label{claim:find_element_whp}
With probability at least $1-\frac{\eps}{4n}$, the above implementation finds an element $\pi(i)$ which satisfies that ${\Pr}_{A'\sim \D_A^{S_i}}[\pi(i)\notin \spa_{\M}(\T_{\frac{\alpha}{2}}(A'))\mid \pi(i)\in A']\ge\frac{(1-\eps/2)\alpha}{2}$.
\end{claim}
\begin{proof}[Proof of Claim~\ref{claim:find_element_whp}]
We divide the proof into two steps and then put them together.
\subsubsection*{Step 1: for all $j\in S_i$, $m_j$ is sufficiently large w.h.p.}
Notice that for all $j\in S_i$, $\E[m_j]=\sum_{\ell=1}^m\Pr_{A_{\ell}\sim\D_{A}}[j\in A_{\ell}]\ge p_{\min}\cdot m$, and it follows that
\begin{align}
    \Pr\left[m_j \le \frac{p_{\min}\cdot m}{2}\right]&\le\Pr\left[m_j \le \frac{\E[m_j]}{2}\right] &&\text{(By $\E[m_j]\ge p_{\min}\cdot m$)}\nonumber\\
    &\le 2\exp\left(-\frac{\E[m_j]}{12}\right) &&\text{(By Lemma~\ref{lem:concentration}-\ref{chernoff})} \nonumber\\
    &\le 2\exp\left(-\frac{p_{\min}\cdot m}{12}\right) &&\text{(By $\E[m_j]\ge p_{\min}\cdot m$)}\nonumber\\
    &\le 2\exp\left(-\frac{2\ln(4n/\eps)}{\alpha^2\eps^2}\right) &&\text{(By our choice of $m$)}\nonumber\\
    &\le2\exp\left(-2\ln(4n/\eps)\right)=\frac{\eps^2}{8n^2}.\label{eq:m_j}
\end{align}
\subsubsection*{Step 2: for all $j\in S_i$, $k_j$ is well-bounded w.h.p.}
For each $j\in S_i$ and $\ell\in[m]$, we define random variable $z_j^{\ell}:=\mathds{1}(j\notin \spa_{\M}(B_{\ell}\cap S_i),\,j\in A_{\ell})$ and let $q_j:=\Pr_{A\sim \D_A}[j\notin \spa_{\M}(\T_{\frac{\alpha}{2}}(A\cap S_i))\mid j\in A]$. First, observe that $k_j=\sum_{\ell=1}^m z_j^{\ell}$. Moreover, notice that conditioned on the event $j\in A_{\ell}$, $z_j^{\ell}$ is a Bernoulli random variable that has value $1$ with probability $q_j$, and conditioned on the event $j\notin A_{\ell}$, $z_j^{\ell}$ is always $0$. Hence, for any $L\subseteq[m]$, conditioned on the event that $\{\ell\in[m]\mid j\in A_{\ell}\}=L$, $k_j$ is a binomial random variable which follows distribution $\textrm{Bin}(|L|,q_j)$. Therefore, for any $b\in\{0,1,\dots,m\}$, conditioned on the event that $|\{\ell\in[m]\mid j\in A_{\ell}\}|=b$ (in other words, $m_j=b$), $k_j$ is a binomial random variable following distribution $\textrm{Bin}(b,q_j)$.

Now we show that conditioned on the event $m_j\ge\frac{p_{\min}\cdot m}{2}$, the probability that $k_j\ge\frac{(1-\eps/4)\alpha}{2}\cdot m_j$ is almost zero for any $j\in S_i$ with $q_j\le\frac{(1-\eps/2)\alpha}{2}$ and is almost one for any $j\in S_i$ with $q_j\ge\frac{\alpha}{2}$. Specifically, because $k_j$ follows distribution $\textrm{Bin}(b,q_j)$ conditioned on the event $m_j=b$, we have that $\E[k_j \mid m_j=b]=q_j\cdot b$. For any integer $b\ge\frac{p_{\min}\cdot m}{2}$ and any $j\in S_i$ with $q_j\le\frac{(1-\eps/2)\alpha}{2}$, we derive that
\begin{align}
    &\Pr\left[k_j\ge \frac{(1-\eps/4)\alpha\cdot m_j}{2} \,\middle\vert\, m_j=b\right]\nonumber\\
    \le& \Pr\left[k_j\ge \E[k_j \mid m_j = b]+\frac{\eps\alpha\cdot b}{8} \,\middle\vert\, m_j=b\right] &&\text{(Since $\E[k_j \mid m_j=b]=q_j\cdot b\le\frac{(1-\eps/2)\alpha\cdot b}{2}$)}\nonumber\\
    \le&\,2\exp\left(-\frac{\eps^2\alpha^2\cdot b}{32}\right) &&\text{(By Lemma~\ref{lem:concentration}-\ref{hoeffding})}\nonumber\\
    \le&\,2\exp\left(-\frac{\eps^2\alpha^2\cdot p_{\min}\cdot m}{64}\right) &&\text{(Since $b\ge\frac{p_{\min}\cdot m}{2}$)}\nonumber\\
    =&\,2\exp(-2\ln(4n/\eps))=\frac{\eps^2}{8n^2} &&\text{(By our choice of $m$)}.\label{eq:small_q_j}
\end{align}
Similarly, for any integer $b\ge\frac{p_{\min}\cdot m}{2}$ and any $j\in S_i$ with $q_j\ge\frac{\alpha}{2}$, we have that
\begin{align}
&\Pr\left[k_j\le \frac{(1-\eps/4)\alpha\cdot m_j}{2} \,\middle\vert\, m_j=b\right]\le\frac{\eps^2}{8n^2} &&\text{(Analogous to Ineq.~\eqref{eq:small_q_j})}.\label{eq:large_q_j}
\end{align}
\subsubsection*{Putting the two steps together}
We let $E_1$ be the event that
\[
    \forall j\in S_i,\,\left(q_j\le\frac{(1-\eps/2)\alpha}{2}\implies k_j<\frac{(1-\eps/4)\alpha\cdot m_j}{2}\right)\wedge\left(q_j\ge\frac{\alpha}{2}\implies k_j>\frac{(1-\eps/4)\alpha\cdot m_j}{2}\right).
\]
Since there exists $j\in S_i$ such that $q_j\ge\frac{\alpha}{2}$ (as shown by Ineq.~\eqref{eq:iterative_find} in the proof of Theorem~\ref{thm:universal_ocrs_independent_subsampling}), event $E_1$ implies that Step~\ref{step:find_element} of the above implementation finds an element $\pi(i)=j$ such that $q_j\ge\frac{(1-\eps/2)\alpha}{2}$. Thus, it remains to prove that $E_1$ happens w.h.p. To this end, let $E_2$ denote the event that $\forall j\in S_i,\,m_j\ge\frac{p_{\min}\cdot m}{2}$. By Ineq.~\eqref{eq:m_j} and a union bound, we have that $\Pr[\bar{E_2}]\le\frac{\eps^2}{8n^2}\cdot|S_i|\le\frac{\eps^2}{8n}$.
By Ineq.~\eqref{eq:small_q_j} and~\eqref{eq:large_q_j} and a union bound, we get that
$\Pr[\bar{E_1}\mid E_2]\le\frac{\eps^2}{8n^2}\cdot |S_i|\le\frac{\eps^2}{8n}$. It follows that
$\Pr[\bar{E_1}]\le \Pr[\bar{E_2}]+\Pr[\bar{E_1}\mid E_2]\times \Pr[E_2] \le \Pr[\bar{E_2}]+\Pr[\bar{E_1}\mid E_2] \le \frac{\eps^2}{8n}+\frac{\eps^2}{8n}\le\frac{\eps}{4n}$.
\end{proof}

\subsubsection*{Establishing $(\alpha,\frac{(1-\eps)\alpha^2}{4})$-universality}
Claim~\ref{claim:find_element_whp} shows that with probability at least $1-\frac{\eps}{4n}$, the above implementation finds an element $\pi(i)$ such that ${\Pr}_{A'\sim \D_A^{S_i}}[\pi(i)\notin \spa_{\M}(\T_{\frac{\alpha}{2}}(A'))\mid \pi(i)\in A']\ge\frac{(1-\eps/2)\alpha}{2}$, for the $i$-th iteration of the first for loop of Algorithm~\ref{alg:universal_ocrs_independent_subsampling}.
By a union bound, with probability at least $1-\frac{\eps}{4}$, the above implementation finds element $\pi(i)$ such that ${\Pr}_{A'\sim \D_A^{S_i}}[\pi(i)\notin \spa_{\M}(\T_{\frac{\alpha}{2}}(A'))\mid \pi(i)\in A']\ge\frac{(1-\eps/2)\alpha}{2}$ for all iterations $i\in[n]$. If Algorithm~\ref{alg:universal_ocrs_independent_subsampling} successfully finds such elements $\pi(i)$'s for all iterations $i\in[n]$, the probability that element $\pi(i)$ is selected in the second for loop of Algorithm~\ref{alg:universal_ocrs_independent_subsampling}, conditioned on $\pi(i)$ being active, is at least ${\Pr}_{A'\sim \D_A^{S_i}}[\pi(i)\notin \spa_{\M}(\T_{\frac{\alpha}{2}}(A'))\mid \pi(i)\in A']\cdot \frac{\alpha}{2}\ge \frac{(1-\eps/2)\alpha^2}{4}$ (as shown by Ineq.~\eqref{eq:probability_of_selection} in the proof of Theorem~\ref{thm:universal_ocrs_independent_subsampling}). Therefore, overall, the probability that each element is selected conditioned on it being active is at least $(1-\frac{\eps}{4})\cdot\frac{(1-\eps/2)\alpha^2}{4}\ge\frac{(1-\eps)\alpha^2}{4}$.

\subsubsection*{Analyzing the overall runtime}
First, we note that the second for loop of Algorithm~\ref{alg:universal_ocrs_independent_subsampling} requires $O(n)$ queries to the membership oracle of $\M$, which can be done in $O(t_{\M}\cdot n)$ time.

Besides, in the above implementation of Line~\ref{algline:find_element}, we sample $m$ sets from $\D_A$ and subsample those sets, which takes $O(t_{\D_A}\cdot m+nm)$ time. Then, we count $m_j$ and $k_j$ for all $j\in S_i$, which requires $O(nm)$ membership queries and hence takes $O(t_{\M}\cdot nm)$ time in total. Thus, the runtime of the above implementation of Line~\ref{algline:find_element} is $O(m(t_{\D_A}+t_{\M}\cdot n))$. It follows that with this implementation, the first for loop of Algorithm~\ref{alg:universal_ocrs_independent_subsampling} takes $O(nm(t_{\D_A}+t_{\M}\cdot n))$ time.

Therefore, the overall runtime is $O(nm(t_{\D_A}+t_{\M}\cdot n))=O\left(\frac{n\log(n/\eps)}{\alpha^2\eps^2 p_{\min}}\cdot (t_{\D_A}+t_{\M}\cdot n)\right)$.
\end{proof}

\section{Tight instances for Algorithm~\ref{alg:universal_ocrs_independent_subsampling} and Algorithm~\ref{alg:universal_ocrs_correlated_subsampling}}\label{sec:tight_instances}
In Theorem~\ref{thm:universal_ocrs_independent_subsampling} and Theorem~\ref{thm:universal_ocrs_correlated_subsampling}, we showed that Algorithm~\ref{alg:universal_ocrs_independent_subsampling} and Algorithm~\ref{alg:universal_ocrs_correlated_subsampling} are $(\alpha,\Omega(\alpha^2))$-universal respectively; in the analysis, the quadratic exponent in their universality guarantees arose from balancing two conflicting goals which we hope to achieve through subsampling: (i) removing elements such that they do not span other elements, and (ii) retaining elements such that they remain selectable. In this section, we provide concrete examples to demonstrate that the universality guarantees that we established for Algorithm~\ref{alg:universal_ocrs_independent_subsampling} and Algorithm~\ref{alg:universal_ocrs_correlated_subsampling} are essentially tight. We first show this for Algorithm~\ref{alg:universal_ocrs_independent_subsampling} using Example~\ref{ex:independent_subsampling_lower_bound}.
\begin{example}\label{ex:independent_subsampling_lower_bound}
For any $\alpha\in\left(0,\frac{1}{2}\right]$ such that $n:=\frac{1}{\alpha}$ is an integer, let $m$ be the largest integer such that $\left(1-\frac{\alpha}{2}\right)\cdot\left(1-\frac{\alpha^2}{4}\right)^m\ge\frac{\alpha}{2}$ (and hence $\left(1-\frac{\alpha^2}{4}\right)^m\le\frac{\alpha}{2(1-\alpha^2/4)(1-\alpha/2)}$). Let $G_1=(V_1,E_1)$ be the graph in Figure~\ref{fig:parallel_edges}, where $V_1=\{w,w'\}$, and $E_1$ consists of $n$ parallel edges $e_1,\dots,e_n$ between $w$ and $w'$. Let $G_2=(V_2,E_2)$ be the hat example in Figure~\ref{fig:hats}, where $V_2=\{v_1,\dots,v_m,u,u'\}$ and $E_2=\{(v_i,u),\,(v_i,u')\mid i\in[m]\}\cup\{(u,u')\}$. Let graph $G=(V_1\cup V_2, E_1\cup E_2)$ be the union of $G_1$ and $G_2$. Then, let $\M\subseteq 2^{E_1\cup E_2}$ be the graphic matroid\footnote{Specifically, the graphic matroid of a graph consists of all subsets of edges that do not contain a cycle.} of graph $G$. Let $\D_A\in\Delta(2^{E_1\cup E_2})$ be the trivial prior distribution such that all edges in $E_1\cup E_2$ are always active.
\end{example}
\begin{figure}
    \centering
    \begin{subfigure}{0.49\textwidth}
        \centering
        \includegraphics[scale=0.5]{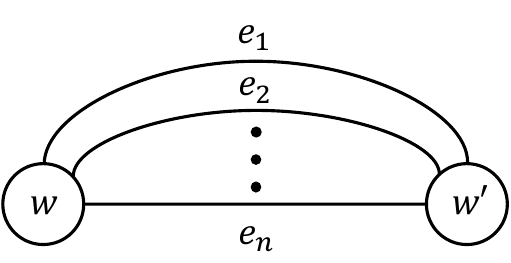}
        \caption{Graph $G_1$: parallel edges}
        \label{fig:parallel_edges}
    \end{subfigure}%
    ~
    \begin{subfigure}{0.49\textwidth}
        \centering
        \includegraphics[scale=0.5]{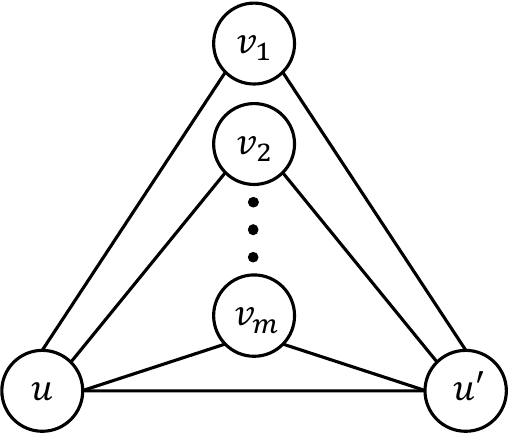}
        \caption{Graph $G_2$: hats}
        \label{fig:hats}
    \end{subfigure}
    \caption{Illustration of Example~\ref{ex:independent_subsampling_lower_bound}.}
\end{figure}
\begin{proposition}\label{prop:independent_subsampling_lower_bound}
In Example~\ref{ex:independent_subsampling_lower_bound}, the prior distribution $\D_A$ is $\alpha$-uncontentious for matroid $\M$, but Algorithm~\ref{alg:universal_ocrs_independent_subsampling} is at most $O(\alpha^2)$-balanced for $(\M,\D_A)$. Moreover, as $\alpha$ approaches $0$, Algorithm~\ref{alg:universal_ocrs_independent_subsampling} is arbitrarily close to being $\frac{\alpha^2}{4}$-balanced for $(\M,\D_A)$.
\end{proposition}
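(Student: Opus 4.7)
The proof has two parts: verifying that $\D_A$ is $\alpha$-uncontentious for $\M$, and exhibiting an execution of Algorithm~\ref{alg:universal_ocrs_independent_subsampling} on $(\M,\D_A)$ whose balance is at most $O(\alpha^2)$ and tends to $\alpha^2/4$ as $\alpha\to 0$. The key structural remark is that $G_1$ and $G_2$ are vertex-disjoint, so $\M$ is the direct sum of the graphic matroids of $G_1$ and $G_2$; in particular spans, independent sets, and $\alpha$-balanced CRSs all decompose across the two components. Using this, I build an $\alpha$-balanced CRS for $(\M,\D_A)$: on $G_1$, return one of the $n=1/\alpha$ parallel edges chosen uniformly at random, giving each edge marginal probability $1/n=\alpha$; on $G_2$, always include $(u,u')$ and for each hat $v_i$ include one of $\{(v_i,u),(v_i,u')\}$ chosen uniformly at random, yielding a spanning tree of $G_2$ in which each hat edge has marginal probability $1/2\ge\alpha$ (using $\alpha\le 1/2$). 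This establishes $\alpha$-uncontentiousness.

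Next, I argue that Algorithm~\ref{alg:universal_ocrs_independent_subsampling} admits an execution in which $(u,u')$ is chosen as $\pi(n)$, and therefore arrives last. Because every element is always active, the condition at Line~\ref{algline:find_element} for $i=n$ simplifies to $\Pr[(u,u')\notin\spa_{\M}(\T_{\alpha/2}(E_1\cup E_2))]\ge\alpha/2$, and the direct-sum structure further restricts this to the $E_2$ component. Since $(u,u')$ is spanned by $X\subseteq E_2$ precisely when $(u,u')\in X$ or some hat has both of its edges in $X$, independence of the subsampling gives
\[
\Pr[(u,u')\notin\spa_{\M}(\T_{\alpha/2}(E_1\cup E_2))]=(1-\alpha/2)(1-\alpha^2/4)^m,
\]
which is $\ge\alpha/2$ by the defining inequality of $m$. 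Hence $(u,u')$ is a valid choice for $\pi(n)$; for the remaining steps $i<n$, Lemma~\ref{lem:uncontentious_subsampled_elements} (applied exactly as in the proof of Theorem~\ref{thm:universal_ocrs_independent_subsampling}) guarantees that some valid $\pi(i)$ exists.

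Fix such an execution. The heart of the argument is the combinatorial claim that $(u,u')$ is selectable at step $n$ if and only if no hat has both of its edges in $T$. The ``if'' direction is immediate, since $X_{n-1}\subseteq T$ implies no hat has both edges in $X_{n-1}$, hence $X_{n-1}$ has no $u$-$u'$ path. For the ``only if'' direction, suppose some hats have both edges in $T$, and let $v_i$ be the hat among these whose second arriving edge comes earliest in $\pi$. When the first edge of $v_i$ arrives, $X$ contains no $u$-$u'$ path---no other hat has yet had both edges added by the choice of $v_i$, and $(u,u')$ arrives last---so the edge is selectable and is added; the same reasoning applies to the second edge of $v_i$, so both edges of $v_i$ lie in $X_{n-1}$, forming a $u$-$u'$ path. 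Combining the claim with independence of the subsampling across distinct elements yields
\[
\Pr[(u,u')\in X_n]=\tfrac{\alpha}{2}\cdot(1-\alpha^2/4)^m.
\]
The defining inequality of $m$ and its failure at $m+1$ pin down $(1-\alpha^2/4)^m\in\bigl[\tfrac{\alpha/2}{1-\alpha/2},\tfrac{\alpha/2}{(1-\alpha/2)(1-\alpha^2/4)}\bigr)$, so $\Pr[(u,u')\in X_n]$ is squeezed between $\tfrac{\alpha^2/4}{1-\alpha/2}$ and $\tfrac{\alpha^2/4}{(1-\alpha/2)(1-\alpha^2/4)}$; both bounds are $\Theta(\alpha^2)$ on $(0,1/2]$ and converge to $\alpha^2/4$ as $\alpha\to 0$. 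Since the algorithm's balance is at most the selection probability of any single element, this establishes both parts of the proposition. The main obstacle is the combinatorial claim: one must identify the right notion of the ``first completed'' hat and argue carefully that its two edges are both committed to $X$ before any $u$-$u'$ path can form from the interference of other hats.
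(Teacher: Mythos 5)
Your proof is correct and takes essentially the same route as the paper's: the same $\alpha$-balanced CRS certifies uncontentiousness, the bottleneck element is $(u,u')$, and its selection probability is computed as exactly $\frac{\alpha}{2}\cdot\left(1-\frac{\alpha^2}{4}\right)^m$, which the defining inequality for $m$ pins to $\Theta(\alpha^2)$ and to $\frac{\alpha^2}{4}$ as $\alpha\to 0$. The only difference is cosmetic: the paper fixes a fully explicit preselected order (hat edges, then $(u,u')$, then the parallel edges) and verifies the condition at Line~\ref{algline:find_element} for every element, whereas you place $(u,u')$ last, verify the condition only for it, and show the key combinatorial claim holds regardless of how the remaining elements are ordered --- both are valid executions of the (underspecified) algorithm.
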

\begin{proof}
We start by proving that in Example~\ref{ex:independent_subsampling_lower_bound}, $\D_A$ is $\alpha$-uncontentious for matroid $\M$.
\subsubsection*{$\D_A$ is $\alpha$-uncontentious for $\M$}
We consider the following CRS:
\begin{enumerate}[(1)]
    \item Initialize $X=\{(u,u')\}$.
    \item For each $i\in[m]$, with probability $\frac{1}{2}$, add edge $(v_i,u)$ to $X$, and with probability $\frac{1}{2}$, add edge $(v_i,u')$ to $X$.
    \item Select an edge from $\{e_1,\dots,e_n\}$ uniformly at random and add it to $X$. Then, output $X$.
\end{enumerate}
We note that $X$ is always feasible, because it does not contain multiple edges between $w$ and $w'$, and it does not contain a \emph{hat} (i.e., a pair of edges $(v_i,u)$ and $(v_i,u')$ for any $i\in[m]$). Moreover, We observe that for all $i\in[n]$, the probability that edge $e_i$ is included in $X$ is $\frac{1}{n}=\alpha$, and for all $i\in[m]$, the probability that edge $(v_i,u)$ (or $(v_i,u')$) is included in $X$ is $\frac{1}{2}\ge\alpha$, and the edge $(u,u')$ is always included in $X$. Therefore, the above CRS is $\alpha$-balanced for $(\M,\D_A)$, and hence, $\D_A$ is $\alpha$-uncontentious for $\M$.
\subsubsection*{Order $\pi$ satisfies the order-selecting condition of Algorithm~\ref{alg:universal_ocrs_independent_subsampling}}
Now we show that order $\pi$ specified by Eq.~\eqref{eq:order_pi_independent_subsampling} satisfies the order-selecting condition at Line~\ref{algline:find_element} of Algorithm~\ref{alg:universal_ocrs_independent_subsampling}, and then we prove that Algorithm~\ref{alg:universal_ocrs_independent_subsampling} with preselected order $\pi$ is $O(\alpha^2)$-balanced. Essentially, order $\pi$ starts with the hats in $G_2$, followed by the edge at the bottom in $G_2$, and ends with the parallel edges in $G_1$.
\begin{equation}\label{eq:order_pi_independent_subsampling}
    \pi(i)=
    \begin{cases}
        (v_{i},u) &\textrm{for } 1\le i\le m\\
        (v_{i-m},u') &\textrm{for } m+1\le i\le 2m\\
        (u,u') &\textrm{for } i=2m+1\\
        e_{i-2m-1} &\textrm{for } 2m+2\le i\le 2m+n+1.
    \end{cases}
\end{equation}
For each $i\in[2m]$, element $\pi(i)$ is an edge of a hat in $G_2$, which cannot be spanned by any set of edges that do not contain $\pi(i)$ and the other edge in the hat (we use the notation $\pi(i)'$ to refer to the other edge in the hat for convenience). Therefore, for all $i\in[2m]$, we have that
\begin{align*}
\Pr[\pi(i)\notin\spa_{\M}(\T_{\frac{\alpha}{2}}(E_1\cup E_2))]&\ge\Pr[\pi(i)\notin\T_{\frac{\alpha}{2}}(E_1\cup E_2),\,\pi(i)'\notin\T_{\frac{\alpha}{2}}(E_1\cup E_2)]\\
&=\left(1-\frac{\alpha}{2}\right)^2=1-\alpha+\frac{\alpha^2}{4}\ge2\alpha-\alpha+\frac{\alpha^2}{4}\ge\alpha,
\end{align*}
which implies that for all $i\in[2m]$, element $\pi(i)$ satisfies the condition at Line~\ref{algline:find_element} of Algorithm~\ref{alg:universal_ocrs_independent_subsampling}. Moreover, element $\pi(2m+1)$ is edge $(u,u')$ in $G_2$, which cannot be spanned by any set of edges that do not contain any hat and $\pi(2m+1)$ itself. It follows that
\begin{align*}
    &\Pr[\pi(2m+1)\notin\spa_{\M}(\T_{\frac{\alpha}{2}}(E_1\cup E_2))]\\
    \ge&\Pr[\pi(2m+1)\notin\T_{\frac{\alpha}{2}}(E_1\cup E_2)]\times\Pr[\forall\,j\in[m],\,\{(v_j,u),(v_j,u')\}\notin\T_{\frac{\alpha}{2}}(E_1\cup E_2)]\\
    =&\left(1-\frac{\alpha}{2}\right)\cdot\left(1-\frac{\alpha^2}{4}\right)^m\ge\frac{\alpha}{2}\qquad\qquad\qquad\qquad\qquad\qquad\qquad\qquad\qquad\text{(By our choice of $m$)},
\end{align*}
which implies that element $\pi(2m+1)$ satisfies the condition at Line~\ref{algline:find_element} of Algorithm~\ref{alg:universal_ocrs_independent_subsampling}. Furthermore, for each $i\in\{2m+2,\dots,2m+n+1\}$, element $\pi(i)$ is an edge in $G_1$, which cannot be spanned by any set of edges that do not contain $\pi(i)$ and any other edge in $G_1$, and hence, we have that
\begin{align*}
    \Pr[\pi(i)\notin\spa_{\M}(\T_{\frac{\alpha}{2}}(E_1\cup E_2))]&\ge\Pr[\forall\,j\in[n],\,e_j\notin\T_{\frac{\alpha}{2}}(E_1\cup E_2)]\\
    &\ge1-\sum_{j\in[n]}\Pr[e_j\in\T_{\frac{\alpha}{2}}(E_1\cup E_2)]&&\text{(By a union bound)}\\
    &=1-n\cdot\frac{\alpha}{2}=\frac{1}{2}\ge\alpha,
\end{align*}
which implies that element $\pi(i)$ satisfies the condition at Line~\ref{algline:find_element} of Algorithm~\ref{alg:universal_ocrs_independent_subsampling}.
\subsubsection*{Algorithm~\ref{alg:universal_ocrs_independent_subsampling} is $O(\alpha^2)$-balanced for $(\M,\D_A)$}
Suppose that Algorithm~\ref{alg:universal_ocrs_independent_subsampling} uses order $\pi$ defined by Eq.~\eqref{eq:order_pi_independent_subsampling} as the preselected order.
We observe that element $\pi(2m+1)$ (i.e., edge $(u,u')$) is selected by Algorithm~\ref{alg:universal_ocrs_independent_subsampling} iff (i) $\pi(2m+1)\in T$, and (ii) for all $i\in[m]$, at least one of the two edges $(v_i,u)$ and $(v_i,u')$ does not belong to $T$. Thus, the probability that element $\pi(2m+1)$ is selected by Algorithm~\ref{alg:universal_ocrs_independent_subsampling} is $\frac{\alpha}{2}\cdot\left(1-\frac{\alpha^2}{4}\right)^m$, which is at most $\frac{\alpha}{2}\cdot \frac{\alpha}{2(1-\alpha^2/4)(1-\alpha/2)}$ by our choice of $m$. This implies that Algorithm~\ref{alg:universal_ocrs_independent_subsampling} is $O(\alpha^2)$-balanced for $(\M,\D_A)$. Finally, we remark that as $\alpha$ approaches $0$, the ratio $\frac{\alpha}{2}\cdot \frac{\alpha}{2(1-\alpha^2/4)(1-\alpha/2)}$ approaches $\frac{\alpha^2}{4}$, and hence, Algorithm~\ref{alg:universal_ocrs_independent_subsampling} is arbitrarily close to being $\frac{\alpha^2}{4}$-balanced for $(\M,\D_A)$.
\end{proof}
Next, we use Example~\ref{ex:correlated_subsampling_lower_bound} to show that the universality guarantee established for Algorithm~\ref{alg:universal_ocrs_correlated_subsampling} in Theorem~\ref{thm:universal_ocrs_correlated_subsampling} is essentially tight.
\begin{example}\label{ex:correlated_subsampling_lower_bound}
For any $\alpha\in\left(0,\frac{1}{2}\right]$ and any integer $n$ such that $k:=\alpha n$ is an integer between $1$ and $n-1$, let $\M\subseteq 2^{[n]}$ be the $k$-uniform matroid, i.e., $\M=\{X\subseteq[n]\mid |X|\le k\}$. Let $\D_A\in\Delta(2^{[n]})$ be the trivial prior distribution such that all elements in $[n]$ are always active.
\end{example}

\begin{proposition}
In Example~\ref{ex:correlated_subsampling_lower_bound}, the prior distribution $\D_A$ is $\alpha$-uncontentious for matroid $\M$, but Algorithm~\ref{alg:universal_ocrs_correlated_subsampling} is at most $O(\alpha^2)$-balanced for $(\M,\D_A)$. Moreover, as $n$ goes to infinity, Algorithm~\ref{alg:universal_ocrs_correlated_subsampling} is arbitrarily close to being $\frac{\alpha^2}{2}$-balanced for $(\M,\D_A)$.
\end{proposition}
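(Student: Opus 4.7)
First I would verify that $\D_A$ is $\alpha$-uncontentious for $\M$ via the following CRS: on input $A=[n]$, output a uniformly random $k$-element subset of $[n]$. Every element lies in such a random subset with probability $k/n=\alpha$, so the CRS is $\alpha$-balanced. The rest of the argument is a direct calculation that pins down the selection probability of the element $e := \pi(n)$ placed last in the order $\pi$ produced by Subroutine~\ref{sub:new_universal_order}. Because $A=[n]$ deterministically and $\M$ is $k$-uniform, Algorithm~\ref{alg:universal_ocrs_correlated_subsampling} selects $e$ at step $n$ iff $e\in T$ and $|T\cap\pi^{n-1}|<k$. Since $\pi^{n-1}=[n]\setminus\{e\}$, conditional on $e\in T$ the cardinality constraint becomes $|T|-1<k$, so altogether
\[
\Pr[e \text{ selected}] \;=\; \Pr[e\in T \text{ and } |T|\le k].
\]

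Next I would evaluate this probability from the construction of $T$. Because $\sigma'\sim\cP([n+1])$ and $T=\pref(\sigma',n+1)$, the cardinality $|T|$ equals the position of $n+1$ in $\sigma'$ minus one and is therefore uniform on $\{0,1,\dots,n\}$; conditional on $|T|=t$, the set $T$ is a uniformly random $t$-subset of $[n]$, so $\Pr[e\in T \mid |T|=t]=t/n$. Summing,
\[
\Pr[e \text{ selected}] \;=\; \sum_{t=0}^{k} \frac{1}{n+1}\cdot\frac{t}{n} \;=\; \frac{k(k+1)}{2n(n+1)}.
\]

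Finally, substituting $k=\alpha n$ yields $\Pr[e\text{ selected}] = \frac{\alpha^2 n + \alpha}{2(n+1)}$. Using $k\ge 1$ and hence $k+1\le 2k$, this is at most $\frac{k^2}{n(n+1)} \le \frac{k^2}{n^2} = \alpha^2$, which proves the first claim that Algorithm~\ref{alg:universal_ocrs_correlated_subsampling} is at most $O(\alpha^2)$-balanced for $(\M,\D_A)$. Moreover, a quick algebraic check gives $\frac{\alpha^2 n + \alpha}{2(n+1)} - \frac{\alpha^2}{2} = \frac{\alpha(1-\alpha)}{2(n+1)} \to 0$ as $n\to\infty$, so the balancedness tends to $\alpha^2/2$ from above, yielding the second claim. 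There is no deep obstacle; the only point requiring care is the symmetry observation that all elements of $[n]$ are interchangeable in both $\M$ and $\D_A$, so the closed-form selection probability above for $\pi(n)$ holds regardless of how Subroutine~\ref{sub:new_universal_order} breaks ties when producing $\pi$.
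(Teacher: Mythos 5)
Your proposal is correct and follows essentially the same route as the paper: both arguments reduce to computing the exact selection probability of the last element in the preselected order and obtain $\frac{k(k+1)}{2n(n+1)}=\frac{\alpha(\alpha+1/n)}{2(1+1/n)}$. The only cosmetic difference is that you condition on $|T|$ being uniform on $\{0,\dots,n\}$ and use $\Pr[e\in T\mid |T|=t]=t/n$, whereas the paper conditions on $|T_{n-1}|$ and invokes Eq.~\eqref{eq:P_pi_i_in_T} of Lemma~\ref{lem:correlated_sampling_preceding_elements}; these are equivalent reparameterizations of the same sum.
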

\begin{proof}
First, notice that in Example~\ref{ex:correlated_subsampling_lower_bound}, $\D_A$ is $\alpha$-uncontentious for matroid $\M$, because the CRS that outputs a uniformly random subset of $k$ elements in $[n]$ is $\frac{k}{n}$-balanced, and $\frac{k}{n}=\alpha$.

Moreover, because of the symmetry of instance $(\M,\D_A)$, we can assume w.l.o.g.~that the order preselected by Algorithm~\ref{alg:universal_ocrs_correlated_subsampling} is the natural order $1,2,\dots,n$. We denote $T_{i}:=T\cap[i]$ for all $i\in[n]$ and $T_0:=\emptyset$, where $T$ is the set sampled in Algorithm~\ref{alg:universal_ocrs_correlated_subsampling}. Then, Algorithm~\ref{alg:universal_ocrs_correlated_subsampling} essentially enumerates all elements according to the natural order and selects each element $i$ iff $i\in T_i$ and $|T_{i-1}|<k$. In particular, this implies that the probability that element $n$ is selected by Algorithm~\ref{alg:universal_ocrs_correlated_subsampling} is $\Pr[n\in T_{n},\,|T_{n-1}|<k]$, and we derive that
\begin{align*}
    &\Pr[n\in T_{n},\,|T_{n-1}|<k]\\
    =&\sum_{\ell=0}^{k-1}\Pr[n\in T_{n},\,|T_{n-1}|=\ell]\\
    =&\sum_{\ell=0}^{k-1}\Pr[n\in T_{n}\mid |T_{n-1}|=\ell]\times \Pr[|T_{n-1}|=\ell]\\
    =&\sum_{\ell=0}^{k-1}\frac{\ell+1}{n+1}\times\Pr[|T_{n-1}|=\ell] &&\text{(By Eq.~\eqref{eq:P_pi_i_in_T} in Lemma~\ref{lem:correlated_sampling_preceding_elements})}\\
    =&\sum_{\ell=0}^{k-1}\frac{\ell+1}{n+1}\times\Pr_{\sigma\sim\cP([n])}[|\pref(\sigma,n)|=\ell] &&\text{(By Lemma~\ref{lem:correlated_sampling_preceding_elements})}\\
    =&\sum_{\ell=0}^{k-1}\frac{\ell+1}{n+1}\times\frac{1}{n}=\frac{k(k+1)}{2n(n+1)}=\frac{\alpha(\alpha+1/n)}{2(1+1/n)},
\end{align*}
where the third-to-last equality is because the probability that there are $\ell$ elements before element $n$ in a uniformly random permutation of $[n]$ is $\frac{1}{n}$ for any $\ell\in\{0,1,\dots,n-1\}$. Thus, the probability that element $n$ is selected by Algorithm~\ref{alg:universal_ocrs_correlated_subsampling} is $\frac{\alpha(\alpha+1/n)}{2(1+1/n)}=O(\alpha^2)$. This implies that Algorithm~\ref{alg:universal_ocrs_correlated_subsampling} is $O(\alpha^2)$-balanced for $(\M,\D_A)$. Finally, we remark that as $n$ goes to infinity, $\frac{\alpha(\alpha+1/n)}{2(1+1/n)}$ approaches $\frac{\alpha^2}{2}$, and hence, Algorithm~\ref{alg:universal_ocrs_correlated_subsampling} is arbitrarily close to being $\frac{\alpha^2}{2}$-balanced for $(\M,\D_A)$.
\end{proof}
On a separate note, the examples in this section are constructed only to demonstrate that the universality guarantees which we established for Algorithm~\ref{alg:universal_ocrs_independent_subsampling} and Algorithm~\ref{alg:universal_ocrs_correlated_subsampling} are tight. There are simple fixes that can be made to these two algorithms to address those examples. However, it seems to us that to strengthen Algorithm~\ref{alg:universal_ocrs_independent_subsampling} and Algorithm~\ref{alg:universal_ocrs_correlated_subsampling} to achieve $(\alpha,\Omega(\alpha))$-universality in general, we need to leverage the fact that some elements can be subsampled with a significantly higher probability than others, while keeping the probability of being spanned by the subsample sufficiently low for every element. For instance, in Example~\ref{ex:independent_subsampling_lower_bound}, if we subsample edges in all hats with probability $\frac{\alpha}{2}$ and subsample edge $(u,u')$ with probability $1$, the probability that an edge in a hat is spanned by the subsample remains small.

\section{Details of approximately solving the LPs}\label{sec:approximate_lp_solving}
In this section, we discuss the details of approximately solving the LPs in Section~\ref{sec:ocrs_lp} and Section~\ref{sec:secretary_to_ocrs}. First, we show that the coefficients in those LPs can be estimated with sufficiently high accuracy.
\begin{lemma}\label{lem:estimating_lp_coefficients}
For any matroid $\M\subseteq2^{[n]}$ and any prior distribution $\D_A\in\Delta(2^{[n]})$, given any OCRS $\D_{\phi}$, for any $\delta,\eta>0$, we can compute random estimates $\tilde{x}_i,\tilde{q}_i\in\Q$ of $x_i:=\Pr_{A\sim\D_A}[i\in A]$ and $q_i:=\Pr_{A\sim\D_A,\phi\sim\D_{\phi}}[i\in \phi(A)]$ for all $i\in[n]$ such that
\begin{align*}
&\Pr\left[|\tilde{x}_i-x_i|\ge\eta x_i\right]\le\delta  \text{ and }\Pr\left[|\tilde{q}_i-q_i|\ge\eta x_i\right]\le\delta.
\end{align*}
Moreover, for each $i\in[n]$, estimates $\tilde{x}_i$ and $\tilde{q}_i$ have encoding lengths $O\left(\log\left(\frac{\log(1/\delta)}{\eta\cdot p_{\min}}\right)\right)$ and can be computed in $O\left(\frac{\log(1/\delta)}{\eta^2p_{\min}^2}\cdot(t_{\D_{A}}+t_{\D_{\phi}})\right)$ time, where $p_{\min}:=\min_{i\in[n]}\Pr_{A\sim\D_A}[i\in A]$, and $t_{\D_A}$ is the time it takes to generate a sample of $\D_A$, and $t_{\D_{\phi}}$ is the runtime of OCRS $\D_{\phi}$ on $(\M,\D_A)$.
\end{lemma}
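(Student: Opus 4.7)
The plan is to use straightforward Monte-Carlo sampling. I would first draw $m$ independent samples $A_1,\ldots,A_m$ from $\D_A$ (each costing $t_{\D_A}$ time), and for each sample run the OCRS $\D_\phi$ with fresh internal randomness to obtain $\phi_1(A_1),\ldots,\phi_m(A_m)$ (each such run costing $t_{\D_\phi}$ time). Then I would set the empirical estimates
\[
\tilde{x}_i := \tfrac{1}{m}\,|\{j\in[m]\mid i\in A_j\}|, \qquad \tilde{q}_i := \tfrac{1}{m}\,|\{j\in[m]\mid i\in \phi_j(A_j)\}|,
\]
both of which are rationals of the form $k/m$ with $k\in\{0,\ldots,m\}$, hence with encoding length $O(\log m)$.

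The number of samples $m$ is chosen to make the two error bounds hold with the required probability. For $\tilde{x}_i$ I would use the multiplicative Chernoff inequality (Lemma~\ref{lem:concentration}-\ref{chernoff}) applied to the Bernoulli variables $\mathds{1}(i\in A_j)$, whose mean is $x_i\ge p_{\min}$: this yields $\Pr[|\tilde{x}_i-x_i|\ge\eta x_i]\le 2\exp(-\eta^2 m x_i/3)\le 2\exp(-\eta^2 m\, p_{\min}/3)$. For $\tilde{q}_i$, since the target additive error $\eta x_i$ is expressed in units of $x_i$ rather than $q_i$, I would use Hoeffding's inequality (Lemma~\ref{lem:concentration}-\ref{hoeffding}) on the Bernoullis $\mathds{1}(i\in\phi_j(A_j))$: this gives $\Pr[|\tilde{q}_i-q_i|\ge\eta x_i]\le 2\exp(-2m\eta^2 x_i^2)\le 2\exp(-2m\eta^2 p_{\min}^2)$. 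Taking $m=\Theta\!\left(\frac{\log(1/\delta)}{\eta^2 p_{\min}^2}\right)$ makes both probabilities at most $\delta$, simultaneously yielding the stated time complexity $O\!\left(\frac{\log(1/\delta)}{\eta^2 p_{\min}^2}\cdot(t_{\D_A}+t_{\D_\phi})\right)$ and the stated encoding length $O\!\left(\log\!\left(\frac{\log(1/\delta)}{\eta\cdot p_{\min}}\right)\right)$.

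There is essentially no hard step here: the main subtlety is simply choosing the right concentration inequality for each estimate, since the error guarantee is multiplicative in $x_i$ for $\tilde{x}_i$ but only additive-in-$x_i$ (not in $q_i$) for $\tilde{q}_i$. The $p_{\min}^2$ dependence in the sample size comes from the latter, because an additive error of $\eta x_i$ on the probability $q_i$ can correspond to a very large multiplicative error when $q_i\ll x_i$, so Chernoff's sharper tail does not help and Hoeffding's bound (scaling with $x_i^2$) is what determines the final sample complexity.
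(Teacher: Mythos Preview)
Your proposal is correct and essentially identical to the paper's proof: both draw $m=\Theta\!\bigl(\tfrac{\log(1/\delta)}{\eta^2 p_{\min}^2}\bigr)$ samples, run the OCRS on each, and take empirical frequencies. The only cosmetic difference is that the paper applies Hoeffding's inequality to \emph{both} $\tilde{x}_i$ and $\tilde{q}_i$ (since the bottleneck $p_{\min}^2$ term comes from $\tilde{q}_i$ anyway, there is no gain in using the sharper multiplicative Chernoff bound for $\tilde{x}_i$), whereas you use Chernoff for $\tilde{x}_i$; your observation that only Hoeffding suffices for $\tilde{q}_i$, and that this is what drives the $p_{\min}^2$ dependence, is exactly right.
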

\begin{proof}
For any $\eta,\delta>0$, we sample $m=\ceil{\frac{2\ln(2/\delta)}{\eta^2p_{\min}^2}}$ sets $A_1,\dots,A_m$ from $\D_A$ independently, and we run OCRS $\D_{\phi}$ for each sample, which outputs sets $B_1,\dots,B_m$. For all $i\in[n]$ and $j\in[m]$, we let $a_{i,j}=\mathds{1}(i\in A_j)$ and $b_{i,j}=\mathds{1}(i\in B_j)$. It is clear that $\E[a_{i,j}]=x_i$ and $\E[b_{i,j}]=q_i$ for all $i\in[n]$ and $j\in[m]$. Therefore, by Lemma~\ref{lem:concentration}-\ref{hoeffding}~and our choice of $m$, we have that for all $i\in[n]$,
\begin{align*}
&\Pr\left[\left|\sum\nolimits_{j=1}^m a_{i,j}-mx_i\right|\ge \eta\cdot mx_i\right]\le 2\exp\left(-\frac{\eta^2 m x_i^2}{2}\right)\le2\exp\left(-\frac{\eta^2 m p_{\min}^2}{2}\right)\le\delta,\\
&\Pr\left[\left|\sum\nolimits_{j=1}^m b_{i,j}-mq_i\right|\ge \eta\cdot mx_i\right]\le 2\exp\left(-\frac{\eta^2 m x_i^2}{2}\right)\le2\exp\left(-\frac{\eta^2 m p_{\min}^2}{2}\right)\le\delta.
\end{align*}
We let $\tilde{x}_i=\frac{\sum_{j=1}^m a_{i,j}}{m}$ and $\tilde{q}_i=\frac{\sum_{j=1}^m b_{i,j}}{m}$ for all $i\in[n]$. It follows from the above inequalities that $\Pr\left[|\tilde{x}_i-x_i|\ge\eta x_i\right]\le\delta$ and $\Pr\left[|\tilde{q}_i-q_i|\ge\eta x_i\right]\le\delta$. For each $i\in[n]$, the runtime to compute the estimates $\tilde{x}_i,\tilde{q}_i$ is $O\left(\frac{\log(1/\delta)}{\eta^2p_{\min}^2}\cdot(t_{\D_{A}}+t_{\D_{\phi}})\right)$, because we need to generate $m$ samples from $\D_A$ and run OCRS $\D_{\phi}$ for each sample. The encoding lengths of $\tilde{x}_i,\tilde{q}_i$ are $O(\log(m))=O\left(\log\left(\frac{\log(1/\delta)}{\eta\cdot p_{\min}}\right)\right)$.
\end{proof}
Next, we discuss how to approximately solve the LPs from Section~\ref{sec:ocrs_lp} and Section~\ref{sec:secretary_to_ocrs} separately. We assume that the reader has already read those two sections.

\subsection{Approximately solving the LPs in Section~\ref{sec:ocrs_lp}}
For any $\alpha\in(0,1]$ and $\eps\in\left(0,\frac{1}{6}\right)$, given any matroid $\M\subseteq2^{[n]}$ and $\alpha$-uncontentious distribution $\D_A$ for $\M$, we show how to compute a $((1-6\eps)\alpha)$-balanced OCRS with preselected order for $(\M,\D_A)$ in $O\left(\poly\left(\frac{n}{\alpha\cdot\eps\cdot p_{\min}}\right)\cdot(t_{\D_{A}}+t_{\M})\right)$ time, where $p_{\min}:=\min_{i\in[n]}\Pr_{A\sim\D_A}[i\in A]$, and $t_{\D_A}$ and $t_{\M}$ are the time it takes to generate a sample of $\D_A$ and to check whether a set of elements is in $\M$ respectively. This is achieved by solving the approximate versions of the LPs in Section~\ref{sec:ocrs_lp} with estimated coefficients, which we now elaborate.

First, for each permutation $\pi\in\cS_n$, we apply Lemma~\ref{lem:estimating_lp_coefficients} to OCRS $\phi_{\pi}$ (Algorithm~\ref{alg:phi_pi}) by setting $\eta=\eps\alpha$ and $\delta=\frac{\eps}{n(n!+1)}$, which generates random estimates $\tilde{x}_i$ and $\tilde{q}_{i,\pi}$ for all $i\in[n]$, with encoding lengths $O\left(\log\left(\frac{n}{\alpha\cdot\eps\cdot p_{\min}}\right)\right)$, in time $O\left(\poly\left(\frac{n}{\alpha\cdot\eps\cdot p_{\min}}\right)\cdot(t_{\D_{A}}+t_{\M})\right)$ (because the runtime of OCRS $\phi_{\pi}$ is $O(n\cdot t_{\M})$), such that
\[
    \Pr\left[|\tilde{x}_i-x_i|\ge\eps\alpha x_i\right]\le\frac{\eps}{n(n!+1)} \textrm{ and } \Pr\left[|\tilde{q}_{i,\pi}-q_{i,\pi}|\ge\eps\alpha x_i\right]\le\frac{\eps}{n(n!+1)}.
\]
We let $E$ denote the event that $|\tilde{x}_i-x_i|\le\eps\alpha x_i$ and $|\tilde{q}_{i,\pi}-q_{i,\pi}|\le\eps\alpha x_i$ for all $i\in[n]$ and $\pi\in\cS_n$. By a union bound, we have that $\Pr[E]\ge1-\eps$. (We emphasize that we generate random estimates $\tilde{x}_i$ and $\tilde{q}_{i,\pi}$ for all $i\in[n]$ and $\pi\in\cS_n$ only for the analysis, and the algorithm will only need a polynomial number of them.)

Now we formulate an approximate version of (DP) in Eq.~\eqref{eq:greedy_crs_lp} using estimated coefficients:
\begin{align}\label{eq:approx_greedy_crs_dual}
    \min_{\gamma,\,\mu_i}&\,\gamma\nonumber\\
    \textrm{s.t. }& \sum_{i\in[n]} \tilde{q}_{i,\pi}\mu_i\le\gamma \qquad\,\,\,\,\,\,\forall\pi\in\cS_n\nonumber\\
    & \sum_{i\in[n]}\tilde{x}_i\mu_i=1\nonumber\\
    & \mu_i\ge 0 \qquad\qquad\qquad\,\,\,\forall i\in [n].
\end{align}
In the following, we establish an approximate version of Lemma~\ref{lem:greedy_crs_lp}.
\begin{lemma}\label{lem:approx_greedy_crs_dual}
For any $\alpha\in(0,1]$ and $\eps\in\left(0,\frac{1}{6}\right)$, if the prior distribution $\D_A$ is $\alpha$-uncontentious for matroid $\M$, then conditioned on event $E$, any vector $\mu\in\R^n$ that satisfies the last two constraints of the LP in Eq.~\eqref{eq:approx_greedy_crs_dual} must also satisfy that $\sum_{i\in[n]} \tilde{q}_{i,\pi_{\mu}}\mu_i\ge(1-2\eps)\alpha$, where permutation $\pi_{\mu}$ is defined in Definition~\ref{def:pi_w}.
\end{lemma}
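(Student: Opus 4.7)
The plan is to reduce to the exact-coefficient statement (Lemma~\ref{lem:greedy_crs_lp}) by rescaling $\mu$, and then carefully absorb the estimation errors into a factor of $(1-2\eps)$. Concretely, given any $\mu\in\R_{\ge0}^n$ satisfying $\sum_{i\in[n]}\tilde{x}_i\mu_i=1$, I would set $S:=\sum_{i\in[n]} x_i\mu_i$ and $\mu':=\mu/S$, so that $\mu'$ satisfies the last two constraints of the exact dual (DP) in Eq.~\eqref{eq:greedy_crs_lp}. Crucially, since $\mu'$ is a positive scalar multiple of $\mu$, the greedy orderings coincide, $\pi_{\mu'}=\pi_\mu$ (by Definition~\ref{def:pi_w}, the order depends only on the ranking of coordinates). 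Lemma~\ref{lem:greedy_crs_lp} then yields $\sum_{i\in[n]} q_{i,\pi_\mu}\mu'_i\ge\alpha$, which after multiplying by $S$ gives $\sum_{i\in[n]} q_{i,\pi_\mu}\mu_i\ge\alpha S$.

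The next step is to translate this bound into the estimated-coefficient world. Conditioned on event $E$, we have $\tilde{x}_i\in[(1-\eps\alpha)x_i,(1+\eps\alpha)x_i]$ and $\tilde{q}_{i,\pi_\mu}\ge q_{i,\pi_\mu}-\eps\alpha x_i$. The former, combined with $\sum\tilde{x}_i\mu_i=1$, sandwiches $S\in[1/(1+\eps\alpha),\,1/(1-\eps\alpha)]$. Therefore:
\begin{align*}
\sum_{i\in[n]} \tilde{q}_{i,\pi_\mu}\mu_i \;&\ge\; \sum_{i\in[n]} q_{i,\pi_\mu}\mu_i \;-\; \eps\alpha\sum_{i\in[n]} x_i\mu_i \\
\;&\ge\; \alpha S - \eps\alpha S \;=\; \alpha(1-\eps)S \;\ge\; \frac{\alpha(1-\eps)}{1+\eps\alpha}.
\end{align*}
Finally, I would use $1/(1+\eps\alpha)\ge 1-\eps\alpha$ together with $\alpha\le1$ to conclude
\[
\frac{\alpha(1-\eps)}{1+\eps\alpha}\;\ge\;\alpha(1-\eps)(1-\eps\alpha)\;\ge\;\alpha(1-\eps-\eps\alpha)\;\ge\;(1-2\eps)\alpha,
\]
which is the desired bound.

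I do not expect any real obstacle here; the proof is a two-sided perturbation argument tied to Lemma~\ref{lem:greedy_crs_lp}. The only subtle point worth emphasizing is that the rescaling $\mu\mapsto\mu/S$ preserves $\pi_\mu$, so the same permutation appears on both sides of the comparison and no additional errors are introduced from reordering. The mild condition $\eps<1/6$ is more than enough for all the inequalities above (only $\eps\alpha\le\eps\le 1$ is genuinely used in this step; stronger restrictions on $\eps$ presumably arise elsewhere when chaining this lemma with the ellipsoid method and union bounds).
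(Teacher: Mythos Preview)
Your proof is correct and follows essentially the same route as the paper: both apply event $E$ to replace $\tilde q_{i,\pi_\mu}$ by $q_{i,\pi_\mu}-\eps\alpha x_i$, invoke Lemma~\ref{lem:greedy_crs_lp}, and then absorb the remaining $\eps\alpha\sum_i x_i\mu_i$ term using $\sum_i \tilde{x}_i\mu_i=1$. Your explicit rescaling $\mu'=\mu/S$ (noting $\pi_{\mu'}=\pi_\mu$) is a clean way to make the invocation of Lemma~\ref{lem:greedy_crs_lp} precise, since $\mu$ itself need not satisfy $\sum_i x_i\mu_i=1$; the paper's chain instead cites Lemma~\ref{lem:greedy_crs_lp} directly and then bounds $\sum_i x_i\mu_i$ via $x_i\le \tilde{x}_i/(1-\eps\alpha)$, arriving at the same $(1-2\eps)\alpha$.
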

\begin{proof}
For any $\mu\in\R^n$ that satisfies the last two constraints of the LP in Eq.~\eqref{eq:approx_greedy_crs_dual}, we derive that
\begin{align*}
\sum_{i\in[n]}\tilde{q}_{i,\pi_{\mu}}\mu_i&\ge\sum_{i\in[n]}(q_{i,\pi_{\mu}}-\eps\alpha x_i)\mu_i &&\text{(By event $E$)}\\
&\ge\alpha-\sum_{i\in[n]}\eps\alpha x_i\mu_i &&\text{(By Lemma~\ref{lem:greedy_crs_lp})}\\
&\ge\alpha-\sum_{i\in[n]}\eps\alpha\cdot\frac{\tilde{x_i}\mu_i}{1-\eps\alpha} &&\text{(By event $E$)}\\
&=\frac{1-\eps\alpha-\eps}{1-\eps\alpha}\cdot\alpha &&\text{(By the second constraint in the LP)}\\
&\ge(1-\eps\alpha-\eps)\alpha\ge(1-2\eps)\alpha &&\text{(Since $\alpha\le1$ and $\eps\in\big(0,\frac{1}{6}\big)$)}.
\end{align*}
\end{proof}

Next, we use the ellipsoid method to reduce the number of constraints in the LP in Eq.~\eqref{eq:approx_greedy_crs_dual} such that its optimal value remains at least $(1-3\eps)\alpha$. We consider the following polytope $P_{\eps}$:
\[
    P_{\eps}:=\{\mu\in\R_{\ge0}^n\mid \sum_{i\in[n]}\tilde{x}_i\mu_i=1,\,\sum_{i\in[n]} \tilde{q}_{i,\pi}\mu_i\le (1-3\eps)\alpha\textrm{ for all }\pi\in\cS_n\}.
\]
Since we assumed that $\alpha\in(0,1]$ and $\eps\in\left(0,\frac{1}{6}\right)$, by Lemma~\ref{lem:approx_greedy_crs_dual}, conditioned on event $E$, any vector $\mu\in\R_{\ge0}^n$ such that $\sum_{i\in[n]}\tilde{x}_i\mu_i=1$ must satisfy that $\sum_{i\in[n]} \tilde{q}_{i,\pi_{\mu}}\mu_i\ge (1-2\eps)\alpha$. Therefore, conditioned on event $E$, polytope $P_{\eps}$ is empty, and we can construct an efficient separation oracle for $P_{\eps}$ as follows: Given any $\mu\in\R_{\ge0}^n$ such that $\sum_{i\in[n]}\tilde{x}_i\mu_i=1$, the oracle outputs the violated constraint $\sum_{i\in[n]} \tilde{q}_{i,\pi_{\mu}}\mu_i\le (1-3\eps)\alpha$. Using this separation oracle, we can apply the ellipsoid method to identify a subset of permutations $\cS_n'\subseteq\cS_n$ such that the following polytope $P_{\eps}'$ is empty:
\[
    P_{\eps}'=\{\mu\in\R_{\ge0}^n\mid \sum_{i\in[n]}\tilde{x}_i\mu_i=1,\,\sum_{i\in[n]} \tilde{q}_{i,\pi}\mu_i\le (1-3\eps)\alpha\textrm{ for all }\pi\in\cS_n'\}.
\]
In particular, when we apply the ellipsoid method, we can compute the estimates $\tilde{x}_i$ for all $i\in[n]$ initially and compute the estimates $\tilde{q}_{i,\pi}$ only if necessary. That is, for any $\pi\in\cS_n$, we compute estimates $\tilde{q}_{i,\pi}$ for all $i\in[n]$ only if the separation oracle needs to output the constraint $\sum_{i\in[n]} \tilde{q}_{i,\pi}\mu_i\le (1-3\eps)\alpha$. Since the total number of iterations of the ellipsoid method is polynomial in the number of variables and the maximum encoding length of the coefficients in the polytope constraints~\citep{grotschel2012geometric}, it takes $O\left(\poly\left(n\log\left(\frac{n}{\alpha\cdot\eps\cdot p_{\min}}\right)\right)\right)$ iterations to certify that polytope $P_{\eps}$ is empty. It follows that $|\cS_n'|=O\left(\poly\left(n\log\left(\frac{n}{\alpha\cdot\eps\cdot p_{\min}}\right)\right)\right)$. The overall runtime of the ellipsoid method, accounting for the time to compute the estimates, is $O\left(\poly\left(\frac{n}{\alpha\cdot\eps\cdot p_{\min}}\right)\cdot(t_{\D_{A}}+t_{\M})\right)$.

Finally, we consider the following LP (LP') and its dual (DP').
\begin{align}\label{eq:approx_greedy_crs_lp}
    \textrm{(LP')}\qquad\max_{\beta,\,\lambda_{\pi}}&\,\beta\nonumber\\
    \textrm{s.t. }& \sum_{\pi\in\cS_n'} \tilde{q}_{i,\pi}\lambda_{\pi}\ge\beta \tilde{x}_i \quad\,\,\,\,\forall i\in [n]\nonumber\\
    & \sum_{\pi\in\cS_n'}\lambda_{\pi}=1\nonumber\\
    & \lambda_{\pi}\ge 0 \qquad\qquad\qquad\,\,\forall\pi\in\cS_n'.\nonumber\\
    \textrm{(DP')}\qquad\min_{\gamma,\,\mu_i}&\,\gamma\nonumber\\
    \textrm{s.t. }& \sum_{i\in[n]} \tilde{q}_{i,\pi}\mu_i\le\gamma \qquad\,\,\,\,\,\,\forall\pi\in\cS_n'\nonumber\\
    & \sum_{i\in[n]}\tilde{x}_i\mu_i=1\nonumber\\
    & \mu_i\ge 0 \qquad\qquad\qquad\,\,\,\forall i\in [n].
\end{align}
Because polytope $P_{\eps}'$ is empty conditioned on event $E$, the optimal value of (DP') in Eq.~\eqref{eq:approx_greedy_crs_lp} is at least $(1-3\eps)\alpha$ conditioned on $E$. By LP duality, the optimal value of (LP') is also at least $(1-3\eps)\alpha$ conditioned on $E$. Note that (LP') has $O\left(\poly\left(n\log\left(\frac{n}{\alpha\cdot\eps\cdot p_{\min}}\right)\right)\right)$ variables and constraints, and hence, its optimal solution, which we denote by $(\beta^*,\lambda_{\pi}^*)$, can be computed in $O\left(\poly\left(n\log\left(\frac{n}{\alpha\cdot\eps\cdot p_{\min}}\right)\right)\right)$ time. By the last two constraints in (LP'), variables $\lambda_{\pi}^*$ for all $\pi\in\cS_n'$ together specify a distribution $\D_{\pi}^*$ over $\cS_n'$. We observe that $\phi_{\pi}$ with $\pi\sim\D_{\pi}^*$ is a randomized OCRS with preselected order for $(\M,\D_A)$. This OCRS is $((1-5\eps)\alpha)$-balanced conditioned on event $E$, because for all $i\in[n]$, we have that
\begin{align*}
\sum_{\pi\in\cS_n'}q_{i,\pi}\lambda_{\pi}^*&\ge\sum_{\pi\in\cS_n'} (\tilde{q}_{i,\pi}-\eps\alpha x_i)\lambda_{\pi}^* &&\text{(By event $E$)}\\
&=\sum_{\pi\in\cS_n'}\tilde{q}_{i,\pi}\lambda_{\pi}^*-\eps\alpha x_i &&\text{(By the second constraint in (LP'))}\\
&\ge\beta^*\tilde{x}_i-\eps\alpha x_i&&\text{(By the first constraint in (LP'))}\\
&\ge\beta^*(1-\eps\alpha)x_i-\eps\alpha x_i&&\text{(By event $E$)}\\
&\ge(1-3\eps)\cdot(1-\eps\alpha)\alpha x_i-\eps\alpha x_i&&\text{(Since $\beta^*\ge(1-3\eps)\alpha$ conditioned on $E$)}\\
&\ge(1-3\eps-\eps\alpha)\alpha x_i-\eps\alpha x_i\\
&\ge(1-5\eps)\alpha x_i.
\end{align*}
Taking into account the probability of event $E$, which is at least $1-\eps$, this OCRS is $((1-\eps)\cdot(1-5\eps)\alpha)$-balanced, and we note that $(1-\eps)\cdot(1-5\eps)\ge1-6\eps$. The total runtime of using the ellipsoid method to find $\cS_n'$ and solving (LP') in Eq.~\eqref{eq:approx_greedy_crs_lp} is $O\left(\poly\left(\frac{n}{\alpha\cdot\eps\cdot p_{\min}}\right)\cdot(t_{\D_{A}}+t_{\M})\right)$.

\subsection{Approximately solving the LPs in Section~\ref{sec:secretary_to_ocrs}}
For any $\alpha,c\in(0,1]$ and $\eps\in\left(0,\frac{1}{7}\right)$, given any $c$-competitive matroid secretary algorithm $\alg$ in any arrival model, any matroid $\M\subseteq2^{[n]}$ and $\alpha$-uncontentious distribution $\D_A$ for $\M$, we show how to compute a $((1-7\eps)c\alpha)$-balanced OCRS for $(\M,\D_A)$ in $O\left(\poly\left(\frac{n}{\alpha\cdot c\cdot\eps\cdot p_{\min}}\right)\cdot(t_{\alg}+t_{\D_{A}})\right)$ time, where $p_{\min}:=\min_{i\in[n]}\Pr_{A\sim\D_A}[i\in A]$, and $t_{\alg}$ is the worst-case runtime of algorithm $\alg$ on matroid secretary problem instances specified by matroid $\M$ and weight vectors $w\in W_{\eps}^n$, and $t_{\D_A}$ is the time it takes to generate a sample of $\D_A$. This is achieved by solving the approximate versions of the LPs in Section~\ref{sec:secretary_to_ocrs} with estimated coefficients, which we now elaborate.

First, for each vector $w\in W_{\eps}^n$, we apply Lemma~\ref{lem:estimating_lp_coefficients} to OCRS $\D_{\phi}^{(\alg,w)}$ by setting $\eta=\eps c\alpha$ and $\delta=\frac{\eps}{n(|W_{\eps}|^n+1)}$ (recall that $|W_{\eps}|=O\left(\frac{n}{\eps\cdot p_{\min}}\right)$), which generates random estimates $\tilde{x}_i$ and $\tilde{q}_{i,w}$ for all $i\in[n]$, with encoding lengths $O\left(\log\left(\frac{n}{\alpha\cdot c\cdot\eps\cdot p_{\min}}\right)\right)$, in time $O\left(\poly\left(\frac{n}{\alpha\cdot c\cdot\eps\cdot p_{\min}}\right)\cdot(t_{\alg}+t_{\D_{A}})\right)$, such that
\[
    \Pr\left[|\tilde{x}_i-x_i|\ge\eps c\alpha x_i\right]\le\frac{\eps}{n(|W_{\eps}|^n+1)} \textrm{ and } \Pr\left[|\tilde{q}_{i,w}-q_{i,w}|\ge\eps c \alpha x_i\right]\le\frac{\eps}{n(|W_{\eps}|^n+1)}.
\]
We let $E$ denote the event that $|\tilde{x}_i-x_i|\le\eps c\alpha x_i$ and $|\tilde{q}_{i,w}-q_{i,w}|\le\eps c\alpha x_i$ for all $i\in[n]$ and $w\in W_{\eps}^n$. By a union bound, we have that $\Pr[E]\ge1-\eps$. (We emphasize that we generate random estimates $\tilde{x}_i$ and $\tilde{q}_{i,w}$ for all $i\in[n]$ and $w\in W_{\eps}^n$ only for the analysis, and the algorithm will only need a polynomial number of them.)

Now we formulate an approximate version of (DP1) in Eq.~\eqref{eq:secretary_to_crs_lp_1} using estimated coefficients:
\begin{align}\label{eq:approx_secretary_to_crs_dual}
    \textrm{(DP1')}\qquad\min_{\gamma,\,\mu_i}&\,\gamma\nonumber\\
    \textrm{s.t. }& \sum_{i\in[n]} \tilde{q}_{i,w}\mu_i\le\gamma \qquad\,\,\,\,\,\,\forall w\in W_{\eps}^n\nonumber\\
    & \sum_{i\in[n]}\tilde{x}_i\mu_i=1\nonumber\\
    & \mu_i\ge 0 \qquad\qquad\qquad\,\,\,\,\forall i\in [n].
\end{align}
In the following, we establish an approximate version of Lemma~\ref{lem:secretary_to_crs_dual}.
\begin{lemma}\label{lem:approx_secretary_to_crs_dual}
For any $\alpha,c\in(0,1]$ and $\eps\in\left(0,\frac{1}{7}\right)$, if the matroid secretary algorithm $\alg$ is $c$-competitive, and the prior distribution $\D_A$ is $\alpha$-uncontentious for matroid $\M$, then conditioned on event $E$, any vector $\mu\in\R^n$ that satisfies the last two constraints in (DP1') must also satisfy that $\sum_{i\in[n]} \tilde{q}_{i,\mu'}\mu_i\ge(1-3\eps)c\alpha$, where $\mu'\in W_{\eps}^n$ is defined in Eq.~\eqref{eq:mu'}.
\end{lemma}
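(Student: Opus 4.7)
The plan is to repeat the three-step reduction used in the proof of Lemma~\ref{lem:secretary_to_crs_dual}, while carefully propagating two sources of error: the multiplicative slack between $(\tilde{x}_i,\tilde{q}_{i,w})$ and $(x_i,q_{i,w})$ granted by event $E$, and the discretization error incurred when rounding $\mu$ to $\mu'\in W_\eps^n$ via Eq.~\eqref{eq:mu'}.

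Fix any $\mu\in\R^n$ satisfying the last two constraints of (DP1'). Under event $E$ we have $\tilde{q}_{i,\mu'}\ge q_{i,\mu'}-\eps c\alpha\,x_i$, so
\[
\sum_{i\in[n]}\tilde{q}_{i,\mu'}\mu_i \;\ge\; \sum_{i\in[n]} q_{i,\mu'}\mu_i \;-\; \eps c\alpha\sum_{i\in[n]} x_i\mu_i.
\]
I would then mimic the exact proof on the first sum: use $\mu_i'\le\mu_i$ together with $q_{i,\mu'}\ge0$ to lower bound $\sum_i q_{i,\mu'}\mu_i$ by $\sum_i q_{i,\mu'}\mu_i'$; invoke Lemma~\ref{lem:D_phi_A_w} with weight vector $\mu'$ to conclude $\sum_i q_{i,\mu'}\mu_i'\ge c\alpha\sum_i x_i\mu_i'$; and exploit $\mu_i'\ge\mu_i-\eps/n$ together with $x_i\le 1$ to obtain $\sum_i x_i\mu_i'\ge\sum_i x_i\mu_i-\eps$.

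Combining these three inequalities, and translating the noisy normalization $\sum_i\tilde{x}_i\mu_i=1$ into $\sum_i x_i\mu_i\ge 1/(1+\eps c\alpha)\ge 1-\eps c\alpha$ via event $E$, gives
\[
\sum_{i\in[n]}\tilde{q}_{i,\mu'}\mu_i \;\ge\; c\alpha\Bigl(\sum_{i\in[n]}x_i\mu_i-\eps\Bigr)-\eps c\alpha\sum_{i\in[n]}x_i\mu_i \;=\; (1-\eps)c\alpha\sum_{i\in[n]}x_i\mu_i-\eps c\alpha,
\]
which is at least $(1-\eps)(1-\eps c\alpha)c\alpha-\eps c\alpha$. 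A short calculation using $c\alpha\le1$ (so that $\eps(c\alpha)^2\le\eps c\alpha$) and $\eps<1/7$ simplifies this to at least $(1-3\eps)c\alpha$, as required.

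The step I expect to demand the most care is verifying that the rounding bound $\mu_i'\ge\mu_i-\eps/n$ still holds in the noisy normalization. In Lemma~\ref{lem:secretary_to_crs_dual} this relied on $\mu_i\le1/x_i\le1/p_{\min}$ together with the fact that the largest element of $W_\eps$ is at least $1/p_{\min}$. Here (DP1') only yields $\mu_i\le1/\tilde{x}_i\le1/((1-\eps c\alpha)p_{\min})$, which can slightly exceed $1/p_{\min}$; the fix is to note that, since $\eps c\alpha<1/7$, this excess is absorbed by a negligible enlargement of the coarsest mesh point in $W_\eps$ (equivalently, by defining $\mu_i'$ as the largest element of $W_\eps$ whenever $\mu_i$ overshoots the grid), after which the rounding step of Lemma~\ref{lem:secretary_to_crs_dual} carries over unchanged.
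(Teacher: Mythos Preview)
Your approach is correct and mirrors the paper's: pass from $\tilde q$ to $q$ via event $E$, rerun the three inequalities underlying Lemma~\ref{lem:secretary_to_crs_dual} (the paper simply cites that lemma at this point), and then use event $E$ again on the normalization; your arithmetic lands at the same $(1-3\eps)c\alpha$.

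On the overshoot issue you flag at the end: your assertion that the rounding step ``carries over unchanged'' is not accurate, since when $\mu_j$ exceeds the grid maximum the gap $\mu_j-\mu_j'$ can genuinely exceed $\eps/n$. The clean fix is different: if any $\mu_j$ overshoots, then $\mu_j'$ equals the grid maximum $\ge 1/p_{\min}\ge 1/x_j$, so $x_j\mu_j'\ge 1$ and hence $\sum_i x_i\mu_i'\ge 1$ outright; this yields $\sum_i q_{i,\mu'}\mu_i\ge c\alpha$ directly, and combining with $\sum_i x_i\mu_i\le 1/(1-\eps c\alpha)$ from event $E$ gives the final bound even more easily in that case. (The paper's proof, which invokes Lemma~\ref{lem:secretary_to_crs_dual} as a black box even though the normalization hypothesis $\sum_i x_i\mu_i=1$ is not literally satisfied, glosses over this same point.)
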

\begin{proof}
For any $\mu\in\R^n$ that satisfies the last two constraints of (DP1') in Eq.~\eqref{eq:approx_secretary_to_crs_dual}, we let $\mu'\in W_{\eps}^n$ be the corresponding vector defined in Eq.~\eqref{eq:mu'}, and we derive that
\begin{align*}
\sum_{i\in[n]}\tilde{q}_{i,\mu'}\mu_i&\ge\sum_{i\in[n]}(q_{i,\mu'}-\eps c\alpha x_i)\mu_i &&\text{(By event $E$)}\\
&\ge(1-\eps)c\alpha-\sum_{i\in[n]}\eps c\alpha x_i\mu_i &&\text{(By Lemma~\ref{lem:secretary_to_crs_dual})}\\
&\ge(1-\eps)c\alpha-\sum_{i\in[n]}\eps c\alpha\cdot\frac{\tilde{x_i}\mu_i}{1-\eps c\alpha} &&\text{(By event $E$)}\\
&=\frac{(1-\eps)(1-\eps c\alpha)-\eps}{1-\eps c\alpha}\cdot c\alpha &&\text{(By the second constraint in the (DP1'))}\\
&\ge((1-\eps)(1-\eps c\alpha)-\eps)c\alpha &&\text{(Since $\alpha,c\le1$ and $\eps\in\big(0,\frac{1}{7}\big)$)}\\
&\ge((1-\eps-\eps c\alpha)-\eps)c\alpha\ge(1-3\eps)c\alpha.
\end{align*}
\end{proof}

Next, we use the ellipsoid method to reduce the number of constraints in (DP1') such that its optimal value remains at least $(1-4\eps)c\alpha$. We consider the following polytope $\tilde{Q}_{\eps}$:
\[
    \tilde{Q}_{\eps}:=\{\mu\in\R_{\ge0}^n\mid \sum_{i\in[n]}\tilde{x}_i\mu_i=1,\,\sum_{i\in[n]} \tilde{q}_{i,w}\mu_i\le (1-4\eps)c\alpha\textrm{ for all }w\in W_{\eps}^n\}.
\]
Since we assumed that $\alpha,c\in(0,1]$ and $\eps\in\left(0,\frac{1}{7}\right)$, by Lemma~\ref{lem:approx_secretary_to_crs_dual}, conditioned on event $E$, any vector $\mu\in\R_{\ge0}^n$ such that $\sum_{i\in[n]}\tilde{x}_i\mu_i=1$ must satisfy that $\sum_{i\in[n]} \tilde{q}_{i,\mu'}\mu_i\ge(1-3\eps)c\alpha$, where $\mu'$ is defined in Eq.~\eqref{eq:mu'}. Therefore, conditioned on event $E$, polytope $\tilde{Q}_{\eps}$ is empty, and we can construct an efficient separation oracle for $\tilde{Q}_{\eps}$ as follows: Given any $\mu\in\R_{\ge0}^n$ such that $\sum_{i\in[n]}\tilde{x}_i\mu_i=1$, the oracle outputs the violated constraint $\sum_{i\in[n]} \tilde{q}_{i,\mu'}\mu_i\le (1-4\eps)c\alpha$ for $\mu'$ given by Eq.~\eqref{eq:mu'}. Using this separation oracle, we can apply the ellipsoid method to identify a subset of vectors $\tilde{W}\subseteq W_{\eps}^n$ such that the following polytope $\tilde{Q}_{\eps}'$ is empty:
\[
    \tilde{Q}_{\eps}'=\{\mu\in\R_{\ge0}^n\mid \sum_{i\in[n]}\tilde{x}_i\mu_i=1,\,\sum_{i\in[n]} \tilde{q}_{i,w}\mu_i\le (1-4\eps)c\alpha\textrm{ for all }w\in \tilde{W}\}.
\]
In particular, when we apply the ellipsoid method, we can compute the estimates $\tilde{x}_i$ for all $i\in[n]$ initially and compute the estimates $\tilde{q}_{i,w}$ only if necessary. That is, for any $w\in W_{\eps}^n$, we compute estimates $\tilde{q}_{i,w}$ for all $i\in[n]$ only if the separation oracle needs to output the constraint $\sum_{i\in[n]} \tilde{q}_{i,w}\mu_i\le (1-4\eps)c\alpha$. Since the total number of iterations of the ellipsoid method is polynomial in the number of variables and the maximum encoding length of the coefficients in the polytope constraints~\citep{grotschel2012geometric}, it takes $O\left(\poly\left(n\log\left(\frac{1}{\alpha\cdot c\cdot\eps\cdot p_{\min}}\right)\right)\right)$ iterations to certify that polytope $\tilde{Q}_{\eps}$ is empty. It follows that $|\tilde{W}|=O\left(\poly\left(n\log\left(\frac{1}{\alpha\cdot c\cdot\eps\cdot p_{\min}}\right)\right)\right)$. The overall runtime of the ellipsoid method, accounting for the time to compute the estimates, is $O\left(\poly\left(\frac{n}{\alpha\cdot c\cdot\eps\cdot p_{\min}}\right)\cdot(t_{\alg}+t_{\D_{A}})\right)$.

Finally, we consider the following LP (LP2') and its dual (DP2').
\begin{align*}
    \textrm{(LP2')}\qquad\max_{\beta,\,\lambda_w}&\,\beta\nonumber\\
    \textrm{s.t. }& \sum_{w\in \tilde{W}} \tilde{q}_{i,w}\lambda_w\ge\beta \tilde{x}_i \quad\,\,\,\forall i\in [n]\nonumber\\
    & \sum_{w\in \tilde{W}}\lambda_w=1\nonumber\\
    & \lambda_w\ge 0 \qquad\qquad\qquad\,\,\forall w\in \tilde{W}.\nonumber\\
    \textrm{(DP2')}\qquad\min_{\gamma,\,\mu_i}&\,\gamma\nonumber\\
    \textrm{s.t. }& \sum_{i\in[n]} \tilde{q}_{i,w}\mu_i\le\gamma \qquad\,\,\,\,\,\,\forall w\in \tilde{W}\nonumber\\
    & \sum_{i\in[n]}\tilde{x}_i\mu_i=1\nonumber\\
    & \mu_i\ge 0 \qquad\qquad\qquad\,\,\,\,\forall i\in [n].
\end{align*}
Because polytope $\tilde{Q}_{\eps}'$ is empty conditioned on event $E$, the optimal value of (DP2') is at least $(1-4\eps)c\alpha$ conditioned on $E$. By LP duality, the optimal value of (LP2') is also at least $(1-4\eps)c\alpha$ conditioned on $E$. Since (LP2') has $O\left(\poly\left(n\log\left(\frac{1}{\alpha\cdot c\cdot\eps\cdot p_{\min}}\right)\right)\right)$ variables and constraints, its optimal solution, which we denote by $(\beta^*,\lambda_{w}^*)$, can be computed in $O\left(\poly\left(n\log\left(\frac{1}{\alpha\cdot c\cdot\eps\cdot p_{\min}}\right)\right)\right)$ time. By the last two constraints in (LP2'), variables $\lambda^*_{w}$ for all $w\in\tilde{W}$ together specify a distribution $\D_{w}^*$ over $\tilde{W}$. We observe that $\D_{\phi}^{(\alg,w)}$ with $w\sim\D_{w}^*$ is an OCRS for $(\M,\D_A)$ in the same arrival model as algorithm $\alg$. This OCRS is $((1-6\eps)c\alpha)$-balanced conditioned on event $E$, because for all $i\in[n]$, we have that
\begin{align*}
\sum_{w\in\tilde{W}}q_{i,w}\lambda_w^*&\ge\sum_{w\in\tilde{W}} (\tilde{q}_{i,w}-\eps c\alpha x_i)\lambda_w^* &&\text{(By event $E$)}\\
&=\sum_{w\in\tilde{W}}\tilde{q}_{i,w}\lambda_w^*-\eps c\alpha x_i &&\text{(By the second constraint in (LP2'))}\\
&\ge\beta^*\tilde{x}_i-\eps c\alpha x_i&&\text{(By the first constraint in (LP2'))}\\
&\ge\beta^*(1-\eps c\alpha)x_i-\eps c\alpha x_i&&\text{(By event $E$)}\\
&\ge(1-4\eps)\cdot(1-\eps c\alpha)c\alpha x_i-\eps c\alpha x_i&&\text{(Since $\beta^*\ge(1-4\eps)c\alpha$ conditioned on $E$)}\\
&\ge(1-4\eps-\eps c\alpha)c\alpha x_i-\eps c\alpha x_i\\
&\ge(1-5\eps)c\alpha x_i-\eps c\alpha x_i=(1-6\eps)c\alpha x_i.
\end{align*}
Taking into account the probability of event $E$, which is at least $1-\eps$, this OCRS is $((1-\eps)\cdot(1-6\eps)c\alpha)$-balanced, and we note that $(1-\eps)\cdot(1-6\eps)\ge1-7\eps$. The total runtime of using the ellipsoid method to find $\tilde{W}$ and solving (LP2') is $O\left(\poly\left(\frac{n}{\alpha\cdot c\cdot\eps\cdot p_{\min}}\right)\cdot(t_{\alg}+t_{\D_{A}})\right)$.

\begin{remark} 
In fact, what we (and~\citet{dughmi2020outer}) have established above is a reduction from universal online contention resolution to a simpler type of matroid secretary problem. Specifically, note that (implicitly) in the definition of $\D_{\phi}^{(\alg,w)}$ in Section~\ref{sec:secretary_to_ocrs}, the matroid secretary algorithm $\alg$ is aware of the weight vector $w$ and the prior distribution $\D_A$ from the outset, and it can use these additional information to make decisions. What we have proved above is that $\D_{\phi}^{(\alg,w)}$ with $w\sim\D_w^*$ is $((1-7\eps)c\alpha)$-balanced for $(\M,\D_A)$, as long as $\alg$ satisfies the inequality stated in Lemma~\ref{lem:D_phi_A_w}, i.e., for all $w\in\R_{\ge0}^n$,
\[
\E_{A\sim\D_A,\phi\sim\D_{\phi}^{(\alg,w)}}\left[\sum\nolimits_{i\in\phi(A)} w_i\right]\ge c\cdot\alpha\cdot\E_{A\sim\D_A}\left[\sum\nolimits_{i\in A} w_i\right].
\]
From this perspective, Theorem~\ref{thm:universal_ocrs_linear_programming} can be viewed as a corollary of the above reduction, because Lemma~\ref{lem:phi_pi} essentially shows that the above inequality holds with $c=1$, if $\alg$ greedily selects active elements in the decreasing order of their weights.
\end{remark}

\end{document}